\documentclass[a4paper,reqno]{amsart}

% Packages
\usepackage[T1]{fontenc}

\usepackage[margin=2cm]{geometry}
% Alter the position of affiliations in amsart
\usepackage[foot]{amsaddr}
% Hyperref
\usepackage[colorlinks, linkcolor = blue, citecolor = blue, urlcolor = blue]{hyperref}

% AMS
\usepackage{amsmath,amsthm,amssymb,amsfonts}
\usepackage{ytableau}

% Paired brackets
\usepackage{mathtools}

% Restating theorems
\usepackage{thmtools}

% Script font
\usepackage{mathrsfs}

% Clever references
\usepackage{cleveref}%[nameinlink]

\usepackage[shortlabels]{enumitem}

% Packages for tables
\usepackage{makecell}
\usepackage{multirow}
\usepackage{xcolor}
\usepackage{hhline}
\usepackage{diagbox}

% More spacing for table captions
\usepackage{caption}

% Formatting of numbers in the tables
\usepackage[group-separator={,},group-minimum-digits=4]{siunitx}

% Quotes
\usepackage{epigraph} 

% TikZ
\usepackage{graphicx}
\usepackage{tikz}
\usetikzlibrary{quantikz2}
\usetikzlibrary{backgrounds}
\usetikzlibrary{external}
\usetikzlibrary{calc}
\usetikzlibrary{shapes}
% \tikzexternalize

%Tables
\usepackage{booktabs}

\usepackage{ytableau,varwidth}

% % Get extra vertical spacing for cells in a table
% \usepackage{cellspace}
% \setlength{\cellspacetoplimit}{5pt}
% \setlength{\cellspacebottomlimit}{5pt}

% References
\usepackage[backend=biber,style=alphabetic,maxnames=9,maxalphanames=5,isbn=false,backref]{biblatex}

\addbibresource{references.bib}

% Our own macros
% Theorem styles
% \theoremstyle{mystyle}
\newtheorem*{theorem*}{Theorem}
\newtheorem{theorem}{Theorem}[section] % number within sections
\newtheorem{proposition}[theorem]{Proposition} % use the same numbering for all environments
\newtheorem{lemma}[theorem]{Lemma}

\newtheorem{corollary}[theorem]{Corollary}
\newtheorem{remark}[theorem]{Remark}

\crefname{lemma}{Lemma}{Lemmas}
\crefname{definition}{Definition}{Definitions}
\crefname{theorem}{Theorem}{Theorems}
\crefname{conjecture}{Conjecture}{Conjectures}
\crefname{section}{Section}{Sections}
\crefname{claim}{Claim}{Claims}
\crefname{appendix}{Appendix}{Appendices}
\crefname{figure}{Fig.}{Figs.}
\crefname{table}{Table}{Tables}
\crefname{proposition}{Proposition}{Propositions}
\crefname{corollary}{Corollary}{Corollaries}
\crefname{example}{Example}{Examples}
\crefname{remark}{Remark}{Remarks}

% From mathtools documentation:
% just to make sure it exists
\providecommand\given{}
% can be useful to refer to this outside \Set
\newcommand\SetSymbol[1][]{%
    \nonscript\:#1\vert
    \allowbreak
    \nonscript\:
    \mathopen{}}
\DeclarePairedDelimiterX\Set[1]\{\}{%
    \renewcommand\given{\SetSymbol[\delimsize]}
    #1
}

% Brackets
\DeclarePairedDelimiter{\set}{\lbrace}{\rbrace}
\DeclarePairedDelimiter{\abs}{\lvert}{\rvert}
\DeclarePairedDelimiter{\norm}{\lVert}{\rVert}

\DeclarePairedDelimiter{\of}{\lparen}{\rparen}
\DeclarePairedDelimiter{\sof}{\lbrack}{\rbrack}

\newcommand{\defeq}{\vcentcolon=}
\newcommand{\eqdef}{=\vcentcolon}
\renewcommand{\leq}{\leqslant}
\renewcommand{\geq}{\geqslant}

\renewcommand{\bra}[1]{\langle{#1}\rvert}
\renewcommand{\ket}[1]{\lvert{#1}\rangle}
\renewcommand{\braket}[2]{\langle{#1}|{#2}\rangle}
\newcommand{\ketbra}[2]{\ket{#1}\bra{#2}}
\renewcommand{\proj}[1]{\ketbra{#1}{#1}}

\newcommand{\mx}[1]{\begin{pmatrix}#1\end{pmatrix}}
\newcommand{\bmx}[1]{\begin{bmatrix}#1\end{bmatrix}}
\newcommand{\smx}[1]{\begin{psmallmatrix}#1\end{psmallmatrix}} % small matrix

\newcommand{\ct}{^{\dagger}}
\newcommand{\tp}{^{\mathsf{T}}}
\newcommand{\ptp}{^{\mathsf{\Gamma}}}

\newcommand{\x}{\otimes}
\newcommand{\xp}[1]{^{\otimes #1}}
\newcommand{\ctxp}[1]{^{\dagger\otimes #1}}

\newcommand{\E}{\mathcal{E}} %matrix unit
\newcommand{\Epq}{\mathcal{E}} %set of all matrix units

\newcommand{\C}{\mathbb{C}} % complex numbers
\newcommand{\R}{\mathbb{R}} % real numbers
 % size of the diagram
\newcommand{\cont}{\mathrm{cont}} % content of a diagram
\newcommand{\wcont}{\mathrm{wcont}} % walled content of a mixed diagram
 % length of diagram
 % Bratteli diagram
\newcommand{\B}{\mathcal{B}} % Brauer algebra
\newcommand{\A}{\mathcal{A}} % algebra of partially transposed permutation matrices
 % centralizer of A
 % center
\newcommand{\X}{\mathcal{X}} % Gelfand-Tsetlin subalgebra
 % path algebra
\newcommand{\poly}{\mathrm{poly}}

\newcommand{\UschinvPQ}{U^\dagger_{\mathrm{Sch}}(p,q)} % inverse of mixed Schur transform
\newcommand{\UschPQ}{U_{\mathrm{Sch}}(p,q)} % mixed Schur transform
 % inverse of Schur transform
\newcommand{\Usch}{U_{\mathrm{Sch}}} % Schur transform

\newcommand{\CGqcTilde}[2]{\widetilde{\mathrm{CG}}^{#1}_{#2}} % Modified  (tilde) dual CG
\newcommand{\CGqc}[2]{\mathrm{CG}^{#1}_{#2}} % original and dual CG transform as quantum circuit operator

 % reduced Wigner coefficients
\newcommand{\rwpm}[4]{\of*{z^\pm}^{#1,#2}_{#3,#4}}
\newcommand{\rwsgn}[5]{\of*{z^{#1}}^{#2,#3}_{#4,#5}}

\newcommand{\0}{\varnothing}
 % Identity matrix

\let\S\relax
\DeclareMathOperator{\S}{S} % symmetric group
\DeclareMathOperator{\CS}{\mathbb{C}S} % symmetric group algebra
 % density matrices
 % general linear group
\DeclareMathOperator{\End}{End} % endomorphisms
 % ring of matrices
\DeclareMathOperator{\Tr}{Tr} % trace
\DeclareMathOperator{\spn}{span} % span
 % diagonal matrix
\DeclareMathOperator{\CG}{CG} % Clebsch--Gordan

\newcommand{\M}[1]{\mathcal{M}(#1)} % Set of Paths which differ in the middle and have a mobile element
\newcommand{\GT}{\mathrm{GT}} % Gelfand-Tsetlin patterns
\newcommand{\SYT}{\mathrm{SYT}} % standard Young tableaux
\newcommand{\SSYT}{\mathrm{SSYT}} % semi-standard Young tableaux
\newcommand{\AC}{\mathrm{AC}} % addable cells
\newcommand{\RC}{\mathrm{RC}} % removeable cells
\newcommand{\Irr}[1]{\widehat#1} % set of irreducible modules
\newcommand{\Paths}{\mathrm{Paths}} % set of paths
\newcommand{\Res}{\mathrm{Res}} % Restriction functor
\newcommand{\loops}{\mathrm{loops}} % number of loops
\newcommand{\SWAP}{\mathrm{SWAP}} % SWAP gate

\newcommand{\U}[1]{\mathrm{U}_{#1}} % unitary group
\newcommand{\IrrU}[1]{\widehat{\mathrm{U}}_{#1}} % unitary group irreps

\newcommand{\m}{\mathbf{m}} % row of a Gelfand-Tsetlin pattern
\newcommand{\n}{\mathbf{n}} % row of a Gelfand-Tsetlin pattern
\newcommand{\pt}{\mathbin{\vdash}} % partitions
\newcommand{\e}{\varepsilon}

% Inbetweenness condition
\newcommand{\squb}{\sqsubseteq}

%%%%%%%%%%%%%%%%%%%%%%%%%%%%%%%%%%%%%%%
%  Macros for drawing Young diagrams  %
%%%%%%%%%%%%%%%%%%%%%%%%%%%%%%%%%%%%%%%

% Width and height of each box
\newcommand{\w}{0.5cm}

% Draw a single box
\newcommand{\bx}[3]{
  \draw[fill = white] #3 (#1*\w-\w/2,-#2*\w-\w/2) rectangle (#1*\w+\w/2,-#2*\w+\w/2);
}
% Pair of lambdas
\newcommand{\br}[1]{\of*{#1}}

% Draw the Young diagram for a given lambda
% \newcommand{\yd}[1]{\ytableausetup{centertableaux, boxsize = 4pt} \ydiagram{#1}}
\newcommand{\yd}[2][0.4]{%
  \begin{tikzpicture}[scale = #1, baseline={([yshift=-0.6ex]current bounding box.center)}]
    \foreach \li [count = \y] in {#2} {
      \foreach \x in {1,...,\li} {
        \bx{\x}{\y}{}
      }
    }
  \end{tikzpicture}
}

\newcommand\restr[2]{{% we make the whole thing an ordinary symbol
  \left.\kern-\nulldelimiterspace % automatically resize the bar with \right
  #1 % the function
  \vphantom{\big|} % pretend it's a little taller at normal size
  \right|_{#2} % this is the delimiter
  }}

%Commands for quantum circuits

\newcommand{\HDots}{\ \ldots\ } % horizontal dots
\newcommand{\qb}{\setwiretype{b}} % turn the wire into a bundle
\newcommand{\qq}{\setwiretype{q}} % turn the wire into a bundle
\newcommand{\gateCyc}{\gate[wires=6,nwires=3]{\pi^\dagger}}
\newcommand{\gateSchurPone}{\gate[wires=8,nwires=4,label style={yshift=0.3cm}]{U_{\mathrm{Sch(p,1)}}}}
\newcommand{\gateSchurInvPone}{\gate[wires=8,nwires=4]{U_{\mathrm{Sch(p,1)}}^\dagger}}
\newcommand{\gateW}{\gate{W_{\lambda}}}
\newcommand{\gateWinv}{\gate{W_{\lambda}^\dagger}}
\newcommand{\gateR}[1]{\gate{\omega_{p+1}^{#1}}}
\newcommand{\gateQFT}{\gate{\mathrm{QFT}_{p+1}}}
\newcommand{\gateQFTinv}{\gate{\mathrm{QFT}_{p+1}^\dagger}}

\newcommand{\gateCGpm}[1]{\gate[wires=3]{\CGqc{\pm}{#1}}}
\newcommand{\gateCGtilde}[2]{\gate[wires=3]{\CGqcTilde{#1}{{#2}}}}
\newcommand{\gateCpm}[1]{\gate[wires=3]{C^{\pm}_{#1}}}

% Style settings for waaled Brauer diagrams
\newcommand{\BrauerTikZStyle}{\tikzset{
  every path/.style = {semithick}, % edge thickness
  looseness = 0.7, % bendy wires
  dot/.style = {shift only, radius = 1.3pt, fill = black}, % dot size
  l/.style = {out = -90, in =  90}, % vertical lines
  u/.style = {out = -90, in = -90}, % horizontal u-shaped lines
  n/.style = {out =  90, in =  90}  % horizontal n-shaped lines
}}

% Macros for comments
%\newcommand{\eat}[1]{#1}
%\newcommand{\old}[1]{\eat{\textcolor{gray}{#1}}}
%\newcommand{\dg}[1]{\eat{\textcolor{blue}{[\textbf{Dima:} #1]}}}
%\newcommand{\mo}[1]{\eat{\textcolor{red}{[\textbf{Maris:} #1]}}}
%\newcommand{\ab}[1]{\eat{\textcolor{orange}{[\textbf{Adam:} #1]}}}

\title[Gelfand--Tsetlin basis for partially transposed permutations]{Gelfand--Tsetlin basis for partially transposed permutations,\\with applications to quantum information}

\author{Dmitry Grinko\textsuperscript{1}\hspace{-.25em}}
\email{dmitry.grinko@cwi.nl}

\author{Adam Burchardt\textsuperscript{1}\hspace{-.25em}}
\address{\textsuperscript{1}Institute for Logic, Language, and Computation, University of Amsterdam and QuSoft, Amsterdam, The Netherlands}
\email{adam.burchardt@cwi.nl}

\author{Maris Ozols\textsuperscript{1,2}\hspace{-.25em}}
\address{\textsuperscript{2}Korteweg-de Vries Institute for Mathematics and Institute for Theoretical Physics, University of Amsterdam, The Netherlands}
\email{marozols@gmail.com}

\begin{document}

\begin{abstract}
We study representation theory of the partially transposed permutation matrix algebra, a matrix representation of the diagrammatic walled Brauer algebra.
This algebra plays a prominent role in mixed Schur--Weyl duality that appears in various contexts in quantum information.
Our main technical result is an explicit formula for the action of the walled Brauer algebra generators in the Gelfand--Tsetlin basis.
It generalizes the well-known Gelfand--Tsetlin basis for the symmetric group (also known as Young's orthogonal form or Young--Yamanouchi basis).

We provide two applications of our result to quantum information.
First, we show how to simplify semidefinite optimization problems over unitary-equivariant quantum channels by performing a symmetry reduction.
Second, we derive an efficient quantum circuit for implementing the optimal port-based quantum teleportation protocol, exponentially improving the known trivial construction.
As a consequence, this also exponentially improves the known lower bound for the amount of entanglement needed to implement unitaries non-locally.

Both applications require a generalization of quantum Schur transform to tensors of mixed unitary symmetry.
We develop an efficient quantum circuit for this mixed quantum Schur transform and provide a matrix product state representation of its basis vectors.
For constant local dimension, this yields an efficient classical algorithm for computing any entry of the mixed quantum Schur transform unitary.
\end{abstract}

\maketitle

\tableofcontents

%%%%%%%%%%%%%%%%%%%%%%%%%%%%%%%%%%%%%%%%%%%%%%%%%%%%%%%%%%%%%%%%%%%%%%%%%%%%%%%%%%%%%%%%%%%%%%%%%%%%%%%%%%%%%%%%%%%%%%%%%%%%%%%%%%%%%%%%%%%%%%%%%%%%%%%%%%%%%%%
%%%%%%%%%%%%%%%%%%%%%%%%%%%%%%%%%%%%%%%%%%%%%%%%%%%%%%%%%%%%%%%%%%%%%%%%%%%%%%%%%%%%%%%%%%%%%%%%%%%%%%%%%%%%%%%%%%%%%%%%%%%%%%%%%%%%%%%%%%%%%%%%%%%%%%%%%%%%%%%

\section{Introduction}\label{sec:Introduction}

\subsection{Background}

Symmetry plays a fundamental role in physics and mathematics, and is a common problem-solving technique in both fields.
The unitary group of symmetries is particularly important in quantum mechanics and particle physics, since it captures the properties of the basic building blocks of our universe.
Unitary symmetry is closely intertwined with permutational symmetry.
Indeed, permuting identical particles or simultaneously applying the same unitary rotation on each of them are two commuting actions.
This observaiont is captured by \emph{Schur--Weyl duality}, which has become an important tool in quantum information, quantum algorithms, and quantum many-body physics.

The simplest instance of Schur--Weyl duality is for two qubits.
Let $\ket{\psi^-} \defeq (\ket{01} - \ket{10}) / \sqrt{2}$ denote the maximally entangled \emph{singlet state}, which is the unique anti-symmetric state on two qubits.
It has the property that
$(U \x U) \ket{\psi^-} = \det(U) \ket{\psi^-}$ for any $U \in \U{2}$.
In addition,
$\SWAP \ket{\psi^-} = - \ket{\psi^-}$ where $\SWAP \in \U{4}$ exchanges the two qubits.
The three-dimensional symmetric subspace orthogonal to $\ket{\psi^-}$ is similarly invariant under the actions of both $U \x U$ and $\SWAP$.
Hence, we can decompose $(\C^2)\xp{2}$ into mutually orthogonal subspaces invariant under the commuting unitary and permutation actions.
Schur--Weyl duality generalizes this observation to $(\C^d)\xp{p}$ for any local dimension $d$ and number of systems $p$.

Schur--Weyl duality is particularly useful in quantum information where one often needs to deal with many identical copies of a quantum state or to apply the same unitary to many systems in parallel.
Its algorithmic manifestation, quantum Schur transform, can be efficiently implemented
\cite{HarrowThesis,bch2006quantumschur,kirby2018practical,krovi2019quantumschur}
and has many applications \cite{WrightThesis,HarrowThesis},
such as quantum spectrum \cite{SpectrumEstimation} and entropy estimation \cite{EntropyEstimation}, quantum state tomography \cite{keyl2006quantumstateestimation,
haah2017optimaltomography,
o2016efficient,o2016efficient2}, and quantum majority vote \cite{QuantumMajority}.

The main focus of our paper is a variant of Schur--Weyl duality, known as \emph{mixed Schur--Weyl duality}, which is equally important in quantum information but has received less attention due to its more complicated nature \cite{grinko2022linear}.
The simplest instance of mixed Schur--Weyl duality is also for two qubits.
Here, instead of the singlet state $\ket{\psi^-}$, we single out the canonical maximally entangled state
$\ket{\phi^+} \defeq (\ket{00} + \ket{11}) / \sqrt{2}$.
This state is invariant under a slightly different unitary action, namely
$(U \x \bar{U}) \ket{\phi^+} = \ket{\phi^+}$
for any $U \in \U{2}$.
In addition,
$\SWAP\ptp \ket{\phi^+} = 2 \ket{\phi^+}$
where $\SWAP\ptp = 2 \proj{\phi^+}$ denotes the partial transpose of $\SWAP$.
Similarly, the three-dimensional orthogonal complement of $\ket{\phi^+}$ is also invariant under the action of both $U \x \bar{U}$ and $\SWAP\ptp$.
Mixed Schur--Weyl duality generalizes this observation by partitioning $(\C^d)^{p+q}$ into subspaces that are invariant under the unitary action $U\xp{p} \x \bar{U}\xp{q}$ and the matrix algebra $\A^d_{p,q}$ of partially transposed permutations that are transposed only on the last $q$ systems \cite{koike1989decomposition,bchlls,halverson1996characters,nikitin2007centralizer}.
In particular, $[\A^d_{p,q}, U\xp{p} \x \bar{U}\xp{q}] = 0$ for all $U \in \U{d}$.
The usual Schur--Weyl duality corresponds to the special case when either $p=0$ or $q=0$.

%%%%%%%%%%%%%%%%%%%%%%%%%%%%%%%%%
% Curly brace
\newcommand{\curlybrace}[3]{
  \def\r{0.2} % curly bracket radius
  \def\e{0.1} % overhang
  \draw[radius = \r] (#1-#3-\e,#2-\r)
    arc[start angle = 180, end angle =  90] -- (#1-\r,#2) arc[start angle = -90, end angle = 0]
    arc[start angle = 180, end angle = 270] -- (#1+#3-\r+\e,#2)
    arc[start angle =  90, end angle =   0];
}
%%%%%%%%%%%%%%%%%%%%%%%%%%%%%%%%%
% A random Brauer diagram
\newcommand{\diagRand}[2][0.5]{\,
  \tikz[scale = #1, baseline = (O)]{
    \BrauerTikZStyle
    \path (1.5,0.35) coordinate (O);
    % Nodes
    \foreach \i in {1,...,5} {
      \fill (\i,1) circle [dot] coordinate (A\i);
      \fill (\i,0) circle [dot] coordinate (B\i);
    }
    % Extra stuff
    #2
    % Edges
    \draw (A1) to [l] (B1);
    \draw (A2) to [l] (B3);
    \draw (A4) to [l] (B5);
    \draw (A3) to [u] (A5);
    \draw (B2) to [n] (B4);
    % Wall
    \draw [dashed] (3.5,1.2) -- (3.5,-0.2);
  }\,
}
% Diagram labels
\newcommand{\Labels}{
    \foreach \i in {1,...,5} {
        \path (A\i)+(0, 0.5) node {$x_\i$};
        \path (B\i)+(0,-0.5) node {$y_\i$};
    }
}
%%%%%%%%%%%%%%%%%%%%%%%%%%%%%%%%%
\newcommand{\diagSWAP}{\,
  \tikz[scale = 0.5, baseline = (O)]{
    \BrauerTikZStyle
    \tikzset{looseness = 1}
    \path (1.5,0.35) coordinate (O);
    \foreach \i in {1,2} {
      \fill (\i,1) circle [dot] coordinate (A\i);
      \fill (\i,0) circle [dot] coordinate (B\i);
    }
    \draw (A1) to [l] (B2);
    \draw (A2) to [l] (B1);
  }\,
}
%%%%%%%%%%%%%%%%%%%%%%%%%%%%%%%%%
\newcommand{\diagContract}{\,
  \tikz[scale = 0.5, baseline = (O)]{
    \BrauerTikZStyle
    \tikzset{looseness = 1}
    \path (1.5,0.35) coordinate (O);
    \foreach \i in {1,2} {
      \fill (\i,1) circle [dot] coordinate (A\i);
      \fill (\i,0) circle [dot] coordinate (B\i);
    }
    \draw (A1) to [u] (A2);
    \draw (B1) to [n] (B2);
    \draw [dashed] (1.5,1.2) -- (1.5,-0.2);
  }\,
}
%%%%%%%%%%%%%%%%%%%%%%%%%%%%%%%%%

Partially transposed permutations can be easily visualized as diagrams.
If $\sigma \in \S_{p+q}$ is a permutation on $p+q$ objects then its partial transpose $\sigma\ptp$ is obtained by exchanging the last $q$ inputs and outputs of $\sigma$:
\begin{equation}
  \Bigg(\;
  \begin{tikzpicture}[baseline = 0.4cm]
    \BrauerTikZStyle
    % p and q
    \curlybrace{2.0}{1.4}{1.0} \node at (2.0,1.9) {$p=3$};
    \curlybrace{4.5}{1.4}{0.5} \node at (4.5,1.9) {$q=2$};
    % Permutation diagram
    \begin{scope}
      % Nodes
      \foreach \i in {1,...,5} {
        \fill (\i,1) circle [dot] coordinate (A\i);
        \fill (\i,0) circle [dot] coordinate (B\i);
      }
      % Edges
      \draw (A1) to [l] (B1);
      \draw (A2) to [l] (B3);
      \draw (A5) to [l] (B4);
      \draw (A3) to [l] (B5);
      \draw (A4) to [l] (B2);
      % Wall
      \draw [dashed] (3.5,1.2) -- (3.5,-0.2);
    \end{scope}
  \end{tikzpicture}
  \;\Bigg)\ptp \quad = \quad
  \begin{tikzpicture}[baseline = 0.5cm]
    \BrauerTikZStyle
    \begin{scope}[xshift = 6.6cm]
      % Nodes
      \foreach \i in {1,...,5} {
        \fill (\i,1) circle [dot] coordinate (A\i);
        \fill (\i,0) circle [dot] coordinate (B\i);
      }
      % Edges
      \draw (A1) to [l] (B1);
      \draw (A2) to [l] (B3);
      \draw (A4) to [l] (B5);
      \draw (A3) to [u] (A5);
      \draw (B2) to [n] (B4);
      % Wall
      \draw [dashed] (3.5,1.2) -- (3.5,-0.2);
    \end{scope}
  \end{tikzpicture}
  \;\;\;.
  \label{eq:diagram transpose}
\end{equation}
While $\sigma\ptp$ is no longer a permutation, we can still multiply such diagrams by concatenating them in the same way as permutations (in case closed loops appear in this process, we remove them and multiply the diagram by $d^l$ where $l$ is the number of loops).
The set of all partially transposed permutation diagrams under this composition forms the \emph{walled Brauer algebra}
$\B^d_{p,q} \defeq \spn_\C \set{\sigma\ptp : \sigma \in \S_{p+q}}$,
a diagrammatic algebra of dimension $\dim(\B^d_{p,q}) = (p+q)!$.
In contrast, in quantum information we encounter only its matrix representation
\begin{equation}
    \A_{p,q}^d
    \defeq \psi^d_{p,q}(\B_{p,q}^d),
    \label{eq:A=psiB}
\end{equation}
where
$\psi^d_{p,q} \colon \B^d_{p,q} \to \End\of[\big]{(\C^d)^{p+q}}$
is a linear map from diagrams to matrices on $p+q$ qudits of dimension $d$.
For example, the matrix representation of the diagram from \cref{eq:diagram transpose} has the following standard basis entries:
\begin{align}
  \bra{x_1, \dotsc, x_{5}}
  \, \psi^d_{3,2} \of[\Big]{\diagRand{}} \,
  \ket{y_1, \dotsc, y_{5}}
  = \!\!\!
    \diagRand{\Labels}
    \!\!\!
  = \delta_{x_1,y_1}
    \delta_{x_2,y_3}
    \delta_{x_3,x_5}
    \delta_{x_4,y_5}
    \delta_{y_2,y_4},
  \label{eq:psi example}
\end{align}
for any $x_1, \dotsc, x_5 \in [d]$ and $y_1, \dotsc, y_5 \in [d]$.
In particular, note that
\begin{equation}
    \SWAP\ptp
    = \of*{\diagSWAP}\ptp
    = \diagContract
    = d \; \proj{\phi^+}
    \label{eq:contraction}
\end{equation}
is proportional to the canonical maximally entangled state on two qudits (see also \cref{tab:TranspositionVSCOntraction}).

This interplay between permutations and entanglement is why the matrix algebra $\A^d_{p,q}$ and mixed Schur--Weyl duality is so relevant to quantum information.
It appears in a variety of contexts,
particularly in scenarios with multiple input and output systems
such as
quantum state purification \cite{Purification}
and cloning
\cite{Cloning,Cloning2,AsymmetricCloning},
port-based
\cite{mozrzymas2018optimal,studzinski2017port,leditzky2020optimality,christandl2021asymptotic,studzinski2021degradation}
and multi-port-based teleportation
\cite{kopszak2020multiport,studzinski2020efficient,mozrzymas2021optimal},
and quantum algorithms \cite{Majority}.
This symmetry also occurs in situations that involve the partial transpose on several systems,
such as in entanglement detection
\cite{EntanglementDetection,balanzójuandó2021positive},
universality of qudit gate sets
\cite{sawicki2021check,dulian2022matrix,słowik2022calculable},
and $\U{d}$-equivariant quantum circuits \cite{Circuits,zheng2021speeding}.
It is also relevant in high-energy physics
\cite{Branes,Candu}.

%%%%%%%%%%%%%%%%%%%%%%%%%%%%%%%%%%%%%%%%%%%%%%%%%%%%%%%%%%%%%%%%%%%%%%%%%%%%%%%%%%%%%%%%%%%%%%%%%%%%%%%%%%%%%%%%%%%%%%%%%%%%%%%%%%%%%%%%%%%%%%%%%%%%%%%%%%%%%%%

\subsection{Historical context}

The walled Brauer algebra $\B^d_{p,q}$ is a restricted version of the full Brauer algebra \cite{Brauer}, and a prominent example of a diagram algebra, which has been widely studied \cite{turaev1989operator,koike1989decomposition,bchlls,Benkart,nikitin2007centralizer,Bulgakova} (see \cite{Panorama} for a survey on Brauer and other diagram algebras).
Mixed Schur--Weyl duality was established in \cite{koike1989decomposition,bchlls}, where the matrix algebra $\A^d_{p,q}$ of partially transposed permutations was first introduced.
Motivated by applications to quantum information, this algebra was subsequently studied in \cite{mozrzymas2021optimal,studzinski2020efficient}.
In particular, the $q = 1$ case was considered in
\cite{zhang2007permutation,studzinski2013commutant,mozrzymas2014structure,mozrzymas2018simplified}.
The characters of the walled Brauer algebra were first derived by Halverson \cite{halverson1996characters} (see also \cite{nikitin2007centralizer}).

The representation theory of walled Brauer algebras has been strongly influenced by representation theory of symmetric group algebras, which correspond to the special case when either $p = 0$ or $q = 0$.
The representation theory of the symmetric group is widely used and has a long history \cite{Sagan,rutherford2013substitutional,ceccherini2010representation,howe2022invitation}.
In particular, there are three commonly used forms for irreducible representations of the symmetric group
\cite{rutherford2013substitutional}
(see \cref{tab:irrep forms} for examples):
\begin{enumerate}
    \item \emph{Young's natural form} provides invertible matrices with integer entries (i.e., in $\mathbb{Z}$),
    \item \emph{Young's seminormal form} provides invertible matrices with rational entries (i.e., in $\mathbb{Q}$),
    \item \emph{Young's orthogonal form} (also known as \emph{Young--Yamanouchi basis}) provides real orthogonal matrices whose entries are square roots of rational numbers (i.e., in $\pm \sqrt{\mathbb{Q}_{\geq 0}}$).
\end{enumerate}
The basis change between Young's seminormal and orthogonal forms is diagonal and corresponds to normalization, while the basis change between seminormal and natural forms is triangular \cite{armon2021transition}.

%%%%%%%%%%%%%%%%%%%%%%%%%%%%%%%%%%%%%%%%%%%%%%%%%%%%%%%%%%%%%%%%%%%%%%%%

\begin{table}
%--------------------------------------
% Wrapper for any TikZ figure
\newcommand{\Gr}[2][-0.1cm]{
\begin{tikzpicture}[semithick, baseline = #1]
  % Dimensions
  \def\X{0.50}
  \def\Y{0.45}
  #2
\end{tikzpicture}}
\newcommand{\gr}[1]{\Gr{#1}}
%--------------------------------------
% Round wires
\newcommand{\round}{.. controls +(0.3,0) and +(-0.3,0) ..}
%--------------------------------------
% Permutations
\newcommand{\perma}{
  \draw (0, \Y)   --   (\X, \Y);
  \draw (0,  0)   --   (\X,  0);
  \draw (0,-\Y)   --   (\X,-\Y);
}
\newcommand{\permb}{
  \draw (0, \Y) \round (\X,  0);
  \draw (0,  0) \round (\X,-\Y);
  \draw (0,-\Y) \round (\X, \Y);
}
\newcommand{\permc}{
  \draw (0, \Y) \round (\X,-\Y);
  \draw (0,  0) \round (\X, \Y);
  \draw (0,-\Y) \round (\X,  0);
}
\newcommand{\permd}{
  \draw (0, \Y) \round (\X,  0);
  \draw (0,  0) \round (\X, \Y);
  \draw (0,-\Y)   --   (\X,-\Y);
}
\newcommand{\perme}{
  \draw (0, \Y) \round (\X,-\Y);
  \draw (0,  0)   --   (\X,  0);
  \draw (0,-\Y) \round (\X, \Y);
}
\newcommand{\permf}{
  \draw (0, \Y)   --   (\X, \Y);
  \draw (0,  0) \round (\X,-\Y);
  \draw (0,-\Y) \round (\X,  0);
}
%--------------------------------------
\begin{equation*}
  \newcommand{\vspc}[1]{\rule{0pt}{#1}}
  \begin{array}{c|cccccc}
    \vspc{10pt}
    \pi \in \S_3 & e & \sigma_2 \sigma_1 & \sigma_1 \sigma_2 & \sigma_1 & \sigma_1 \sigma_2 \sigma_1 & \sigma_2 \\[4pt] \hline
    \vspc{22pt}
    \text{Diagram of $\pi$} & \gr{\perma} & \gr{\permb} & \gr{\permc} & \gr{\permd} & \gr{\perme} & \gr{\permf} \\[12pt] \hline
    \vspc{20pt}
    R_{\text{nat}}(\pi) &
      \mx{1&0\\0&1} &
      \mx{-1&-1\\ 1& 0} &
      \mx{ 0& 1\\-1&-1} &
      \mx{ 1& 0\\-1&-1} &
      \mx{-1&-1\\ 0& 1} &
      \mx{ 0& 1\\ 1& 0} \\[10pt]
    R_{\text{semi}}(\pi) &
                  \mx{1&0\\0&1} &
     \dfrac{1}{2} \mx{-1&-3\\ 1&-1} &
     \dfrac{1}{2} \mx{-1& 3\\-1&-1} &
                  \mx{ 1& 0\\ 0&-1} &
     \dfrac{1}{2} \mx{-1&-3\\-1& 1} &
     \dfrac{1}{2} \mx{-1& 3\\ 1& 1} \\[10pt]
    R_{\text{orth}}(\pi) &
                  \mx{1&0\\0&1} &
     \dfrac{1}{2} \mx{-1& -\sqrt{3} \\  \sqrt{3}&-1} &
     \dfrac{1}{2} \mx{-1&  \sqrt{3} \\ -\sqrt{3}&-1} &
                  \mx{1&0\\0&-1} &
     \dfrac{1}{2} \mx{-1& -\sqrt{3} \\ -\sqrt{3}& 1} &
     \dfrac{1}{2} \mx{-1&  \sqrt{3} \\  \sqrt{3}& 1} \\[10pt]
  \end{array}
\end{equation*}
\caption{\label{tab:irrep forms}Young's natural, seminormal, and orthogonal form (denoted by $R_{\text{nat}}$, $R_{\text{semi}}$, and $R_{\text{orth}}$, respectively) for the two-dimensional irrep $\yd{2,1}$ of the symmetric group $\S_3$.}
\end{table}

%%%%%%%%%%%%%%%%%%%%%%%%%%%%%%%%%%%%%%%%%%%%%%%%%%%%%%%%%%%%%%%%%%%%%%%%

In the context of quantum information, Young's orthogonal form is by far the most useful since it provides unitary matrices that can readily be used as operations in a quantum computer.
In particular, irreps of this form are produced by the quantum Schur transform \cite{HarrowThesis,Berg}.
Our goal is to extend Young's orthogonal form from the symmetric group to the matrix algebra $\A^d_{p,q}$, and to derive the corresponding mixed Schur transform that decomposes $\A^d_{p,q}$ into irreps of this form.

Since our approach is based on a very general and well-established strategy that involves decomposing an algebra (or a group) into a chain of subalgebras (or subgroups), we will refer to the resulting basis as \emph{Gelfand--Tsetlin basis}, a term that is commonly used for these types of constructions.
Confusingly, this means that the basis we obtain can be referred to by three different names:
``Young's orthogonal form'',
``Young--Yamanouchi basis'', and
``Gelfand--Tsetlin basis''.
We will use only the latter term throughout the rest of this paper.

%%%%%%%%%%%%%%%%%%%%%%%%%%%%%%%%%%%%%%%%%%%%%%%%%%%%%%%%%%%%%%%%%%%%%%%%%%%%%%%%%%%%%%%%%%%%%%%%%%%%%%%%%%%%%%%%%%%%%%%%%%%%%%%%%%%%%%%%%%%%%%%%%%%%%%%%%%%%%%%

\subsection{Strategy}

While the diagrammatic walled Brauer algebra $\B^d_{p,q}$ is well studied, its matrix representation $\A^d_{p,q}$ has received much less attention.
A key difficulty in studying $\A^d_{p,q}$ is that
\begin{equation}
    \A_{p,q}^d
    \cong \B_{p,q}^d \,/\, \ker(\psi_{p,q}^d)
\end{equation}
according to \cref{eq:A=psiB}.
While the linear map $\psi_{p,q}^d \colon \B^d_{p,q} \to \End\of[\big]{(\C^d)^{p+q}}$ that turns diagrams into matrices has a simple description, see \cref{eq:psi example,eq:Brauer action}, its kernel $\ker(\psi_{p,q}^d)$ is non-trivial when $d < p + q$ and its basis has a rather complicated description.
In addition, since $\B_{p,q}^d$ is not semisimple when $d < p + q - 1$ \cite{cox2008blocks}, it can be difficult to obtain results for $\B_{p,q}^d$ and then transfer them to $\A_{p,q}^d$ (which is always semisimple).

Our strategy hinges on the close connection between the walled Brauer algebra $\B^d_{p,q}$ and the group algebra $\C(\S_p \times \S_q)$ corresponding to its two symmetric subgroups.
We will make use of the fact that $\B^d_{p,q}$ is generated by diagrams $\sigma_1, \dotsc, \sigma_{p+q-1}$ \cite{nikitin2007centralizer} shown in \cref{eq:generators}, where $\sigma_i$ ($i \neq p$) are \emph{transpositions} of consecutive systems $i$ and $i+1$ which generate $\S_p \times \S_q$, while the remaining generator $\sigma_p$ \emph{contracts} systems $p$ and $p+1$, e.g., see \cref{eq:contraction} (see \cref{def:Brauer} for more details).
The matrix algebra $\A_{p,q}^d$ is generated by $\psi^d_{p,q}(\sigma_1), \dotsc, \psi^d_{p,q}(\sigma_{p+q-1})$.

\subsection{Summary of our results}

Our main technical result is \cref{thm:main}, which provides an explicit construction of all irreducible representations of $\A^d_{p,q}$, the matrix algebra of partially transposed permutations on $p+q$ qudits.
We provide a formula that allows to evaluate the irrep matrix entries on each of the generators, which by homomorphism and linearity fully determines the irrep on the rest of the algebra.

An important feature of our construction is that it provides irrep matrix entries in the Gelfand--Tsetlin basis.
This basis has a recursive definition and hence is automatically adapted to a natural sequence of subalgebras obtained by including the generators $\sigma_i$ one by one.
This guarantees that the generators have a particularly sparse representation and gives a conceptually simple way to pinpoint their non-zero entries and describe their action.
In addition, our basis coincides with the basis produced by the mixed quantum Schur transform.
The conceptual simplicity of the Gelfand--Tsetlin basis together with its operational connection with the mixed quantum Schur transform is precisely what will allow us in \cref{sec:PBT} to derive an efficient quantum circuit for implementing the optimal port-based quantum teleportation protocol.

\begin{theorem*}[Qualitative statement of \cref{thm:main}]
When evaluated on one of the generators, any irreducible representation of the matrix algebra $\A^d_{p,q}$ has the following form:
\begin{itemize}
    \item Transpositions $\sigma_i$ ($i \neq p$) are represented by a direct sum of $1 \times 1$ and $2 \times 2$ blocks, where each $1 \times 1$ block is equal to $\pm 1$, while for each $2 \times 2$ block there is a constant $r \in \mathbb{Z}$ such that the block is equal to
    \begin{equation}
        \mx{
          \frac{1}{r} & \sqrt{1 - \frac{1}{r^2}} \\
          \sqrt{1 - \frac{1}{r^2}} & - \frac{1}{r}
        },
    \end{equation}
    which is an orthogonal reflection.
    The exact signs and values of $r$ can be inferred from Young's orthogonal form for symmetric groups.
    \item The contraction $\sigma_p$ is represented by a direct sum of rank-$1$ matrices with eigenvalue $d$.
\end{itemize}
When $p = 0$ or $q = 0$, our formula reduces to the well-known Young's orthogonal form for the symmetric group.
\end{theorem*}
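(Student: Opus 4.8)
The plan is to build the irreducible representations of $\A^d_{p,q}$ by the standard Gelfand--Tsetlin / branching strategy: consider the chain of subalgebras $\A^d_{0,0} \subset \A^d_{1,0} \subset \dots \subset \A^d_{p,0} \subset \A^d_{p,1} \subset \dots \subset \A^d_{p,q}$ obtained by adjoining the generators $\psi^d_{p,q}(\sigma_1), \dots, \psi^d_{p,q}(\sigma_{p+q-1})$ one at a time (where the step from $\A^d_{p,q-1}$ to $\A^d_{p,q}$ adds the generator $\sigma_{p+q-1}$, and the transitional step at position $p$ adds the contraction $\sigma_p$). Since $\A^d_{p,q}$ is semisimple, each inclusion in the chain gives a branching rule, and iterating it yields a canonical labelling of basis vectors of each irrep by paths in a Bratteli diagram; the Gelfand--Tsetlin subalgebra (generated by the centres of all subalgebras in the chain) acts diagonally in this basis. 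The representation-theoretic input I would invoke is the known decomposition theory of the walled Brauer algebra and mixed Schur--Weyl duality (as in \cite{koike1989decomposition,bchlls,halverson1996characters,nikitin2007centralizer}): irreps of $\A^d_{p,q}$ are indexed by pairs of Young diagrams $(\lambda,\mu)$ with $|\lambda|\leq p$, $|\mu|\leq q$, $|\lambda|-|\mu|=p-q$ and $\ell(\lambda)+\ell(\mu)\leq d$, and the branching for adding a transposition is just the usual add-a-box/remove-a-box rule for the relevant symmetric group factor, while the branching at the contraction step mixes the two diagrams.

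Having set up the basis, the computation splits into the two cases in the statement. For a transposition $\sigma_i$ with $i\neq p$, the key observation is that $\sigma_i$ together with the generators below it in the chain lives entirely inside $\psi^d_{p,q}$ applied to the group algebra of one of the two symmetric subgroups $\S_p$ or $\S_q$; since $\psi^d_{p,q}$ restricted to this subgroup algebra is (up to the well-understood kernel coming from $\ell(\lambda)+\ell(\mu)>d$) just the regular-ish representation, the action of $\sigma_i$ on each irrep of $\A^d_{p,q}$ decomposes, after restriction along the subchain, into a direct sum of irreps of that symmetric group. On each such symmetric-group irrep the adjacent transposition $\sigma_i$ acts by Young's orthogonal form, which is precisely the block form in the statement with $r$ equal to the axial distance (content difference) between the boxes that $i$ and $i+1$ occupy in the standard Young tableau labelling the Gelfand--Tsetlin vector; the $1\times 1$ blocks $\pm 1$ occur exactly when $i$ and $i+1$ are in the same row ($+1$) or same column ($-1$). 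For the contraction $\sigma_p$, I would use the defining relations of the walled Brauer algebra — in particular that $\sigma_p$ (equivalently, the matrix $d\proj{\phi^+}$ on systems $p,p+1$ as in \cref{eq:contraction}) is $d$ times an idempotent, and that $\sigma_p^2 = d\,\sigma_p$ — so in any representation $\tfrac{1}{d}\sigma_p$ is an idempotent, hence diagonalizable with eigenvalues $0$ and $1$; an eigenvalue count (via the trace of $\sigma_p$ in the irrep, which is read off from Halverson's character formula, or via the branching multiplicities) shows that within each irrep block indexed by the path through the chain the $1$-eigenspace of $\tfrac1d\sigma_p$ is one-dimensional, giving the claimed direct sum of rank-$1$ matrices with eigenvalue $d$. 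Finally, the reduction to Young's orthogonal form when $p=0$ or $q=0$ is immediate since then the contraction generator is absent, the chain is the usual $\S_1\subset\S_2\subset\dots$, and $\psi^d_{p,q}$ is (on the surviving irreps with $\ell\leq d$) faithful enough that the construction literally becomes the Young--Yamanouchi basis.

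The step I expect to be the main obstacle is making the contraction step rigorous and explicit rather than merely qualitative: pinning down, in the Gelfand--Tsetlin basis already fixed on $\A^d_{p,0}$ (and its extension through the $q$ lower transpositions), exactly which one-dimensional subspace is the eigenvalue-$d$ line of $\sigma_p$ and with what normalization it embeds, so that the resulting matrices really are orthogonal and the branching multiplicities match. This requires controlling the interaction between the partial transpose and the Young--Yamanouchi vectors on the two sides of the wall — essentially computing overlaps of the form $\langle \text{tableau}_\lambda \otimes \text{tableau}_\mu \mid \phi^+\rangle$-type quantities compatibly with the recursive basis — and it is precisely here that the kernel $\ker(\psi^d_{p,q})$ and the constraint $\ell(\lambda)+\ell(\mu)\leq d$ enter in a nontrivial way. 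The transposition case, by contrast, is essentially bookkeeping on top of the classical symmetric-group result and I do not expect it to present real difficulties.
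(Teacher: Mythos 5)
Your overall framework (the subalgebra chain, the Bratteli diagram, paths as a Gelfand--Tsetlin basis) matches the paper's setup, and your treatment of the transpositions is acceptable in spirit: the paper also regards that part as essentially Young's orthogonal form and verifies it only for completeness. One caveat: for $i>p$ the parameter $r$ is the \emph{walled} axial distance of \cref{eq:ri}, which can involve $d$ and negated contents when a box is removed from $\lambda_l$; the paper justifies this through the walled Jucys--Murphy elements (\cref{lem:JMaction}), not by a bare restriction to the right symmetric group, whose chain the $\A$-chain does not obviously refine.

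The genuine gap is exactly where you flag it: the contraction. Idempotency of $\sigma_p/d$ only gives eigenvalues $\{0,d\}$, and a global trace or character count (Halverson), or a branching-multiplicity count, only controls the \emph{total} rank of $\psi_\lambda(\sigma_p)$; it does not yield the claim that within each Gelfand--Tsetlin block $V_T=\spn\set{\ket{T'} : T'\in\M{T}}$ the rank is exactly one. Each such block has dimension $\abs{\M{T}}$, which can exceed $1$, so a total-rank count cannot by itself exclude rank $2$ on one block and rank $0$ on another block with $T^{p-1}=T^{p+1}$; you would need a per-block argument (e.g.\ via the conditional expectation $e\,\A\,e$ of the basic construction), which the proposal does not supply. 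The paper resolves precisely this point by exhibiting the eigenvector explicitly: $\psi_\lambda(\sigma_p)\ket{T}=c(T)\ket{v_T}$ with $c(T)=\sqrt{m_{T^{p}}/m_{T^{p-1}}}$, proving $\norm{\ket{v_T}}_2^2=d$ via Pieri's rule (\cref{lem:v_Tnorm}), and then checking the walled Brauer relations (d)--(h) of \cref{def:Brauer} with combinatorial identities for contents of addable/removable cells (the Waring-type identity of \cref{lem:magich} and \cref{cor:ratios}); irreducibility then follows by matching the Jucys--Murphy spectrum with the canonical Gelfand--Tsetlin basis. So your plan is reasonable as an outline, but the central quantitative step --- identifying the eigenvalue-$d$ line and its coefficients in every block --- is missing, and that step is the main content of the paper's proof of \cref{thm:main}.
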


A concrete example of how all irreps of $\A^3_{3,2}$ look like in the Gelfand--Tsetlin basis is provided in \cref{tab:A32}.
The main idea of the proof is that, thanks to this being the Gelfand--Tsetlin basis, we already know from Young's orthogonal form how all generators (except for the contraction $\sigma_p$) are supposed to act.
We made an educated guess for the matrix representation of $\sigma_p$ and verified that it indeed works.
This requires checking that the matrix representations of all generators satisfy the walled Brauer algebra relations stated in \cref{def:Brauer}.

%%%%%%%%%%%%%%%%%%%%%%%

As our second technical result, we investigate the mixed quantum Schur transform which can be used both to block diagonalize the matrix algebra $\A^d_{p,q}$ as well as to prepare the Gelfand--Tsetlin basis vectors.
More specifically, in \cref{thm:MPS} of \cref{sec:MPS} we show that the the rows of the mixed quantum Schur transform (or the Schur basis states) admit a matrix product state representation with bond dimension $(p+q)^{O(d^2)}$.
This means that, for a constant local dimension $d$, we can compute the matrix entries of mixed Schur transform in polynomial time.
In addition, \cref{thm:mixed_schur} of \cref{sec:SchTransQuantum} we provide an efficient quantum circuit with $\poly(p+q,d,\log 1/\e)$ gates for computing the mixed Schur transform on a quantum computer.
In particular, this transform achieves the Gelfand--Tsetlin basis from \cref{thm:main} on the walled Brauer algebra register.

%%%%%%%%%%%%%%%%%%%%%%%%%%%%%%%%%%%%%%%%%%%%%%%%%%%%%%%%%%%%%%%%%%%%%%%%

\subsubsection{Application to SDP symmetry reduction}

Semidefinite optimization is an important tool in quantum information \cite{SDPs,watrous}.
When solving semidefinite optimization problems (SDPs), it is useful to take their symmetries into account, both in theory \cite{invariantSDPs} and in practice \cite{RepLAB}.
Eliminating irrelevant degrees of freedom, also known as \emph{symmetry reduction}, can lead to significant computational savings and yield useful theoretical insights into the structure of solutions.

In quantum information, semidefinite optimization often intersects with Schur--Weyl duality due to the types of problems commonly investigated.
If the SDP matrix variable, which typically represents a quantum state or a measurement operator, commutes with $U\xp{p}$, the problem can be significantly simplified by working in Schur basis.
More generally, if the matrix variable, such as the Choi matrix of a quantum channel or superchannel, commutes with $U\xp{p} \x \bar{U}\xp{q}$, one should instead work in mixed Schur basis.
We refer to this type of symmetry as local \emph{unitary equivariance} since it most commonly occurs when considering the Choi matrix of a unitary-equivariant quantum channel (see \cref{sec:Unitary-equivariant quantum channels} for more details).
Recent examples where this symmetry occurs are quantum majority vote \cite{Majority},
black-box transformations of quantum gates
\cite{quintino2019reversing,quintino2019probabilistic,quintino2021deterministic,yoshida2021universal,yoshida2022reversing,ComplexConjugation},
asymmetric cloning
\cite{AsymmetricCloning,nechita2023asymmetric},
entanglement witnesses
\cite{huber2021dimensionfree},
and monogamy of entanglement
\cite{Monogamy}.

Our first application of the Gelfand--Tsetlin basis is for symmetry reduction of a general class of unitary-equivariant SDPs.
Our main result in this regard is \cref{thm:SDP},
which generalizes \cite{grinko2022linear} from linear to semidefinite optimization with unitary symmetries.

\begin{theorem*}[Qualitative statement of \cref{thm:SDP}]
    An SDP with a matrix variable $X \succeq 0$ subject to unitary equivariance $[X,U\xp{p} \x \bar{U}\xp{q}] = 0$ for all $U \in \U{d}$, can be reduced from $d^{2(p+q)}$ variables to $\dim(\A^d_{p,q})$ variables.
    This reduction can be computed in time polynomial in $\dim(\A^d_{p,q})$.
\end{theorem*}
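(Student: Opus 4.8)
The plan is to exploit mixed Schur--Weyl duality together with the Gelfand--Tsetlin basis to pass from the $d^{2(p+q)}$-dimensional ambient space to the commutant algebra $\A^d_{p,q}$, and then to describe the semidefinite cone inside that commutant explicitly. First I would recall that the condition $[X, U\xp{p} \x \bar U\xp{q}] = 0$ for all $U \in \U{d}$ is exactly the statement that $X$ lies in the commutant of the representation $U \mapsto U\xp{p} \x \bar U\xp{q}$, which by mixed Schur--Weyl duality is precisely the matrix algebra $\A^d_{p,q}$. Hence the feasible set of the SDP is $\Set{X \in \A^d_{p,q} \given X \succeq 0, \text{ linear constraints}}$, and the matrix-variable dimension drops from $d^{2(p+q)}$ to $\dim_\C(\A^d_{p,q})$ at once; this is the ``variable reduction'' half of the statement.

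Next I would use the mixed quantum Schur transform $\Usch$ (from \cref{thm:mixed_schur}, achieving the Gelfand--Tsetlin basis of \cref{thm:main}) to conjugate $\A^d_{p,q}$ into block-diagonal form: $\Usch \, \A^d_{p,q} \, \Uschinv = \bigoplus_{\lambda} \Mat_{m_\lambda}(\C) \x I_{d_\lambda}$, where the sum is over the irreps $\lambda$ appearing in $(\C^d)^{p+q}$, $m_\lambda$ is the multiplicity, and $d_\lambda$ is the irrep dimension, with $\sum_\lambda m_\lambda^2 = \dim(\A^d_{p,q})$. Under this change of basis, $X \succeq 0$ is equivalent to each block $X_\lambda \in \Mat_{m_\lambda}(\C)$ being positive semidefinite, and every linear constraint $\Tr(A_j X) = b_j$ with $A_j \in \A^d_{p,q}$ becomes a linear constraint on the $X_\lambda$ after we also express $A_j$ in the block basis (using the irrep dimensions $d_\lambda$ as weights in the trace). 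So the reduced SDP is a genuine SDP over $\bigoplus_\lambda \Mat_{m_\lambda}(\C) \succeq 0$ of total size $\dim(\A^d_{p,q})$, and it is equivalent to the original. If one wants the reduction purely internal to $\A^d_{p,q}$ (without referencing the Schur transform), the same conclusion follows from Wedderburn theory once we have the explicit irrep construction of \cref{thm:main}: it hands us a concrete basis of matrix units for $\A^d_{p,q}$.

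For the complexity claim I would argue that all the ingredients are computable in time $\poly(\dim(\A^d_{p,q}))$: the list of relevant $\lambda$ and the numbers $m_\lambda, d_\lambda$ are read off from the combinatorics underlying \cref{thm:main} (staircase/Gelfand--Tsetlin patterns), and since $\dim(\A^d_{p,q}) = \sum_\lambda m_\lambda^2$ the number and sizes of these blocks are all polynomially bounded; the explicit generator matrices from \cref{thm:main} are sparse and have entries that are square roots of rationals of polynomial bit-complexity, so building the structure constants of $\A^d_{p,q}$ — equivalently, the projections onto the $\lambda$-isotypic components and the matrix-unit basis — costs $\poly(\dim(\A^d_{p,q}))$ linear algebra; finally, rewriting each problem datum ($X$, the $A_j$, the objective) in this basis is a polynomial-size linear transformation. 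The main obstacle I anticipate is bookkeeping rather than conceptual: one must carefully track the irrep-dimension factors $d_\lambda$ so that traces, the objective, and positive-semidefiniteness all transform correctly between the ambient space and the blocks, and one must argue that a basis of $\A^d_{p,q}$ (or of its block decomposition) can be produced in the stated time without ever instantiating the $d^{2(p+q)}$-dimensional objects — i.e., working entirely within the $(p+q)!$-dimensional walled Brauer algebra and its image. This is where the Gelfand--Tsetlin/MPS machinery of \cref{thm:MPS} does the real work, allowing the needed matrix entries to be computed locally rather than by manipulating exponential-size matrices.
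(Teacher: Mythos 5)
Your variable-count reduction is exactly the paper's route: unitary equivariance places $X$ in the commutant $\A^d_{p,q}$ by mixed Schur--Weyl duality, and expanding $X$ in the matrix units $\E_{ST}$ attached to the Gelfand--Tsetlin basis (\cref{eq:matrix_units}, \cref{eq:Xmatrix}) turns positivity into blockwise positivity of the $X_\lambda$, yielding $\dim(\A^d_{p,q})$ degrees of freedom as in \cref{eq:output SDP}. (Your $m_\lambda$ and $d_\lambda$ are swapped relative to the paper's conventions, but that is only notation.)

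The genuine gap is in the complexity half. The step ``rewriting each problem datum in this basis is a polynomial-size linear transformation'' fails unless you fix how $C$, $A_k$, $D_k$, $B_k$ are presented: a generic dense datum has $d^{2(p+q)}$ entries, which for $d \geq p+q$ vastly exceeds $\dim(\A^d_{p,q}) \leq (p+q)!$, so it cannot even be read in the claimed time. The paper's \cref{thm:SDP} therefore assumes the data are $s$-sparse combinations of matrix units, of computational-basis units, or of walled Brauer diagrams, and the proof consists of an explicit trace/partial-trace subroutine for each case (\cref{sec:Trace}, \cref{sec:Partial trace}): trivial bookkeeping via $\Tr(\E_{TS}) = \delta_{TS}\,m_\lambda$ for matrix-unit data; contraction of the MPS form of the Schur basis vectors, costing $(p+q)^{O(d^2)}$, for computational-basis data; and, for diagram data, decomposing a diagram into $O((p+q)^2)$ generators and multiplying them inside every irrep using \cref{thm:main}, costing $O\of[\big]{(p+q)^2\of[\big]{\dim\A^d_{p,q}}^{3/2}}$. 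Moreover, the SDP class in \cref{eq:input SDP} contains partial-trace constraints $\Tr_{S_{i_k}}(D_k X) = B_k$, and rewriting these requires knowing how matrix units behave under partial trace --- the Ram--Wenzl formula of \cref{Lemma:Ram and Wenzl}, $\Tr_{p+q}\E_{S,T} = \tfrac{m_\lambda}{m_\mu}\E_{\bar S,\bar T}$ --- which your sketch does not address at all. Your pointers to \cref{thm:main} and \cref{thm:MPS} are the right ingredients, but the substance of the complexity claim is precisely the construction of these trace and partial-trace routines under a stated input model, not generic linear algebra on the block decomposition.
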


For example, in the regime where $p$ and $q$ are small constants while $d$ is large (e.g., $p=2$, $q=3$, $d=1000$), solving the SDP is clearly impossible by conventional methods since it has $1000^{10} = 10^{30}$ variables.
In contrast, our method reduces it to a problem with only $(2+3)! = 5! = 120$ variables.
When $d$ is also small, even fewer variables suffice (e.g., $42$ variables when $d = 2$, see Appendix~E of \cite{grinko2022linear}).

\subsubsection{Application to port-based teleportation}

Our second application of the Gelfand--Tsetlin basis is to port-based teleportation.
In \cref{thm:pbt} we provide an efficient quantum circuit for port-based teleportation based on mixed quantum Schur transform.
This is a long-standing open problem that has been open since the invention of port-based teleportation
\cite{IshizakaHiroshima,ishizaka2009quantum}.

Together these two applications demonstrate the power of walled Brauer algebra techniques when applied to problems in quantum information, and suggests further possible applications of our Gelfand--Tsetlin basis and mixed quantum Schur transform.
We expect it to be particularly useful for deriving quantum circuits for general unitary-equivariant quantum channels.

\subsection{Related work}

Young's natural representation for the walled Brauer algebra has been constructed by Nikitin \cite{nikitin2007centralizer}, while Young's orthogonal and seminormal forms for the full Brauer algebra are described in \cite{nazarov1996young} and \cite{enyangseminormal}, respectively.
In addition, \cite{semikhatov2017quantum} has obtained a seminormal form for the $q$-deformed version of the walled Brauer algebra.
However, taking the ``classical'' $q \rightarrow 1$ limit of their construction and renormalizing the resulting basis vectors to obtain the corresponding orthogonal form is non-trivial.
Moreover, their construction only works for semisimple walled Brauer algebras $\B_{p,q}^d$, so the problem of adapting their construction to $\A_{p,q}^d$ is still open.
In the context of quantum information, the $q=1$ case of $\A^d_{p,q}$ before our work was studied in \cite{studzinski2013commutant,mozrzymas2014structure,mozrzymas2018simplified}, and some aspects of the general $q$ case were studied in \cite{studzinski2020efficient}.

While preparing our manuscript, we became aware of two simultaneous works \cite{quynh2023mixedschur,jiani023reducing}, which achieve similar results as ours.
In particular, \cite{quynh2023mixedschur} constructs mixed quantum Schur transform in the same way as our \cref{thm:mixed_schur}, based on the original construction from \cite{bch2006quantumschur,HarrowThesis}.
However, they have a number of different applications which do not intersect with our applications.
The second work \cite{jiani023reducing} finds an efficient implementation of the pretty good measurement for the optimal port-based teleportation protocol by decomposing induced representations. They achieve a similar result as our \cref{thm:pbt}, i.e., they construct a protocol with polynomial complexity. However, their approach is different from ours and is based on the other side of the Schur--Weyl duality: they induce from the symmetric group register, while we decompose a tensor product of unitary irreducible representations via the dual Clebsch--Gordan transform from our \cref{thm:mixed_schur}.
Thus our approaches are complimentary to each other in a representation-theoretic sense.

Various applications of the quantum Schur transform were recently studied in a number of papers \cite{havlivcek2018quantum,havlivcek2019classical,zheng2021speeding}. In \cite{havlivcek2018quantum,havlivcek2019classical}, authors computed the matrix entries of the quantum Schur transform, although only for qubits. Ref.~\cite{zheng2021speeding} studies how to define efficient ansatze for variational quantum algorithms for problems with $U\xp{n}$ symmetry. We expect our work to be useful in extending their results to the setting of unitary-equivariance symmetry $U\xp{p} \x \bar{U}\xp{q}$.

In our previous work~\cite{grinko2022linear}, we used the method developed in~\cite{doty2019canonical} to computed the primitive central idempotents of $\A^d_{p,q}$, which is the same as determining the diagonal matrix units (or diagonal entries of irreps).
In this paper, we determine all irrep entries, including the off-diagonal ones.
This immediately allows us to extend our previous application from linear to semidefinite programming with unitary symmetries.

%%%%%%%%%%%%%%%%%%%%%%%%%%%%%%%%%%%%%%%%%%%%%%%%%%%%%%%%%%%%%%%%%%%%%%%%%%%%%%%%%%%%%%%%%%%%%%%%%%%%%%%%%%%%%%%%%%%%%%%%%%%%%%%%%%%%%%%%%%%%%%%%%%%%%%%%%%%%%%%
%%%%%%%%%%%%%%%%%%%%%%%%%%%%%%%%%%%%%%%%%%%%%%%%%%%%%%%%%%%%%%%%%%%%%%%%%%%%%%%%%%%%%%%%%%%%%%%%%%%%%%%%%%%%%%%%%%%%%%%%%%%%%%%%%%%%%%%%%%%%%%%%%%%%%%%%%%%%%%%

\section{Preliminaries}\label{sec:Preliminaries}

\subsection{Young diagrams and tableaux}
\label{sec:tableaux}

A \emph{partition} $\lambda \vdash p$ of an integer $p \geq 0$ is a tuple of integers $\lambda = (\lambda_1, \dotsc, \lambda_k)$ such that $\lambda_1 \geq \cdots \geq \lambda_k \geq 0$
and $\lambda_1 + \cdots + \lambda_k = p$.
We denote by $\ell (\lambda) = \max \set{k : \lambda_k > 0}$ the \emph{length} of $\lambda$.
A partition $\lambda \vdash p$ can be graphically represented as a \emph{Young diagram}---a collection of $p$ \emph{cells} arranged in $k$ rows with $\lambda_i$ of them in the $i$-th row.
For example,
\begin{equation}
    \yd[1]{3,1}
    \label{eq:YD}
\end{equation}
represents the partition $\lambda=(3,1)$.
We call $\mu = (\mu_1,\dotsc,\mu_{k'})$ a \emph{subpartition} of $\lambda = (\lambda_1, \dotsc, \lambda_k)$, and write $\mu \subseteq \lambda$ if $k' \leq k$ and $\mu_i \leq \lambda_i$ for $i=1,\dotsc,k'$.
The \emph{size} $\abs{\lambda}$ of Young diagram denotes the number of boxes $p$.

Any cell $u \in \lambda$ of a Young diagram $\lambda$ can be specified by its row and column coordinates $i$ and $j$.
The \emph{content} of cell $u = (i,j)$ is
\begin{equation}
  \cont(u) \defeq j - i.
  \label{eq:content}
\end{equation}
For example, the cells of the Young diagram $(5,3,3)$ have the following content:
\begin{equation}
    \ytableausetup{centertableaux}
    \begin{ytableau}
        0 & 1 & 2 & 3 & 4 \\
       -1 & 0 & 1 \\
       -2 &-1 & 0
    \end{ytableau}
    \;.
\end{equation}
Note that content is constant on diagonals of $\lambda$ and indicates how far each diagonal is from the main one.
Content increases by one when going right or up, and decreases by one when going left or down.
The \emph{axial distance} (also known as \emph{hook} or \emph{Manhattan distance}) between two cells $u = (i,j)$ and $v = (i',j')$ in a Young diagram is
\begin{equation}
    r(u,v) \defeq \cont(v) - \cont(u).
\end{equation}
For example, the axial distance from cell $u$ to all other cells in the Young diagram $(5,3,3)$ is as follows:
\begin{equation}
    \ytableausetup{centertableaux}
    \begin{ytableau}
        -1 & u & 1 & 2 & 3 \\
        -2 &-1 & 0 \\
        -3 &-2 &-1
    \end{ytableau}
    \;.
\end{equation}

For a Young diagram $\lambda$, a cell $u \in \lambda$ is called \emph{removable} if the diagram $\lambda / u$ obtained by removing the cell $u$ from $\lambda$ is a valid Young diagram.
Similarly, a cell $u \notin \lambda$ is called \emph{addable} if the diagram $\lambda \cup u$ obtained by adding the cell $u$ to $\lambda$ is a valid Young diagram.
The set of all removable cells of $\lambda$ is denoted by $\RC(\lambda)$, while the set of all addable cells by $\AC(\lambda)$.
For example, the Young diagram $\lambda = (5,3,3)$ (shown in gray) has two removable cells (shown in white):
$r_1 = (1,5)$,
$r_2 = (3,3)$,
and three addable cells:
$a_1 = (1,6)$,
$a_2 = (2,4)$,
$a_3 = (4,1)$:
\begin{equation}
    \ytableausetup{centertableaux}
    \begin{ytableau}[*(lightgray)]
      & & & & r_1 & *(white) a_1 \\
      & & & *(white) a_2 \\
      & & r_3 \\
      *(white) a_4
    \end{ytableau}
    \;.
\end{equation}

A \emph{Young tableau} $T$ of shape $\lambda \vdash p$ is a Young diagram with cells filled with some natural numbers.
A \emph{standard Young tableau} $T$ of shape $\lambda \vdash p$ is obtained by filling the boxes of the Young diagram $\lambda$ with symbols from $[p] \defeq \set{1,\dotsc,p}$ strictly increasing across rows and down the columns.
For example,
\begin{equation}
    \label{eq:SYT}
    T_1 =
    \ytableausetup{centertableaux}
    \begin{ytableau}
        1 & 2 & 3 \\
        4
    \end{ytableau}\;,\qquad
    T_2 =
    \ytableausetup{centertableaux}
    \begin{ytableau}
        1 & 2 & 4 \\
        3
    \end{ytableau}\;,\qquad
    T_3 =
    \ytableausetup{centertableaux}
    \begin{ytableau}
        1 & 3 & 4 \\
        2
    \end{ytableau}
\end{equation}
are all standard Young tableaux of shape $\lambda=(3,1)$.
The set of all standard Young tableaux of a given shape $\lambda$ is denoted as $\SYT(\lambda)$.
According to the well-known \emph{hook length formula},
\begin{equation}
    \abs{\SYT(\lambda)}
    = \frac{(\lambda_1 + \dotsb + \lambda_d)!}{\prod_{u \in \lambda} h(u)}
    \eqdef d_\lambda
    \label{eq:hook length formula}
\end{equation}
where $h(u)$ denotes the \emph{hook length} of cell $u$ (the number of boxes to the right and below $u$, including $u$ itself).

Notice that any standard Young tableau $T \in \SYT(\lambda)$ can be represented as a sequence of $(p+1)$ Young diagrams
\begin{equation}
    T = \of{T^{0},\dotsc,T^p},
    \label{eq:SYTsequence}
\end{equation}
such that $T^i \pt i$, $T^i \subseteq T^{i+1}$, and $T^p =\lambda$.
Indeed, $T^i$ is obtained from the Young diagram $\lambda \vdash p$ by keeping only those boxes of $T$ whose symbols are in $[i]$.
For example, the Young tableau $T_1$ from \cref{eq:SYT} is represented by the following sequence of Young diagrams (see also \cref{fig:S4}):
\begin{equation}
    \label{eq:YTex}
    T_1 = \of*{
        \0 \,,\,
        \yd[1]{1} \,,\,
        \yd[1]{2} \,,\,
        \yd[1]{3} \,,\,
        \yd[1]{3,1}
    \,}.
\end{equation}

Consider a standard Young tableau $T \in \SYT(\lambda)$ of shape $\lambda \vdash p$ and an arbitrary permutation $\pi \in \S_p$.
We will denote by $\pi T$ the tableau obtained by permuting cell fillings of $T$ according to $\pi$.
For example, the Young tableaux $T_1,T_2,T_3$ presented in \cref{eq:SYT} are related in the following way: $T_2 = (34) T_1$, $T_3 = (23) T_2 = (234) T_1$, where $(34),(23),(234) \in \S_4$.
Note that $\pi T$ is not necessarily a standard tableau, e.g., consider $(14) T_1$.

Given a standard Young tableau $T$, we define
\begin{equation}
    \cont_i(T) \defeq \cont(T^i \backslash T^{i-1}).
    \label{eq:conti}
\end{equation}
This is simply the content of the cell of $T$ containing $i$. Moreover, we define $r_i(T)$ to be the hook distance between cells containing $i$ and $i+1$, i.e.,
\begin{equation}
    r_i(T) \defeq \cont_{i+1}(T) - \cont_i(T).
    \label{eq:distance}
\end{equation}

A \emph{semistandard Young tableau} $M$ of a shape $\lambda$ and entries in $[d]$ is obtained by filling the boxes of a Young diagram $\lambda$ with symbols from $[d]$ weakly increasing across the rows and strictly increasing down the columns. For example,
\begin{equation}
    \label{eq:SSYT}
    M_1 =
    \ytableausetup{centertableaux}
    \begin{ytableau}
        1 & 1 & 1 \\
        2
    \end{ytableau}\;,\qquad
    M_2 =
    \ytableausetup{centertableaux}
    \begin{ytableau}
        1 & 1 & 2 \\
        2
    \end{ytableau}\;,\qquad
    M_3 =
    \ytableausetup{centertableaux}
    \begin{ytableau}
        1 & 2 & 2 \\
        2
    \end{ytableau}
\end{equation}
are all semistandard Young tableau of shape $\lambda = (3,1)$ with entries in $[2]$.
We will denote the set of all semistandard Young tableaux of shape $\lambda$ and entries in $[d]$ by $\SSYT(\lambda,d)$.
According to the well-known \emph{Weyl dimension formula}
\cite[eq.~(11.46)]{louck2008unitary}
and \emph{hook-content formula},
\begin{equation}
    \abs{\SSYT(\lambda,d)}
    = \prod_{1 \leq i < j \leq d}
    \frac{\lambda_i - \lambda_j + j - i}{j-i}
    = \frac{\prod_{u \in \lambda} (d + \cont(u))}{\prod_{u \in \lambda} h(u)}
    \eqdef m_\lambda.
    \label{eq:hook content formula}
\end{equation}

Recording the number of times each number appears in tableau $M$ gives a sequence known as the \emph{weight} of $M$, denoted as $w(M)$:
\begin{equation}
    w(M)_i \defeq \text{``the number of $i$'s in tableau $M$''}.
    \label{eq:w(M)}
\end{equation}
For example, the tableaux presented in \cref{eq:SSYT} have weights $(3,1)$, $(2,2)$, and $(1,3)$, respectively.
We can extend the notion of weight also to sequences of natural numbers.
The \emph{weight} $w(x_1,\dotsc,x_k)$ of a sequence $x_1,\dotsc,x_k$ records the number of times each natural number appears in it.
For example,
\begin{equation}
1,2,2,2
,\qquad
2,1,2,2
,\qquad
2,2,1,2
,\qquad
2,2,2,1
\end{equation}
are all sequences of weight $w(M)=(1,3)$.

%%%%%%%%%%%%%%%%%%%%%%%%%%%%%%%%%%%%%%%%%%%%%%%%%%%%%%%%%%%%%%%%%%%%%%%%%%%%%%%%%%%%%%%%%%%%%%%%%%%%%%%%%%%%%%%%%%%%%%%%%%%%%%%%%%%%%%%%%%%%%%%%%%%%%%%%%%%%%%%

\subsection{Bratteli diagram and Gelfand--Tsetlin basis}\label{sec:BrattDiag}

An algebra over $\C$ is \emph{semisimple} if it is isomorphic to a direct sum $\bigoplus_{i=1}^k \End(\C^{n_i})$ of full matrix algebras, for some integers $k,n_1,\dotsc,n_k \geq 1$ \cite[Theorem~2.2.4]{Cox}.
Following \cite{doty2019canonical}, we call a sequence
\begin{equation}
    \A_0 \hookrightarrow \A_1 \hookrightarrow \dots \hookrightarrow \A_n
    \label{eq:As}
\end{equation}
of finite-dimensional semisimple algebras over $\C$ a \emph{multiplicity-free family} if
\begin{enumerate}[(a)]
    \item $\A_0 \cong \C$,
    \item each embedding $\A_i \hookrightarrow \A_{i+1}$ is unity-preserving\footnote{It maps the unit (i.e., the identity matrix) of one algebra to the unit of the other algebra.}, and
    \item the restriction from $\A_{i+1}$ to $\A_i$ is multiplicity-free\footnote{This means that the restriction of any simple $\A_{i+1}$-module to $\A_i$ is isomorphic to a direct sum of pairwise non-isomorphic simple $\A_i$-modules.}.
\end{enumerate}
A canonical example of such family is the sequence $\CS_0 \hookrightarrow \CS_1 \hookrightarrow \dots \hookrightarrow \CS_n$ of symmetric group algebras, where $\CS_i$ is embedded into $\CS_{i+1}$ by acting trivially on the element $i+1$ \cite{OV1996, VO2005}.
Two other examples of such families are walled Brauer algebras $\B^d_{p,q}$ and partially transposed permutation matrix algebras $\B^d_{p,q}$ that will be particularly important to us (see \cref{sec:Bpq,sec:Apq} for more details).

To any multiplicity-free family of algebras $\A_0 \hookrightarrow \A_1 \hookrightarrow \dots \hookrightarrow \A_n$ we can associate a \emph{Bratteli diagram}.
It is a directed acyclic simple graph $(V,E)$ whose vertices correspond to simple modules of $\A_i$ and edges show how they decompose when restricted to the previous subalgebra $\A_{i-1}$ \cite{Bratteli}.
More specifically, the vertices of the Bratteli diagram are divided into levels $i = 0, \dotsc, n$, i.e., $V = V_0 \sqcup V_1 \sqcup \dots \sqcup V_n$, where $V_i$ denotes the set of vertices at \emph{level}~$i$. Furthermore, at level $0$ the set $V_0 \defeq \set{\0}$ contains only the \emph{root} $\0$, while at levels $i = 1, \dotsc, n$ the set $V_i \defeq \Irr{\A_i}$ contains the labels of all non-isomorphic simple modules of $\A_i$.
We will call the vertices at level $n$ \emph{leaves}.
We draw an edge\footnote{The notion of Bratteli diagram can be extended to sequences of algebras which are not necessarily multiplicity-free. In that case the corresponding graph is not simple anymore, and the number of edges connecting vertices $\lambda$ and $\mu$ is equal to the number of direct summands in $\Res^{\A_{i+1}}_{\A_i} V^{\mu}$ isomorphic to $V^{\lambda}$, i.e., the multiplicity of $V^{\lambda}$ in the restcition of $V^{\mu}$.} $\lambda \rightarrow \mu$ from vertex $\lambda \in \Irr{\A_i}$ to vertex $\mu \in \Irr{\A_{i+1}}$ if and only if the simple module $V^{\lambda}$ corresponding to $\lambda$ is isomorphic to a direct summand in $\Res^{\A_{i+1}}_{\A_i}V^{\mu}$, the restriction of $V^{\mu}$ to the subalgebra $\A_i$, see \cref{fig:S4}.

\begin{figure}[!ht]
\begin{tikzpicture}[> = latex,
 MT/.style = {draw = blue!40, line width = 3pt},
  cut/.style = {thin, dashed, red, rounded corners = 12pt, fill = red, fill opacity = 0.12},
  every node/.style = {inner sep = 3pt, anchor = west}]
  \def\W{2.0cm}
  \def\H{1cm}
  % Labels
  \foreach \i in {0,1,2,3,4} {
    \node at (\i*\W,2.9*\H) {$\CS_\i$};
  }
  % Level 0
  \node (0)    at (0*\W, 0*\H) {$\0$};
  % Level 1
  \node (1)    at (1*\W, 0*\H) {\yd{1}};
  % Level 2
  \node (2)    at (2*\W, 1*\H) {\yd{2}};
  \node (11)   at (2*\W,-1*\H) {\yd{1,1}};
  % Level 3
  \node (3)    at (3*\W, 2*\H) {\yd{3}};
  \node (21)   at (3*\W, 0*\H) {\yd{2,1}};
  \node (111)  at (3*\W,-2*\H) {\yd{1,1,1}};
  % Level 4
  \node (4)    at (4*\W, 2.0*\H) {\yd{4}};
  \node (31)   at (4*\W, 1.0*\H) {\yd{3,1}};
  \node (22)   at (4*\W, 0.0*\H) {\yd{2,2}};
  \node (211)  at (4*\W,-1.0*\H) {\yd{2,1,1}};
  \node (1111) at (4*\W,-2.0*\H) {\yd{1,1,1,1}};
  % Edges 0 -> 1
  \draw[->] (0) -- (1);
  % Edges 1 -> 2
  \draw[->] (1) -- (2.190);
  \draw[->] (1) -- (11.170);
  % % Edges 2 -> 3
  \draw[->] (2) -- (3.190);
  \draw[->] (2) -- (21.165);
  \draw[->] (11) -- (21.195);
  \draw[->] (11) -- (111.170);
  % Edges 3 -> 4
  \draw[->] (3) -- (4);
  \draw[->] (3) -- (31);
  \draw[->] (21) -- (31);
  \draw[->] (21) -- (22);
  \draw[->] (21) -- (211);
  \draw[->] (111) -- (211);
  \draw[->] (111) -- (1111);
  \begin{scope}[on background layer]
    % Paths in \M{T}
    \draw[MT] (0) -- (1);
    \draw[MT]  (1) -- (2.190);
    \draw[MT] (2) -- (3.190);
    \draw[MT]  (3) -- (31);
  \end{scope}
\end{tikzpicture}
\caption{\label{fig:S4}The Bratteli diagram for the family of symmetric group algebras
$\CS_0 \hookrightarrow
 \CS_1 \hookrightarrow
 \CS_2 \hookrightarrow
 \CS_3 \hookrightarrow
 \CS_4$,
also known as Young's lattice. The vertices at level $k$ are labelled by Young diagrams $\lambda \vdash k$ corresponding to all non-isomorphic irreducible representations of $\CS_k$. We have highlighted the path corresponding to sequence~\eqref{eq:YTex} and terminating at leaf $\lambda=(3,1)$.}
\end{figure}
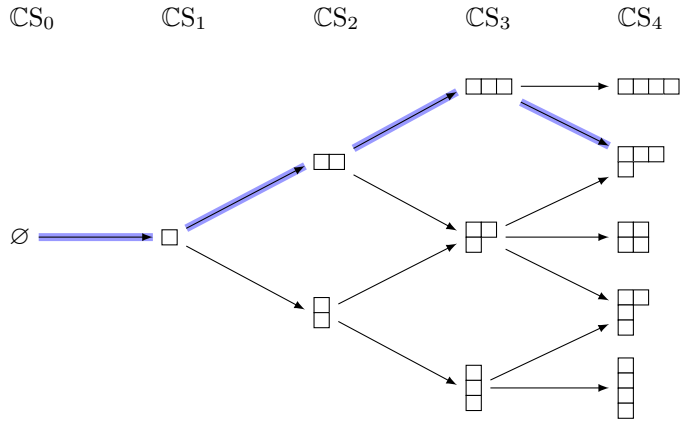

Paths\footnote{By a ``path'' we always mean a \emph{directed path}, i.e., a path that traverses edges only in the allowed direction (from lower to higher levels).} in the Bratteli diagram play an important role in the representation theory of the corresponding algebras, so we will introduce some notation for them.
If $\lambda$ and $\mu$ are two vertices at levels $i < j$ of the Bratteli diagram, we will denote the set of all paths from $\lambda$ to $\mu$ by
\begin{equation}
    \Paths_{i,j} (\lambda,\mu).
\end{equation}
We will denote the set of all paths starting at the root $\0$ and terminating at vertex $\lambda$ at level $j$ by
\begin{equation}
    \Paths_j (\lambda) \defeq \Paths_{0,j} \of[\big]{\0,\lambda}.
\end{equation}
When $j = n$, i.e., $\lambda$ is a leaf, we will abbreviate this to
\begin{equation}
    \Paths(\lambda) \defeq \Paths_n (\lambda).
\end{equation}
An arbitrary path $T = (T^0, \dotsc,T^n)$ in the Bratteli diagram can be \emph{decomposed} at level $i \in \set{0,\dotsc,n}$ as
$T = T_1 \circ T_2$ where
$T_1 = (T^0, \dotsc, T^i)$ and
$T_2 = (T^i, \dotsc, T^n)$
belong to
$\Paths_i(T^i)$ and
$\Paths_{i,n}(T^i,T^n)$, respectively.

An irreducible representation $\lambda \in \Irr{\A_n}$ of $\A_n$ corresponds to a leaf $\lambda$ of the Bratteli diagram.
The basis for the representation space $V^\lambda$ of $\lambda$ is indexed by all paths from the root $\0$ to $\lambda$. We call this the \emph{Gelfand--Tsetlin basis} for the multiplicity-free family $\A_0 \hookrightarrow \A_1 \hookrightarrow \dots \hookrightarrow \A_n$.
Such basis can be obtained by choosing any leaf $\lambda \in \Irr{\A_n}$ and considering the restriction $\smash{\Res^{\A_n}_{\A_{n-1}} V^\lambda}$ of the corresponding simple $\A_n$-module $V^\lambda$ to $\A_{n-1}$, which is multiplicity-free.
This restriction can then be iterated further along any path in the Bratteli diagram towards the root $\0$ that corresponds to the one-dimensional algebra $\A_0 \cong \C$.
Doing this along all $\Paths(\lambda)$ connecting $\0$ and $\lambda$ results in a decomposition of the chosen simple $\A_n$-module $V^\lambda$ into one-dimensional simple $\A_0$-modules.
Repeating this procedure for all leaves $\lambda \in \Irr{\A_n}$ produces the full \emph{Gelfand--Tsetlin basis} of $\bigoplus_{\lambda \in \Irr{\A_n}} V^\lambda$, which consists of all paths in the Bratteli diagram.

As an example, consider the sequence of symmetric group algebras
\begin{equation}
\label{eq:symG}
\CS_0 \hookrightarrow \CS_1 \hookrightarrow\cdots\hookrightarrow
\CS_p .
\end{equation}
It is well-known that irreducible representations $\Irr{\CS_i}$ of the symmetric group algebra $\CS_i$ (or equivalently the symmetric group $\S_i$ itself) are in one-to-one correspondence with Young diagrams $\lambda \vdash i$, and that the sequence~\eqref{eq:symG} of subalgebras is multiplicity-free.
Hence, the vertices of the corresponding Bratteli diagram are labelled by Young diagrams $\lambda \vdash i$.
Moreover, two vertices $\mu \vdash (i-1)$ and $\lambda \vdash i$ are connected if $\mu$ is a sub-diagram of $\lambda$ obtained by removing one box, see \cref{fig:S4}.
As a consequence, the sequence \eqref{eq:symG} determines the Gelfand--Tsetlin basis of the module corresponding to any irreducible representation $\lambda \in \Irr{\CS_p}$, also known as the \emph{Young--Yamanouchi basis} for symmetric group.
Vectors of this basis are labelled by paths $T \in \Paths(\lambda)$ ending at vertex $\lambda$ in the Bratteli diagram. Notice that such paths are in one-to-one correspondence with the elements of $\SYT(\lambda)$ since the $i$-th edge in the path indicates where the box containing $i$ appears in the standard tableau.
This implies that the dimension of irrep $\lambda \in \Irr{\CS_p}$ is equal to $\abs{\Paths(\lambda)} = \abs{\SYT(\lambda)}$, i.e., the number of standard Young tableaux of shape $\lambda$.
Recall that $\S_p = \langle \sigma_1, \dotsc, \sigma_{p-1} \rangle$, i.e., $\S_p$ is generated by transpositions $\sigma_i = (i,i+1)$ that exchange two consecutive elements $i$ and $i+1$.
The action of these generators has a remarkably simple form in the Young--Yamanouchi basis.
Indeed, consider a basis vector $\ket{T} \in V^\lambda$ where $T \in \SYT(\lambda)$ in the representation space $V^\lambda$ of irrep $\lambda$.
Then $\sigma_i$ acts on $\ket{T}$ in the following way:
\begin{equation}
    \label{eq:SymGroupYY}
    \sigma_i \, \ket{T} =
    \frac{1}{r_i(T)} \, \ket{T} + \sqrt{1-\frac{1}{r_i(T)^2}} \; \ket{\sigma_i T}
\end{equation}
where $r_i(T)$ is the axial distance between cells containing $i$ and $i+1$ in the Young tableau $T$, see \cref{eq:distance}, and $\sigma_i T \in \SYT(\lambda)$ is the tableau $T$ with numbers $i$ and $i+1$ exchanged, see \cref{sec:tableaux}.
Note that $r_i(T) = \pm 1$ whenever $\sigma_i T$ is not a standard Young tableau, hence the coefficient in front of $\ket{\sigma_i T}$ vanishes.
Our main technical result (see \cref{thm:main}), generalizes \cref{eq:SymGroupYY} to the matrix algebra $\A^d_{p,q}$ of partially transposed permutations.

%%%%%%%%%%%%%%%%%%%%%%%%%%%%%%%%%%%%%%%%%%%%%%%%%%%%%%%%%%%%%%%%%%%%%%%%%%%%%%%%%%%%%%%%%%%%%%%%%%%%%%%%%%%%%%%%%%%%%%%%%%%%%%%%%%%%%%%%%%%%%%%%%%%%%%%%%%%%%%%

%%%%%%%%%%%%%%%%%%%%%%%%%%%%%%%%%%%%%%%%%%%%%%%%%%%%%%%%%%%%%%%%%%%%%%%%%%%%%%%%%%%%%%%%%%%%%%%%%%%%%%%%%%%%%%%%%%%%%%%%%%%%%%%%%%%%%%%%%%%%%%%%%%%%%%%%%%%%%%%

\subsection{Gelfand--Tsetlin patterns}
\label{sec:GT}

A \emph{Gelfand--Tsetlin pattern} $M$ of shape $\lambda$ and length $d$ is represented by a triangular table with $d$ rows and $i$ integers in the $i$-th row (when counted from the bottom):
\begin{equation}
    \label{eq:GTpattern}
    M =
    \begin{bmatrix}
        m_{1,d} && m_{2,d} && \cdots && m_{d-1, d} && m_{d,d} \\
        &m_{1,d-1}&&&\cdots&&&m_{d-1,d-1}&\\
        &&\ddots&&\vdots&&\reflectbox{$\ddots$}&&\\
        &&& m_{1,2}&&m_{2,2}&&&\\
        &&&& m_{1,1}&&&&
    \end{bmatrix}
    =
    \begin{bmatrix}
        \m_d \\
        \m_{d-1} \\
        \vdots \\
        \m_2 \\
        \m_1
    \end{bmatrix},
\end{equation}
where $\m_j \defeq (m_{1,j}, \dotsc, m_{j,j})$ denotes the $j$-th row of $M$.
The entries $m_{i,j}$ of $M$ are subject to the following constraints: the top row of $M$ is equal to $\lambda$, i.e.,
$\m_d = (m_{1,d}, \dotsc, m_{d,d}) = (\lambda_1, \dotsc, \lambda_d)$, and all entries $m_{i,j}$ satisfy the so-called \emph{interlacing} or \emph{in-betweenness} condition:
\begin{equation}
    \label{eq:GTpatterns-relations}
    m_{i,j} \geq m_{i,j-1} \geq m_{i+1,j}
    \quad \text{for all} \quad
    1 \leq i < j \leq d.
\end{equation}
We will write $\m_{j-1} \squb \m_{j}$ as a shorthand for the interlacing relations \eqref{eq:GTpatterns-relations} between vectors $\m_{j-1}$ and $\m_{j}$.
We will denote the set of all Gelfand--Tsetlin patterns of shape $\lambda$ and length $d$ by $\GT(\lambda,d)$.

For any partition $\lambda$ of length $d$, the Gelfand--Tsetlin patterns $\GT(\lambda,d)$ are in one-to-one correspondence with semistandard Young tableaux $\SSYT(\lambda,d)$:
\begin{equation}
    \abs{\GT(\lambda,d)}
    = \abs{\SSYT(\lambda,d)}
    = m_\lambda.
\end{equation}
Indeed, for any tableau $T \in \SSYT(\lambda,d)$ let $m_{i,j}$ be the number of symbols from $[j]$ in the $i$-th row of $T$.
Equivalently, $\m_j \subseteq \lambda$ is the shape of the subtableau of $T$ formed by entries less or equal to $j$.
Then \cref{eq:GTpattern} constitutes the Gelfand--Tsetlin pattern of tableau $T$.
For example, the Gelfand--Tsetlin patterns corresponding to the semistandard Young tableaux in \cref{eq:SSYT} are
\begin{equation}
    \label{eq:GTpatterns}
    M_1 =
    \begin{bmatrix}
        3 &   & 1 \\
          & 3 &
    \end{bmatrix},\qquad
    M_2 =
    \begin{bmatrix}
        3 &   & 1 \\
          & 2 &
    \end{bmatrix},\qquad
    M_3 =
    \begin{bmatrix}
        3 &   & 1 \\
          & 1 &
    \end{bmatrix},
\end{equation}
which are in fact all Gelfand--Tsetlin patterns $\GT(\lambda,d)$ of shape $\lambda = (3,1)$ and length $d = 2$.
Conversely, given any Gelfand--Tsetlin pattern $M$, the corresponding semistandard tableau $T$ has shape $\lambda = \m_1$ given by the first row of $M$, while the filling of the $i$-th row of $T$ can be deduced from the $i$-th diagonal of $M$.
Indeed, $m_{i,j} - m_{i,j-1}$ is the number of $j$'s in the $i$-th row of $T$.

The \emph{weight} of Gelfand--Tsetlin pattern $M$ consists of differences of consecutive row sums:
\begin{equation}
    \label{eq:weightOfGTpattern}
    w(M) \defeq
    \of[\big]{
        w(\m_{1}),
        w(\m_{2}) - w(\m_{1}),
        \dotsc,
        w(\m_{d}) - w(\m_{d-1})
    }
    \quad \text{where} \quad
    w(\m_{j}) \defeq \sum_{i=1}^{j} m_{i,j}.
\end{equation}
Notice that the weight of a Gelfand--Tsetlin pattern coincides with the weight \eqref{eq:w(M)} of the corresponding semistandard Young tableaux.
For example, the patterns \eqref{eq:GTpatterns} have weights $(3,1)$, $(2,2)$, and $(1,3)$, respectively.

\subsection{Mixed Young diagrams and tableaux}\label{sec:mixed diagrams}

In this section, we define combinatorial notions analogous to Young diagrams and tableaux that will be used in \cref{sec:mixed Schur} to describe mixed Schur--Weyl duality and mixed quantum Schur transform.
First, let us first describe three equivalent ways of representing a pair of Young diagrams.

A \emph{mixed Young diagram} of length $d$ is a pair of Young diagrams $\lambda = (\lambda_l,\lambda_r)$ of total length at most $d$: $\ell(\lambda_l) + \ell(\lambda_r) \leq d$.
Equivalently, we can represent $\lambda$ by combining the diagrams $\lambda_l$ and $\lambda_r$ into a single \emph{staircase} $\widetilde{\lambda} \in \mathbb{Z}^d$ obtained by subtracting from
$\lambda_l = (\lambda_{l,1}, \dotsc, \lambda_{l,d})$
the reverse of
$\lambda_r = (\lambda_{r,1}, \dotsc, \lambda_{r,d})$
\cite{stembridge1987rational}:
\begin{equation}
    \label{eq:GTpatternOfAPair}
    \widetilde{\lambda} \defeq
    \of[\big]{
        \lambda_{l,1} - \lambda_{r,d},\,
        \lambda_{l,2} - \lambda_{r,d-1},\,
        \dotsc,\,
        \lambda_{l,d} - \lambda_{r,1}
    }.
\end{equation}
Intuitively, the staircase $\widetilde{\lambda}$ corresponds to rotating the diagram $\lambda_r$ by $180$ degrees and attaching it at the bottom of $\lambda_l$ (see \cref{fig:WalledConcatenation}).
Since $\ell(\lambda_l) + \ell(\lambda_r) \leq d$, this operation is reversible and one can easily recover $\lambda_l$ and $\lambda_r$ from the staircase $\widetilde{\lambda}$.
Finally, a third way of representing the same concept is via \emph{walled concatenation} $(\hat{\lambda},s)$ where $s \defeq \lambda_{r,1}$ and $\hat{\lambda}$ is a Young diagram of shape
\begin{equation}
    \hat{\lambda}
    \defeq \lambda_{r,1} + \widetilde{\lambda}
    =
    \of[\big]{
        \lambda_{r,1} + \lambda_{l,1} - \lambda_{r,d},\,
        \lambda_{r,1} + \lambda_{l,2} - \lambda_{r,d-1},\,
        \dotsc,\,
        \lambda_{r,1} + \lambda_{l,d-1} - \lambda_{r,2},\
        \lambda_{l,d}
    }.
\end{equation}
This diagram corresponds to shifting the staircase $\widetilde{\lambda}$ to the right by $\lambda_{r,1}$ boxes so that all its entries become non-negative.
Equivalently, we can obtain $\hat{\lambda}$ by adding $\lambda_{r,1}$ columns of $d$ boxes on the left of $\lambda_l$ and then removing the diagram $\lambda_r$ (rotated by $180$ degrees) from the bottom of these columns
(see \cref{fig:WalledConcatenation}).
This process is reversible, so we can easily convert $(\hat{\lambda},s)$ back into the pair of diagrams $(\lambda_l,\lambda_r)$ or the staircase $\widetilde{\lambda}$.
Mixed Young diagrams, walled concatenations, and staircases are three equivalent ways of thinking about the same combinatorial object, see \cref{fig:WalledConcatenation}.
Throughout the paper we will use the same symbol $\lambda$ to denote either of these three concepts, depending on the convenience in a given context.

%% https://tex.stackexchange.com/questions/55068/is-there-a-tikz-equivalent-to-the-pstricks-ncbar-command
\tikzset{
    ncbar angle/.initial=90,
    ncbar/.style={
        to path=(\tikztostart)
        -- ($(\tikztostart)!#1!\pgfkeysvalueof{/tikz/ncbar angle}:(\tikztotarget)$)
        -- ($(\tikztotarget)!($(\tikztostart)!#1!\pgfkeysvalueof{/tikz/ncbar angle}:(\tikztotarget)$)!\pgfkeysvalueof{/tikz/ncbar angle}:(\tikztostart)$)
        -- (\tikztotarget)
    },
    ncbar/.default=0.5cm,
}

\tikzset{square left brace/.style={ncbar=0.5cm}}
\tikzset{square right brace/.style={ncbar=-0.5cm}}

\tikzset{round left paren/.style={ncbar=0.5cm,out=120,in=-120}}
\tikzset{round right paren/.style={ncbar=0.5cm,out=60,in=-60}}

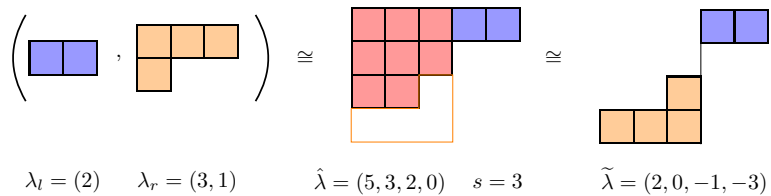
\begin{figure}[!ht]

\resizebox{0.6\textwidth}{!}{
\begin{tikzpicture}[inner sep = 0in, outer sep = 0in]
    \begin{scope}
        \draw [black, thick] (-0.6,-0.7) to [round left paren ] (-0.6,0.7);
        \draw [black, thick] ( 3.1,-0.7) to [round right paren] ( 3.1,0.7);
        \node (n) {\begin{varwidth}{5cm}{
        \begin{ytableau}
            *(blue!40) & *(blue!40)  \\
        \end{ytableau}}\end{varwidth}};
        \node at (0,-2) {$\lambda_l=(2)$};
        \begin{scope}[shift={(2,0)}]
            \node (n) {\begin{varwidth}{5cm}{
            \begin{ytableau}
                *(orange!40) & *(orange!40)& *(orange!40)  \\
                *(orange!40)
            \end{ytableau}}\end{varwidth}};
            \node at (0,-2) {$\lambda_r=(3,1)$};
        \end{scope}
        \node[very thick] (A) at (3.9,0) {$\cong$};
        \node[very thick] (A) at (0.9,0) {,};
    \end{scope}
    % $\quad$
    \begin{scope}[xshift = 6cm]%[inner sep=0in,outer sep=0in]
        \node (n) {\begin{varwidth}{5cm}{
        \begin{ytableau}
            *(red!40) & *(red!40) & *(red!40)  &*(blue!40)   &*(blue!40) \\
            *(red!40)&*(red!40)  & *(red!40)  \\
            *(red!40)& *(red!40)    \\
        \end{ytableau}}\end{varwidth}};
        \draw[orange] (n.south west)--++(0,-.5*1.09)--++(1.5*1.09,0)--++(0,1*1.09)--++(-.5*1.09,0)--++(0,-.5*1.09)--++(-1*1.09,0);
        \node at (-0.9,-2) {$\hat{\lambda}=(5,3,2,0)$};
        \node at ( 1.0,-2) {$s = 3$};
        \node[very thick] (A) at (1.9,0) {$\cong$};
    \end{scope}
    \begin{scope}[xshift = 10cm]%[inner sep=0in,outer sep=0in]
        \node (n) at (0,-0.3cm) {\begin{varwidth}{5cm}{
        \begin{ytableau}
            \none & \none & \none & *(blue!40) & *(blue!40) \\
            \none & \none & \none \\
            \none & \none & *(orange!40) \\
            *(orange!40) & *(orange!40) & *(orange!40)
        \end{ytableau}}\end{varwidth}};
        \draw (n.south east)++(-1.09,-0.01) -- ++(0,2*1.085);
        \node at (0,-2) {$\widetilde{\lambda}=(2,0,-1,-3)$};
    \end{scope}
\end{tikzpicture}
}
\caption{Three equivalent ways of representing a pair of Young diagrams:
as a mixed Young diagram
$\lambda = (\lambda_l,\lambda_r)$,
as a staircase
$\widetilde{\lambda}$,
or as a walled concatenation
$(\hat{\lambda},\lambda_{r,1})$.
Here the total length of all tableaux is $d=4$.}
\label{fig:WalledConcatenation}
\end{figure}

Gelfand--Tsetlin patterns \eqref{eq:GTpattern} generalize straightforwardly to mixed Young diagrams simply by allowing the entries $m_{i,j}$ of the pattern to be negative.
In particular, the first row $\m_{d} = (m_{1,d},\dotsc,m_{d,d})$ of a Gelfand--Tsetlin pattern does not have to be a partition but may also be a staircase.
Formally, for any mixed Young diagram $\lambda$ we let
$\GT(\lambda,d) \defeq \GT(\widetilde{\lambda},d)$,
i.e., Gelfand--Tsetlin patterns of shape $\lambda$ are defined as those of the corresponding staircase $\widetilde{\lambda}$.\footnote{This is a slight abuse of notation, as $\lambda$ in $\GT(\lambda,d)$ may be either a single Young diagram or a pair of Young diagrams. However, it will always be clear from context what kind of object $\lambda$ is. Furthermore, both notions agree when the right diagram in a pair of diagrams is empty: $\GT((\lambda_l,\0),d)=\GT(\lambda_l,d)$ for any Young diagram $\lambda_l$. Alternatively, one can always think of $\lambda$ as a staircase, in which case no ambiguity arises.}
Equivalently, one can replace the staircase $\widetilde{\lambda}$ by the walled concatenation $\hat{\lambda}$ and consider the set $\GT(\hat{\lambda},d)$ instead.
Indeed, there is a simple bijection between the sets $\GT(\widetilde{\lambda},d)$ and $\GT(\hat{\lambda},d)$:
for any $M \in \GT(\widetilde{\lambda},d)$ one can construct the corresponding $M' \in \GT(\hat{\lambda},d)$ by subtracting $\widetilde{\lambda}_d = m_{d,d}$ from each entry $m_{i,j}$ of the pattern $M$.
In particular, this implies that
\begin{equation}
    \abs{\GT(\widetilde{\lambda},d)}
    = \abs{\GT(\hat{\lambda},d)}
    = m_{\hat{\lambda}},
\end{equation}
where $m_{\hat{\lambda}}$ can be computed using \cref{eq:hook content formula}.

Recall from \cref{sec:GT} that for any Young diagram $\lambda$, the Gelfand--Tsetlin patterns $\GT(\lambda,d)$ are in one-to-one correspondence with semistandard Young tableaux $\SSYT(\lambda,d)$.
Similarly, for any staircase $\widetilde{\lambda}$, one can interpret $\GT(\widetilde{\lambda},d)$ as \emph{mixed semistandard Young tableaux}.

Recall from \cref{eq:SYTsequence} that a sequence of Young diagrams can be interpreted as a standard Young tableau.
Similarly, for any shape
$\lambda = (\lambda_l,\lambda_r)$, where $\lambda_l \pt p - k$ and $\lambda_r \pt q-k$
for some $p,q \geq 0$ and $k \geq 0$
such that $0 \leq k \leq \min(p,q)$, we define a \emph{mixed standard Young tableau} $T$ of length $d$ as a sequence $(T^0,T^1,T^2,\dotsc,T^{p+q})$ of mixed Young diagrams such that each $T^i$ has length $d$ and
\begin{enumerate}
    \item $T^0 \defeq (\0,\0)$ and $T^{p+q} \defeq \lambda = (\lambda_l, \lambda_r)$,
    \item if $p \geq i \geq 1$ then $T^{i-1}_l \subseteq T^{i}_l$ with $\abs{T^{i-1}_l} + 1 = \abs{T^{i}_l}$ and $T^{i-1}_r = T^{i}_r = \0$,
    \item if $p+q \geq i > p$ then either
    \begin{enumerate}
        \item $T^{i}_l \subseteq T^{i-1}_l$ with $\abs{T^{i}_l} + 1 = \abs{T^{i-1}_l}$ and $T^{i-1}_r = T^{i}_r$, or
        \item $T^{i-1}_r \subseteq T^{i}_r$ with $\abs{T^{i-1}_r} + 1 = \abs{T^{i}_r}$ and $T^{i-1}_l = T^{i}_l$.
    \end{enumerate}
\end{enumerate}
Such sequence corresponds to a path in a certain Bratteli diagram (see \cref{sec:rep_theory_A} for more details).

Similar to \cref{eq:SYT,eq:YTex}, we can translate a sequence $T$ of mixed Young diagrams into what is essentially a pair of standard tableaux, thus justifying calling $T$ a \emph{mixed standard Young tableau}.
The first $p$ steps of this translation build up the left tableau following the same procedure as described in \cref{sec:tableaux}.
The remaining steps either build up the right tableau in the same way or add secondary entries to the left tableau indicating at which steps the corresponding boxes are removed.
For example, when $p = 3$, $q = 2$, and $k = 1$, the sequence
\begin{equation}
    T = \of[\big]{\br{\0,\0},\br{\yd{1},\0},\br{\yd{2},\0},\br{\yd{3},\0},\br{\yd{2},\0},\br{\yd{2},\yd{1}}}
    \label{eq:T}
\end{equation}
corresponds to the following pair of tableau:
\begin{equation}
    T = \of*{\,
    \ytableausetup{centertableaux}
    \begin{ytableau}
        \scriptstyle 1 & \scriptstyle 2 & \scriptstyle 3,4
    \end{ytableau}
    \,,\,
    \ytableausetup{centertableaux}
    \begin{ytableau}
        \scriptstyle 5
    \end{ytableau}\,
    }.
    \label{example:mixed_tab}
\end{equation}
This is a mixed Young tableau of shape $\lambda = ((2),(1))$ since the left tableau has only two boxes remaining.
Similar to \cref{sec:tableaux}, for any permutation $\pi \in \S_p \times \S_q$ we will write $\pi T$ to denote the tableau obtained by permuting the cell fillings of $T$ according to $\pi$.

%%%%%%%%%%%%%%%%%%%%%%%%%%%%%%%%%%%%%%%%%%%%%%%%%%%%%%%%%%%%%%%%%%%%%%%%%%%%%%%%%%%%%%%%%%%%%%%%%%%%%%%%%%%%%%%%%%%%%%%%%%%%%%%%%%%%%%%%%%%%%%%%%%%%%%%%%%%%%%%
%%%%%%%%%%%%%%%%%%%%%%%%%%%%%%%%%%%%%%%%%%%%%%%%%%%%%%%%%%%%%%%%%%%%%%%%%%%%%%%%%%%%%%%%%%%%%%%%%%%%%%%%%%%%%%%%%%%%%%%%%%%%%%%%%%%%%%%%%%%%%%%%%%%%%%%%%%%%%%%

\section{Representation theory of partially transposed permutation matrix algebra}\label{sec:repThApqd}

\subsection{Walled Brauer algebra $\B^d_{p,q}$}\label{sec:Bpq}

Let $p,q \geq 0$ be integers and $d \in \C$ arbitrary\footnote{We will later require $d \geq 2$ to be an integer as well.}.
The \emph{walled Brauer algebra} $\B^d_{p,q}$ consists of formal complex linear combinations of diagrams, where each diagram has two rows of $p+q$ nodes each and a vertical \emph{wall} between the first $p$ and the last $q$ nodes \cite{turaev1989operator,koike1989decomposition,bchlls,Benkart,nikitin2007centralizer,Bulgakova}.
All nodes are connected in pairs, and any two connected nodes are either on the same side of the wall and in different rows, or the other way around.
For example, the following diagram
\begin{equation}
  \begin{tikzpicture}[baseline = 1cm]
    \BrauerTikZStyle
    % p and q
    \curlybrace{2.0}{1.4}{1.0} \node at (2.0,1.9) {$p=3$};
    \curlybrace{4.5}{1.4}{0.5} \node at (4.5,1.9) {$q=2$};
    % Nodes
    \foreach \i in {1,...,5} {
      \fill (\i,1) circle [dot] coordinate (A\i);
      \fill (\i,0) circle [dot] coordinate (B\i);
    }
    % Edges
    \draw (A1) to [l] (B1);
    \draw (A2) to [l] (B3);
    \draw (A4) to [l] (B5);
    \draw (A3) to [u] (A5);
    \draw (B2) to [n] (B4);
    % Wall
    \draw [dashed] (3.5,1.2) -- (3.5,-0.2);
  \end{tikzpicture}
\end{equation}
belongs to $\B^d_{3,2}$.
The addition in $\B^d_{p,q}$ is defined by simply adding the respective coefficients in the two formal linear combinations.
Multiplication of two diagrams corresponds to their \emph{concatenation} by identifying the bottom row of the first diagram with the top row of the second diagram.
Any loops that may have appeared in this process are erased and the resulting diagram is multiplied by the scalar $d^{\#\loops}$:
\begin{equation}
  \begin{tikzpicture}[baseline = 1cm]
    \BrauerTikZStyle

    \begin{scope}
      % Labels
      \node[anchor = east] at (0.9,1.5) {$\rho =$};
      \node[anchor = east] at (0.9,0.5) {$\sigma =$};
      % Nodes
      \foreach \i in {1,...,5} {
        \fill (\i,2) circle [dot] coordinate (A\i);
        \fill (\i,1) circle [dot] coordinate (B\i);
        \fill (\i,0) circle [dot] coordinate (C\i);
      }
      % Edges 1
      \draw (A1) to [l] (B1);
      \draw (A2) to [l] (B3);
      \draw (A4) to [l] (B5);
      \draw (A3) to [u] (A5);
      \draw (B2) to [n] (B4);
      % Edges 2
      \draw (B1) to [l] (C3);
      \draw (B3) to [l] (C1);
      \draw (B5) to [l] (C5);
      \draw (B2) to [u] (B4);
      \draw (C2) to [n] (C4);
      % Wall
      \draw [dashed] (3.5,2.2) -- (3.5,-0.2);
    \end{scope}

    \begin{scope}[xshift = 6.6cm, yshift = 0.5cm]
      % Labels
      \node[anchor = east] at (0.9,0.5) {$\rho \, \sigma = d \;\cdot$};
      % Nodes
      \foreach \i in {1,...,5} {
        \fill (\i,1) circle [dot] coordinate (D\i);
        \fill (\i,0) circle [dot] coordinate (E\i);
      }
      % Edges
      \draw (D1) to [l] (E3);
      \draw (D2) to [l] (E1);
      \draw (D4) to [l] (E5);
      \draw (D3) to [u] (D5);
      \draw (E2) to [n] (E4);
      % Wall
      \draw [dashed] (3.5,1.2) -- (3.5,-0.2);
    \end{scope}

  \end{tikzpicture}
\end{equation}
where parameter $d$ of a walled Brauer algebra $\B^d_{p,q}$ explicitly appear.
Multiplication of diagrams extends by linearity to multiplication in a walled Brauer algebra $\B^d_{p,q}$.

Note that the group algebra of the permutation group $\S_p \times \S_q$
forms a subalgebra of the walled Brauer algebra $\B^d_{p,q}$
consisting only of those diagrams where no edge goes across the wall.
In fact, the two algebras are isomorphic when $p = 0$ or $q = 0$, i.e.,
$\CS_p \cong \B^d_{p,0} \cong \B^d_{0,p}$ for any value of $d$.
The walled Brauer algebra $\B^d_{p,q}$ itself is a subalgebra of the so-called full \emph{Brauer algebra} $\B^d_{p+q}$ that is defined similarly but without the wall and with no restrictions on which pairs of nodes can be connected \cite{Brauer}.
This algebra was originally introduced by Brauer \cite{Brauer} for studying Schur--Weyl-like dualities of orthogonal and symplectic groups.

The walled Brauer algebra $\B^d_{p,q}$ is generated by \emph{transpositions} $\sigma_i$ that swap the $i$-th and $(i+1)$-th node of the two rows,
where $i \in \set{1, \dotsc, p+q-1} \setminus \set{p}$,
and a \emph{contraction} $\sigma_p$ between the $p$-th and $(p+1)$-th node within each row.
For example, $\B^d_{2,2}$ is generated by
\begin{equation}
  \begin{tikzpicture}[baseline = 0.5cm, scale = 0.8]
    \BrauerTikZStyle
    \tikzset{looseness = 1}

    \begin{scope}[yshift = 1.7cm]
      \node[anchor = east] at (0.9,0.5) {$\sigma_1 =$};
      % Nodes
      \foreach \i in {1,...,4} {
        \fill (\i,1) circle [dot] coordinate (A\i);
        \fill (\i,0) circle [dot] coordinate (B\i);
      }
      % Edges
      \draw (A1) to [l] (B2);
      \draw (A2) to [l] (B1);
      \draw (A3) to [l] (B3);
      \draw (A4) to [l] (B4);
      % Wall
      \draw [dashed] (2.5,1.2) -- (2.5,-0.2);
    \end{scope}

    \begin{scope}
      \node[anchor = east] at (0.9,0.5) {$\sigma_2 =$};
      % Nodes
      \foreach \i in {1,...,4} {
        \fill (\i,1) circle [dot] coordinate (C\i);
        \fill (\i,0) circle [dot] coordinate (D\i);
      }
      % Edges
      \draw (C1) to [l] (D1);
      \draw (C2) to [u] (C3);
      \draw (D2) to [n] (D3);
      \draw (C4) to [l] (D4);
      % Wall
      \draw [dashed] (2.5,1.2) -- (2.5,-0.2);
    \end{scope}

    \begin{scope}[yshift = -1.7cm]
      \node[anchor = east] at (0.9,0.5) {$\sigma_3 =$};
      % Nodes
      \foreach \i in {1,...,4} {
        \fill (\i,1) circle [dot] coordinate (E\i);
        \fill (\i,0) circle [dot] coordinate (F\i);
      }
      % Edges
      \draw (E1) to [l] (F1);
      \draw (E2) to [l] (F2);
      \draw (E3) to [l] (F4);
      \draw (E4) to [l] (F3);
      % Wall
      \draw [dashed] (2.5,1.2) -- (2.5,-0.2);
    \end{scope}

  \end{tikzpicture}
  \label{eq:generators}
\end{equation}
One can also define $\B^d_{p,q}$ abstractly in terms of relations between its generators \cite{Bulgakova,brundan2012gradings,nikitin2007centralizer}.

\begin{restatable}{definition}{brauerdef}\label{def:Brauer}
Let $p,q \geq 0$ be integers and $d \in \C$.
The \emph{walled Brauer algebra} $\B^d_{p,q}$ is a finite-dimensional associative algebra over $\C$ generated by $\sigma_1, \dotsc, \sigma_{p+q-1}$ subject to the following relations:
\begin{align}
  & \mathrm{(a)} \enskip \sigma_i^2 = 1 \enskip (i \neq p), &
  & \mathrm{(b)} \enskip \sigma_i \sigma_{i+1} \sigma_i = \sigma_{i+1} \sigma_i \sigma_{i+1} \enskip (i \neq p-1,p), &
  & \mathrm{(c)} \enskip \sigma_i \sigma_j = \sigma_j \sigma_i \enskip (|i-j| > 1),
  \label{rel:transpositions} \\
  & \mathrm{(d)} \enskip \sigma_p^2 = d \sigma_p, &
  & \mathrm{(e)} \enskip \sigma_p \sigma_{p \pm 1} \sigma_p = \sigma_p, &
  & \mathrm{(f)} \enskip \sigma_p \sigma_i = \sigma_i \sigma_p \enskip (i \neq p \pm 1),
  \label{rel:contraction1} \\
  & \mathrm{(g)} \enskip \sigma_p \sigma_{p+1} \sigma_{p-1} \sigma_p \sigma_{p-1} = \sigma_p \sigma_{p+1} \sigma_{p-1} \sigma_p \sigma_{p+1},
  \hspace{-6cm}
  \label{rel:contraction2} \\
  & \mathrm{(h)} \enskip \sigma_{p-1} \sigma_p \sigma_{p+1} \sigma_{p-1} \sigma_p = \sigma_{p+1} \sigma_p \sigma_{p+1} \sigma_{p-1} \sigma_p.
  \hspace{-6cm}
  \label{rel:contraction3}
\end{align}
\end{restatable}

We can embed the walled Brauer algebra
$\B_{i,0}^d$ into
$\B_{i+1,0}^d$
by making adding a new pair of nodes on the left-hand side of the wall, which are connected by a vertical line.
Similarly, we can embed
$\B_{p,j}^d$ into
$\B_{p,j+1}^d$
by adding a new pair of nodes on the right-hand side of the wall.
This produces the following sequence of inclusions among walled Brauer algebras:
\begin{equation}
    \label{eq:B inclusions}
    \B_{0,0}^d \hookrightarrow
    \B_{1,0}^d \hookrightarrow \cdots \hookrightarrow \B_{p,0}^d \hookrightarrow
    \B_{p,1}^d \hookrightarrow \cdots \hookrightarrow
    \B_{p,q}^d.
\end{equation}
This is a multiplicity-free family when the underlying algebras are semisimple.

There exist special elements inside $\B_{p,q}^d$ called \emph{Jucys--Murphy elements}, which are given by
\begin{equation}
  J_k \defeq
  \begin{cases}
    0 & \text {if } k = 1, \\
    \sum_{i=1}^{k-1} (i,k) & \text{if } 2 \leq k \leq p, \\
    \sum_{i=p+1}^{k-1} (i,k) - \sum_{i=1}^p \overline{(i,k)} + d & \text{if } p+1 \leq k \leq p+q,
  \end{cases}
  \label{eq:JM for B}
\end{equation}
where $(i,k)$ is the transposition of elements $i$ and $k$, and $\overline{(i,k)} \defeq (i,p-1) (p,k) \sigma_p (i,p-1) (p,k)$ is the contraction between $i$ and $k$.
These elements are important in the representation theory of $\B_{p,q}^d$ \cite{brundan2012gradings,sartori2015walled,jung2020supersymmetric} and the related matrix algebra $\A^d_{p,q}$ \cite{grinko2022linear} defined in the next section.

\subsection{Matrix algebra $\A^d_{p,q}$ of partially transposed permutations}\label{sec:Apq}

Let us fix a \emph{local dimension} $d \geq 2$.
Let $V_d^p \defeq (\mathbb{C}^d)^{\otimes p}$ denote the tensor product space of $p$ qudits, and let $V_d^{p,q} \defeq V_d^p \otimes (V_d^q)^\ast$.
The walled Brauer algebra has a natural matrix representation $\psi^d_{p,q}\colon \B^d_{p,q} \to \End(V_d^{p,q})$ in which the generators $\sigma_i,\dotsc,\sigma_{p+q-1} \in \B^d_{p,q}$ act on strings $x_1,\dotsc,x_{p+q} \in [d]$ as follows:
\begin{equation}
    \label{eq:Brauer action}
    \psi^d_{p,q}(\sigma_i) \, \ket{x_1,\dotsc,x_{p+q}}
    \defeq
    \begin{cases}
        \ket{x_1,\dotsc,x_{i+1},x_i,\dotsc,x_{p+q}}, &
        \text{$i \neq p$}, \\
        \ket{x_1,\dotsc,x_{p-1}}
        \x
        \delta_{x_p,x_{p+1}}
        \sum_{k=1}^d \ket{k,k}
        \x
        \ket{x_{p+2},\dotsc,x_{p+q}}, &
        \text{$i = p$}.
    \end{cases}
\end{equation}
One can check that $\psi^d_{p,q}$ preserves all relations in \cref{def:Brauer}.
By multiplying the generators and using linearity, the action \eqref{eq:Brauer action} can be extended to the entire walled Brauer algebra $\B^d_{p,q}$.

The image of $\psi^d_{p,q}$ is a quotient algebra over the two-sided ideal $\ker(\psi^d_{p,q})$ of $\B^d_{p,q}$:
\begin{equation}
    \label{eq:Apqd}
    \A_{p,q}^d
    \defeq \psi^d_{p,q}(\B_{p,q}^d)
    % \cong \bigslant{\B_{p,q}^d}{\ker(\psi_{p,q}^d)},
    \cong \B_{p,q}^d \,/\, \ker(\psi_{p,q}^d),
\end{equation}
which we call the \emph{matrix algebra of partially transposed permutations}.
It is important to make a distinction between the algebras $\A^d_{p,q}$ and $\B^d_{p,q}$.
When the local dimension $d$ is small, i.e., $d < p + q$, those two algebras are not isomorphic.
As a finite-dimensional star matrix algebra, $\A^d_{p,q}$ is always semisimple
\cite{etingof2011introduction}.
On the other hand, $\B^d_{p,q}$ is not semisimple for integer $d < p + q - 1$ \cite{cox2008blocks}.

As an example, consider the walled Brauer algebras
$\B^d_{2,0}$ and $\B^d_{1,1}$, which are both generated by one generator, $\sigma_1$, which is a transposition in the first case and a contraction in the second.
The image of $\sigma_1$ under the map $\psi^d_{p,q}$ is shown in \cref{tab:TranspositionVSCOntraction}.

\begin{table}[!h]
\centering
\begin{tabular}[t]{c|cc}
\toprule
& Algebra $\B^d_{2,0}$
& Algebra $\B^d_{1,1}$ \\
\midrule
Diagram of the generator
& $\sigma_1 = \diagSWAP$
& $\sigma_1 = \diagContract$  \\
& (transposition) & (contraction) \\
&& \\
Action in $\A^d_{p,q}$
& $\psi^d_{2,0}(\sigma_1)\colon \ket{i} \ket{j} \mapsto \ket{j} \ket{i}$
& $\psi^d_{1,1}(\sigma_1)\colon \ket{i} \ket{j} \mapsto \delta_{i,j} \sum_{k=1}^{d} \ket{k} \ket{k}$ \\
&& \\
Matrix representation when $d=2$ &
  $\psi^2_{2,0}\of{\sigma_1}
  = \mx{
      1 & 0 & 0 & 0 \\
      0 & 0 & 1 & 0 \\
      0 & 1 & 0 & 0 \\
      0 & 0 & 0 & 1
    }$&
  $\psi^2_{1,1}\of{\sigma_1}
  = \mx{
      1 & 0 & 0 & 1 \\
      0 & 0 & 0 & 0 \\
      0 & 0 & 0 & 0 \\
      1 & 0 & 0 & 1
    }$\\
\bottomrule
\end{tabular}
\caption{\label{tab:TranspositionVSCOntraction}Comparison of algebras $\B^d_{2,0}$ and $\B^d_{1,1}$.
They both have a single generator $\sigma_1$ which is a transposition and a contraction, respectively.
The corresponding matrix algebras
$\A^d_{2,0}$ and $\A^d_{1,1}$
are two-dimensional and spanned by the identity matrix and $\psi^d_{p,q}\of{\sigma_1}$.}
\end{table}

\subsection{Representation theory of the algebra $\A^d_{p,q}$}\label{sec:rep_theory_A}

In this section, we describe the representation theory of the matrix algebra $\A^d_{p,q}$ of partially transposed permutations \cite{grinko2022linear}.
For this purpose, we will employ the notions of the Bratteli diagram and Gelfand--Tsetlin basis introduced in \cref{sec:BrattDiag}.

First, similar to the so-called Okounkov--Vershik approach, we consider the following sequence of algebra inclusions:
\begin{equation}
    \label{eq:A inclusions}
    \A_{0,0}^d \hookrightarrow \A_{1,0}^d \hookrightarrow \cdots \hookrightarrow \A_{p,0}^d \hookrightarrow \A_{p,1}^d \hookrightarrow \cdots \hookrightarrow \A_{p,q}^d,
\end{equation}
where each inclusion corresponds to tensoring the previous algebra with the identity matrix $I_d$ on the right or, equivalently, applying $\psi^d_{p,q}$ onto \cref{eq:B inclusions}.
All algebras in this sequence are semisimple, and their inclusions are multiplicity-free \cite{etingof2011introduction}.
For convenience, we will sometimes use the following single-subscript notation:
\begin{equation}
    \label{eq:A inclusions simplified}
    \A_{0}^d \hookrightarrow \A_{1}^d \hookrightarrow \cdots \hookrightarrow \A_{p+q}^d
\end{equation}
where the algebras $\A^d_k$ are defined as follows:
\begin{equation}
  \label{eq:Ak}
  \A^d_k \defeq
  \begin{cases}
    \C & \text{if } k = 0, \\
    \A^d_{k,0} & \text{if } 1 \leq k \leq p, \\
    \A^d_{p,k-p} & \text{if } p+1 \leq k \leq p+q.
  \end{cases}
\end{equation}

The irreducible representations of $\A_{p,q}^d$ are labelled by the following set of mixed Young diagrams $(\lambda_l,\lambda_r)$:
\begin{equation}
    \label{eq:IrrA}
    \Irr{\A_{p,q}^{d}} \defeq
    \set[\Big]{
    \lambda = (\lambda_l,\lambda_r) :
    0 \leq k \leq \min(p,q), \;
    \lambda_l \pt p - k, \;
    \lambda_r \pt q - k, \;
    \ell(\lambda_l)+\ell(\lambda_r) \leq d}.
\end{equation}
The Bratteli diagram corresponding to the sequence \eqref{eq:A inclusions} is as follows \cite{FusionProcedure}.
Its levels are labelled by $0,\dotsc,p+q$.
The vertices at level $s+t$, where $(s,t)$ ranges over subscripts in \cref{eq:A inclusions}, are labelled by $\Irr{\A_{s,t}^{d}}$.
An edge $\lambda \rightarrow \mu$ between $\lambda = (\lambda_l,\lambda_r) \in \Irr{\A_{s,t}^{d}}$ and $\mu = (\mu_l,\mu_r) \in \Irr{\A_{s',t'}^{d}}$ is present if and only if $\lambda$ and $\mu$ are in consecutive levels, i.e., $s' + t' = s + t + 1$, and
\begin{enumerate}
    \item if $s \leq p$, the diagram $\mu$ can be obtained from $\lambda$ by adding one cell to $\lambda_l$,
    \item if $s > p$, the diagram $\mu$ can be obtained from $\lambda$ by either adding one cell to $\lambda_r$ or removing one cell from $\lambda_l$.
\end{enumerate}
This is equivalent to saying that $\Paths(\lambda)$ in this Bratteli diagram correspond precisely to the set of mixed standard Young tableaux of shape $\lambda$ (see \cref{sec:mixed diagrams}).
For example, \cref{fig:A32} shows the Bratteli diagram for the sequence \eqref{eq:A inclusions} ending with algebra $\A^3_{3,2}$.

\begin{figure}
\begin{tikzpicture}[> = latex,
 MT/.style = {draw = blue!40, line width = 3pt},
  every node/.style = {inner sep = 1pt}]
  \def\W{2.0cm}
  \def\H{1cm}
  % Level labels
  \foreach \i/\p/\q in {0/0/0, 0.7/1/0, 1.5/2/0, 2.5/3/0, 3.5/3/1, 4.5/3/2} {
    \node at (\i*\W,3.2*\H) {$\A^3_{\p,\q}$};
  }
  \draw[dashed] (3*\W,3.4*\H) -- (3*\W,-2.7*\H);
  % Level 0
  \node (0!0)   at (0.0*\W, 0.0*\H) {$\br{\0,\0}$};
  % Level 1
  \node (1!0)   at (0.7*\W, 0.0*\H) {$\br{\yd{1},\0}$};
  % Level 2
  \node (2!0)   at (1.5*\W, 1.0*\H) {$\br{\yd{2},\0}$};
  \node (11!0)  at (1.5*\W,-1.0*\H) {$\br{\yd{1,1},\0}$};
  % Level 3
  \node (3!0)   at (2.5*\W, 1.5*\H) {$\br{\yd{3},\0}$};
  \node (21!0)  at (2.5*\W, 0.0*\H) {$\br{\yd{2,1},\0}$};
  \node (111!0) at (2.5*\W,-1.5*\H) {$\br{\yd{1,1,1},\0}$};
  % Level 4
  \node (3!1)   at (3.5*\W, 1.8*\H) {$\br{\yd{3},\yd{1}}$};
  \node (2!0')  at (3.5*\W, 0.5*\H) {$\br{\yd{2},\0}$};
  \node (21!1)  at (3.5*\W,-0.5*\H) {$\br{\yd{2,1},\yd{1}}$};
  \node (11!0') at (3.5*\W,-1.8*\H) {$\br{\yd{1,1},\0}$};
  % Level 5
  \node (3!2)   at (4.5*\W, 2.5*\H) {$\br{\yd{3},\yd{2}}$};
  \node (3!11)  at (4.5*\W, 1.5*\H) {$\br{\yd{3},\yd{1,1}}$};
  \node (2!1)   at (4.5*\W, 0.5*\H) {$\br{\yd{2},\yd{1}}$};
  \node (1!0')  at (4.5*\W,-0.5*\H) {$\br{\yd{1},\0}$};
  \node (21!2)  at (4.5*\W,-1.5*\H) {$\br{\yd{2,1},\yd{2}}$};
  \node (11!1)  at (4.5*\W,-2.5*\H) {$\br{\yd{1,1},\yd{1}}$};
  % Irrep dimensions
  \path (5.1*\W, 2.5*\H) node {$1$};
  \path (5.1*\W, 1.5*\H) node {$1$};
  \path (5.1*\W, 0.5*\H) node {$6$};
  \path (5.1*\W,-0.5*\H) node {$6$};
  \path (5.1*\W,-1.5*\H) node {$2$};
  \path (5.1*\W,-2.5*\H) node {$5$};
  % Edges 0 -> 1
  \draw[->] (0!0) -- (1!0);
  % Edges 1 -> 2
  \draw[->] (1!0.north east) -- (2!0.190);
  \draw[->] (1!0.south east) -- (11!0.170);
  % % Edges 2 -> 3
  \draw[->] (2!0.north east) -- (3!0.190);
  \draw[->] (2!0.south east) -- (21!0.170);
  \draw[->] (11!0.north east) -- (21!0.190);
  \draw[->] (11!0.south east) -- (111!0.170);
  % % Edges 3 -> 4
  \draw[->] (3!0) -- (3!1.180);
  \draw[->] (3!0) -- (2!0'.180);
  \draw[->] (21!0) -- (2!0'.-170);
  \draw[->] (21!0) -- (21!1.180);
  \draw[->] (21!0) -- (11!0'.140);
  \draw[->] (111!0) -- (11!0'.180);
  % % Edges 4 -> 5
  \draw[->] (3!1) -- (3!2);
  \draw[->] (3!1) -- (3!11);
  \draw[->] (3!1) -- (2!1);
  \draw[->] (2!0') -- (2!1);
  \draw[->] (2!0') -- (1!0');
  \draw[->] (21!1) -- (2!1);
  \draw[->] (21!1) -- (21!2);
  \draw[->] (21!1) -- (11!1);
  \draw[->] (11!0') -- (1!0');
  \draw[->] (11!0') -- (11!1);
  % Background layer
  \begin{scope}[on background layer]
      % Paths in \M{T}
      \draw[MT] (0!0) -- (1!0);
      \draw[MT] (1!0.north east) -- (2!0.190);
      \draw[MT] (2!0.north east) -- (3!0.190);
      \draw[MT] (2!0.south east) -- (21!0.170);
      \draw[MT] (3!0) -- (2!0'.180);
      \draw[MT] (21!0) -- (2!0'.-170);
      \draw[MT] (2!0') -- (2!1);
   \end{scope}
\end{tikzpicture}
\caption{The Bratteli diagram associated with the family~\eqref{eq:A inclusions} of partially transposed permutation matrix algebras when $p=3$, $q=2$, and local dimension $d=3$.
For a chosen path $T = \of[\big]{\br{\0,\0},\br{\yd{1},\0},\br{\yd{2},\0},\br{\yd{3},\0},\br{\yd{2},\0},\br{\yd{2},\yd{1}}}$, we have highlighted in blue the set $\M{T}$ of all paths that agree with $T$ everywhere except for level $p$.
For each leaf, we have indicated the number of paths from the root to that leaf, which coincides with the dimension of the corresponding irrep.}
\label{fig:A32}
\end{figure}

For any $\lambda \in \Irr{\A_{p,q}^d}$, we will denote the corresponding irrep by
\begin{equation}
    \psi_\lambda \colon \A_{p,q}^d \rightarrow \End\of[\big]{V_\lambda^{\A_{p,q}^d}}
    \quad\text{where}\quad
    V_\lambda^{\A_{p,q}^d} \defeq \C^{\Paths(\lambda)}.
\end{equation}
Our main technical result (see \cref{thm:main} below) provides an explicit formula for $\psi_\lambda(\sigma_i)$, for any generator $\sigma_i$ of $\A_{p,q}^d$. 
This effectively describes how the matrix algebra $\A_{p,q}^d$ acts on the Gelfand--Tsetlin basis vectors $\ket{T}$, where $T$ is any root-leaf path in the corresponding Bratteli diagram.

Before presenting our formula, we introduce some auxiliary notation.
Consider a path $T=(T^{0},\dotsc,T^{p+q}) \in \Paths(\lambda)$.
Similar to \cref{eq:conti}, we define the \emph{walled content} of a cell containing $i$ in $T$ as
\begin{equation}
    \wcont_i(T) \defeq
    \begin{cases}
        \cont(T^{i}_l \backslash T^{i-1}_l) &\text{if } i \leq p,\\
        \cont(T^{i}_r \backslash T^{i-1}_r)+d &\text{if } i > p \text{ and } T^{i}_l=T^{i-1}_l,\\
        -\cont(T^{i-1}_l \backslash T^{i}_l) &\text{if } i > p \text{ and } T^{i}_r=T^{i-1}_r.
    \end{cases}
\end{equation}
This definition is chosen so that $\wcont_i(T)$ matches the spectrum of Jucys--Murphy elements in \cref{lem:JMaction}.
For example, the path $T$ given in \cref{eq:T,example:mixed_tab} has the following values of walled content:
$\wcont_1(T) = 0$,
$\wcont_2(T) = 1$,
$\wcont_3(T) = 2$,
$\wcont_4(T) = -2$,
$\wcont_5(T) = d$.
We define the \emph{walled axial distance} between cells containing $i$ and $i+1$ in $T \in \Paths(\lambda)$ as
\begin{equation}
    r_i(T) \defeq \wcont_{i+1}(T)-\wcont_{i}(T).
    \label{eq:ri}
\end{equation}
The walled axial distance has a simple combinatorial interpretation. 
Indeed, $r_i(T ) $ is the axial distance between cells containing $i$ and $i + 1$ in a staircase representation $\widetilde{T}$ of a mixed Young diagram $T$, see \cref{fig:WalledConcatenation,example:mixed_tab}.

Furthermore, we denote by
\begin{equation}
    \M{T} \defeq
    \begin{cases}
        \set{
            (T^{0},\dotsc,T^{p-1},\mu,T^{p+1},\dotsc,T^{p+q}) \in \Paths(T^{p+q})
            \mid
            \mu \in \Irr{\A_{p,0}^{d}}
        }
        & \text{if $T^{p-1} = T^{p+1}$}, \\
        \0 & \text{otherwise}
    \end{cases}
    \label{eq:mobile elements}
\end{equation}
the set of all paths in the Bratteli diagram differing from $T$ only at the $p$-th level (see \cref{fig:A32} for an example).
For a given path $T$, we call the cell
\begin{equation}
    a_T \defeq
    \begin{cases}
        T^{p}_l \backslash T^{p-1}_l & \text{if $T^{p-1} = T^{p+1}$}, \\
        \0 & \text{otherwise},
    \end{cases}
\end{equation}
a \emph{mobile element}\footnote{The terms \emph{mobile cell} and \emph{mobile element} are from \cite{semikhatov2017quantum}.},
where $T_l^k$ denotes the left diagram in $T^k = (T^k_l,T^k_r)$.
For a mixed Young diagram $(\lambda_l,\lambda_r)$, we define the sets of \emph{removable/addable cells} based on the left tableau, i.e.,
$\RC((\lambda_l,\lambda_r)) \defeq \RC(\lambda_l)$ and
$\AC((\lambda_l,\lambda_r)) \defeq \AC(\lambda_l)$.
With this notation at hand, we can state our main technical result.

\begin{restatable}[Gelfand--Tsetlin basis for $\A^d_{p,q}$]{theorem}{mainthm}\label{thm:main}
For any $\lambda \in \Irr{\A_{p,q}^{d}}$, the following map
$\psi_\lambda \colon \A_{p,q}^d \rightarrow \End\of[\big]{\C^{\Paths(\lambda)}}$
is an irreducible representation of $\A_{p,q}^d$.
Given a generator $\sigma_i$ of $\A_{p,q}^d$,
$i = 1, \dotsc, p+q-1$,
the matrix $\psi_\lambda(\sigma_i)$ acts on the Gelfand--Tsetlin basis vectors $\ket{T}$ with $T \in \Paths(\lambda)$ as follows:
\begin{align}
    \psi_\lambda(\sigma_i) \, \ket{T}
    &= \frac{1}{r_i(T)} \, \ket{T} + \sqrt{1 - \frac{1}{r_i(T)^2}} \, \ket{\sigma_i T},
    & \text{for $i \neq p$},
    \label{gtbasis:transpositions}
    \\
    \psi_\lambda(\sigma_p) \, \ket{T}
    &= c(T) \, \ket{v_T}, \quad
    \ket{v_T} \defeq \sum_{T^\prime \in \M{T}} c(T^\prime) \, \ket{T^\prime},
    & \text{for $i = p$},
\end{align}
where
$r_i(T)$ is the walled axial distance defined in \cref{eq:ri},
$\sigma_i T$ denotes the mixed standard Young tableau $T$ with cell fillings permuted according to $\sigma_i$ (see \cref{sec:mixed diagrams}),
and the coefficient $c(T) \in \R$ is given by
\begin{equation}
    c(T) \defeq \sqrt{
        \of[\big]{d+\cont(a_T)}
        \frac{
            \prod_{c \in \RC(T^{p-1})}
            \of[\big]{\cont(a_T)-\cont(c)}
        }{
            \prod_{a \in \AC(T^{p-1}) \setminus a_T}
            \of[\big]{\cont(a_T)-\cont(a)}
        }
    }
\end{equation}
where $a_T$ is the mobile element of $T$, and $\RC$/$\AC$ are the sets of removable/addable cells.
\end{restatable}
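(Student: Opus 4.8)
The plan is to treat the displayed formulas as the \emph{definition} of a linear action of the generators $\sigma_1,\dots,\sigma_{p+q-1}$ on $\C^{\Paths(\lambda)}$, for each $\lambda$ in the set \eqref{eq:IrrA}, and then carry out three steps: (i) verify that these matrices satisfy every defining relation of \cref{def:Brauer}, so that $\psi_\lambda$ is a well-defined representation of $\B^d_{p,q}$; (ii) verify that $\psi_\lambda$ is irreducible and factors through the quotient $\A^d_{p,q}=\B^d_{p,q}/\ker\psi^d_{p,q}$; (iii) verify that the $\psi_\lambda$ are pairwise inequivalent and that \eqref{eq:IrrA} is the full set $\Irr{\A^d_{p,q}}$. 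Since $\A^d_{p,q}$ is semisimple with Wedderburn decomposition governed by the Bratteli diagram of the chain \eqref{eq:A inclusions} (recalled in \cref{sec:rep_theory_A}, following \cite{grinko2022linear}), steps (i)--(iii) force $\bigoplus_\lambda\psi_\lambda$ to be an isomorphism onto $\bigoplus_\lambda\End(\C^{\Paths(\lambda)})$, which is exactly the claim. A preliminary sanity check is that $c(T)$ is a well-defined nonnegative real: the factor $d+\cont(a_T)$ is at least $1$ because every diagram occurring in the chain has at most $d$ rows on its two sides combined, and the quotient of the two products over $\RC(T^{p-1})$ and $\AC(T^{p-1})\setminus a_T$ is positive because addable- and removable-cell contents of a Young diagram strictly interlace; when $T^{p-1}\neq T^{p+1}$ one reads $a_T=\0$, $c(T)=0$, $\ket{v_T}=0$.

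\textbf{The transposition relations are inherited.}
For $i\neq p$ the generators $\sigma_i$ generate the image of $\C(\S_p\times\S_q)$, and for such $i$ the walled axial distance $r_i(T)$ of \eqref{eq:ri} is simply the ordinary axial distance between the boxes labelled $i$ and $i+1$ in the left tableau (if $i<p$) or the right tableau (if $i>p$) of $T$ --- this is the point of the staircase picture of \cref{sec:mixed diagrams}. Hence \eqref{gtbasis:transpositions} restricted to these generators is, block by block, Young's orthogonal form \eqref{eq:SymGroupYY} for $\S_p$ and $\S_q$, so relations (a), (b), (c) of \cref{def:Brauer} and the orthogonality of $\psi_\lambda$ on the symmetric subgroups hold automatically. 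Relation (f), that $\sigma_p$ commutes with $\sigma_i$ for $i\neq p\pm1$, follows from the block structure: $\psi_\lambda(\sigma_p)=\bigoplus_{\M{T}}\ket{v_T}\bra{v_T}$ is a direct sum of nonnegative rank-one operators over the classes \eqref{eq:mobile elements}, and any such $\sigma_i$ leaves each class invariant and fixes the vectors $\ket{v_T}$ (whose coefficients depend only on the mobile cell).

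\textbf{The contraction relations are the crux.}
The relations still to be checked --- (d), (e) of \cref{def:Brauer} and the two length-five relations (g), (h) --- are where the guessed form of $\sigma_p$ is genuinely tested. Because $\psi_\lambda(\sigma_p)$ is a direct sum of nonnegative rank-one operators, relation (d), $\sigma_p^2=d\sigma_p$, is equivalent to $\|v_T\|^2=\sum_{T'\in\M{T}}c(T')^2=d$ for each class. Indexing $T'\in\M{T}$ by the addable cell $a\in\AC(T^{p-1})$ it creates at level $p$, the summand $c(T')^2$ becomes the $a$-th Lagrange weight at the nodes $\{\cont(a'):a'\in\AC(T^{p-1})\}$ of the polynomial $x\mapsto(d+x)\prod_{c\in\RC(T^{p-1})}(x-\cont(c))$; summing, only the top two Taylor coefficients survive, and using the classical identity $\sum_{a\in\AC(\mu)}\cont(a)=\sum_{c\in\RC(\mu)}\cont(c)$ one obtains exactly $d$. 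Relations (e), (g), (h) entangle the rank-one $\sigma_p$ with the neighbouring reflections $\sigma_{p\pm1}$, which do not preserve the classes \eqref{eq:mobile elements}; I would expand both sides on the Gelfand--Tsetlin basis and reduce each to a rational identity among the contents of the cells added/removed at levels $p-1,p,p+1$ and of the adjacent addable and removable cells, again provable by partial fractions. The efficient way to organize this bookkeeping, which I would establish first, is \cref{lem:JMaction}: the Jucys--Murphy elements \eqref{eq:JM for B} act diagonally on $\ket{T}$ with eigenvalue $\wcont_k(T)$, proved by induction along the chain \eqref{eq:A inclusions} from the already-verified generator formulas; then (g) and (h) can be recast as identities between diagonal operators and a single rank-one correction. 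I expect relations (g) and (h) to be the main obstacle.

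\textbf{Irreducibility, inequivalence, completeness.}
Since $\psi_\lambda(\sigma_i)$ involves only the truncations $T^0,\dots,T^{i+1}$ of the paths, the restriction of $\psi_\lambda$ to the subalgebra $\A^d_k$ generated by $\sigma_1,\dots,\sigma_{k-1}$ decomposes $\C^{\Paths(\lambda)}$ exactly along the levels of the Bratteli diagram of \cref{sec:rep_theory_A}; in particular $\A^d_0\cong\C$ acts by scalars, each restriction step is multiplicity-free, and the spectrum $\{(\wcont_1(T),\dots,\wcont_{p+q}(T)):T\in\Paths(\lambda)\}$ of the Jucys--Murphy elements is simple and determines both $T$ and $\lambda$. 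By the Okounkov--Vershik argument (\cite{OV1996, VO2005}) this makes each $\psi_\lambda$ irreducible and $\psi_\lambda\not\cong\psi_{\lambda'}$ for $\lambda\neq\lambda'$. Since \eqref{eq:IrrA} is precisely $\Irr{\A^d_{p,q}}$ and $|\Paths(\lambda)|=\dim V_\lambda^{\A^d_{p,q}}$ by the classification recalled in \cref{sec:rep_theory_A}, we get $\sum_\lambda|\Paths(\lambda)|^2=\dim\A^d_{p,q}$, whence $\ker\psi^d_{p,q}\subseteq\bigcap_\lambda\ker\psi_\lambda$ and $\bigoplus_\lambda\psi_\lambda$ descends to an isomorphism $\A^d_{p,q}\cong\bigoplus_\lambda\End(\C^{\Paths(\lambda)})$; in particular every $\psi_\lambda$ is an irreducible representation of $\A^d_{p,q}$. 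Finally, when $p=0$ or $q=0$ the generator $\sigma_p$ is absent and \eqref{gtbasis:transpositions} is literally \eqref{eq:SymGroupYY}, recovering Young's orthogonal form.
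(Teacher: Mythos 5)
Your overall route coincides with the paper's: take the displayed formulas as the definition of $\psi_\lambda$, verify the relations of \cref{def:Brauer}, and deduce irreducibility from the Jucys--Murphy spectrum; your Lagrange-weight proof that $\norm{\ket{v_T}}_2^2=d$ is a legitimate variant of the paper's \cref{lem:v_Tnorm}, which instead rewrites $c(T')^2$ as a ratio of unitary-group dimensions via \cref{lem:content_into_sym_u_dims,lem:ration_into_u_dims} and invokes Pieri's rule. The genuine gap is that the heart of the theorem --- relations (e), (g), (h) of \cref{def:Brauer} --- is never actually verified. The paper does this by writing $\sigma_p$ and $\sigma_{p\pm1}$ in path-algebra matrix units, reducing (e) to the scalar identity \cref{thm_e:main_eq}, proved with the partial-fraction identity \cref{lem:magich} via \cref{cor:ratios}, and reducing (g) to \cref{thm:g:main1,thm:g:main2,thm:g:main3}; the genuinely new ingredient there is the vanishing of $\sum_m c^2(\mu,m)\,f_{p-1}(\nu_1,\mu,m)\,f_{p+1}(m,\mu,\nu_2)$ whenever $\nu_1\neq\nu_2$, together with the observation that $f_{p-1}=f_{p+1}$ when $\nu_1=\nu_2$. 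Your sketch (``reduce to rational identities provable by partial fractions'') names the right tool but does not exhibit these identities, and your proposed organizing device --- prove \cref{lem:JMaction} first and recast (g), (h) as ``diagonal operators plus a single rank-one correction'' --- does not work as stated: (g) and (h) constrain precisely the off-diagonal entries of $\sigma_{p\pm1}$ interleaved between two copies of the rank-one $\sigma_p$, and the identity that makes them hold couples two distinct neighbours $\nu_1\neq\nu_2$ of $\mu$ at levels $p\mp 1$, which the (diagonal) Jucys--Murphy eigenvalues cannot detect.

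Two smaller points. In your argument for relation (f), the claim that $\sigma_i$ ``fixes the vectors $\ket{v_T}$'' is false: $\psi_\lambda(\sigma_i)\ket{v_T}=\tfrac{1}{r_i(T)}\ket{v_T}+\sqrt{1-1/r_i(T)^2}\,\ket{v_{\sigma_i T}}$; what saves the commutation (and what the paper uses) is the tensor-product structure on $\spn\of[\big]{\M{T}\cup\M{\sigma_i T}}$, valid because $r_i(T')=r_i(T)$ and $c(\sigma_i T')=c(T')$ for all $T'\in\M{T}$. Likewise, for $i>p$ the blocks are not literally Young's orthogonal form ``in the right tableau'', since a step past the wall may instead remove a box from the left diagram; the braid relation (b) for $i\geq p+1$ requires the additivity of walled axial distances in the staircase $\widetilde T$, which is exactly what the paper's explicit check uses. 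Finally, the inference ``whence $\ker\psi^d_{p,q}\subseteq\bigcap_\lambda\ker\psi_\lambda$'' does not follow from the dimension count alone: pairwise inequivalent irreps of $\B^d_{p,q}$ of the right dimensions need not annihilate $\ker\psi^d_{p,q}$. The paper closes this by matching the Jucys--Murphy spectra of \cref{lem:JMaction} with the canonical Gelfand--Tsetlin basis of \cite{doty2019canonical,grinko2022linear}, which is constructed directly for $\A^d_{p,q}$ (or alternatively via the Burnside-type \cref{prop:irreducibility}); you would need one of these arguments to make that step rigorous.
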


\begin{remark}
Due to \cref{lem:content_into_sym_u_dims,lem:ration_into_u_dims} the coefficient $c(T)$ presented in \cref{thm:main} has the following combinatorial expression:
\begin{equation}
 c(T) =
 \sqrt{ \frac{ m_{T^{p}}}{ m_{T^{p-1}} }},
\end{equation}
where $m_{\lambda}$ is the dimension of the irreducible representation of the unitary group $\U{d}$ corresponding to the Young diagram $\lambda$.
It is well-known that the dimension $m_{\lambda}$ equals the number of semistandard Young tableaux $\SSYT(\lambda,d)$ of shape $\lambda$ and entries in $[d]$, see \cref{eq:hook content formula}.
\end{remark}

\noindent
We delegate the proof of \cref{thm:main} to \cref{proof_main_gt_theorem}.
\Cref{tab:A32} gives an example of how \cref{thm:main} can be used to compute all irreps of $\A^3_{3,2}$ using the Bratteli diagram shown in \cref{fig:A32}.
Moreover, we provide a code in \textit{Wolfram Mathematica} to generate the Gelfand--Tsetlin basis for arbitrary $\A_{p,q}^d$ \cite{github}.

%%%%%%%%%%%%%%%%%%%%%%%%%%%%%%%%%%%%%%%%%%%%%%%%%%%%%%%%%%%%%%%
%%%%%%%%%%%%%%%%%%%%%%%%%%%%%%%%%%%%%%%%%%%%%%%%%%%%%%%%%%%%%%%

\begin{table}
\begin{tabular}{c|cccc}
  % Irreps \backslash generators
  & $\sigma_1$ & $\sigma_2$ & $\sigma_3$ & $\sigma_4$ \\ \hline

  $\br{\yd{3},\yd{2}}$   & $\mx{1}$ & $\mx{1}$ & $\mx{0}$ & $\mx{ 1}$ \\[2pt]

  $\br{\yd{3},\yd{1,1}}$ & $\mx{1}$ & $\mx{1}$ & $\mx{0}$ & $\mx{-1}$ \\[2pt]

  $\br{\yd{2},\yd{1}}$ &

    $\smx{
    	 -1 & 0 & 0 & 0 & 0 & 0 \\
    	 0 & -1 & 0 & 0 & 0 & 0 \\
    	 0 & 0 & 1 & 0 & 0 & 0 \\
    	 0 & 0 & 0 & 1 & 0 & 0 \\
    	 0 & 0 & 0 & 0 & 1 & 0 \\
    	 0 & 0 & 0 & 0 & 0 & 1
    }$ &
	$\smx{
	   \frac{1}{2} & 0 & \frac{\sqrt{3}}{2} & 0 & 0 & 0 \\
	   0 & \frac{1}{2} & 0 & \frac{\sqrt{3}}{2} & 0 & 0 \\
	   \frac{\sqrt{3}}{2} & 0 & -\frac{1}{2} & 0 & 0 & 0 \\
	   0 & \frac{\sqrt{3}}{2} & 0 & -\frac{1}{2} & 0 & 0 \\
	   0 & 0 & 0 & 0 & 1 & 0 \\
	   0 & 0 & 0 & 0 & 0 & 1
	}$ &
	$\smx{
	   0 & 0 & 0 & 0 & 0 & 0 \\
	   0 & 0 & 0 & 0 & 0 & 0 \\
	   0 & 0 & \frac{4}{3} & 0 & \frac{2 \sqrt{5}}{3} & 0 \\
	   0 & 0 & 0 & 0 & 0 & 0 \\
	   0 & 0 & \frac{2 \sqrt{5}}{3} & 0 & \frac{5}{3} & 0 \\
	   0 & 0 & 0 & 0 & 0 & 0
	}$ &
	$\smx{
	   \frac{1}{2} & \frac{\sqrt{3}}{2} & 0 & 0 & 0 & 0 \\
	   \frac{\sqrt{3}}{2} & -\frac{1}{2} & 0 & 0 & 0 & 0 \\
	   0 & 0 & \frac{1}{2} & \frac{\sqrt{3}}{2} & 0 & 0 \\
	   0 & 0 & \frac{\sqrt{3}}{2} & -\frac{1}{2} & 0 & 0 \\
	   0 & 0 & 0 & 0 & \frac{1}{5} & \frac{2 \sqrt{6}}{5} \\
    	 0 & 0 & 0 & 0 & \frac{2 \sqrt{6}}{5} & -\frac{1}{5}
	}$

  \\

  $\br{\yd{1},\0}$ &

	$\smx{
	   -1 & 0 & 0 & 0 & 0 & 0 \\
	   0 & -1 & 0 & 0 & 0 & 0 \\
	   0 & 0 & -1 & 0 & 0 & 0 \\
	   0 & 0 & 0 & 1 & 0 & 0 \\
	   0 & 0 & 0 & 0 & 1 & 0 \\
	   0 & 0 & 0 & 0 & 0 & 1
	}$ &
	$\smx{
	   -1 & 0 & 0 & 0 & 0 & 0 \\
	   0 & \frac{1}{2} & 0 & \frac{\sqrt{3}}{2} & 0 & 0 \\
	   0 & 0 & \frac{1}{2} & 0 & \frac{\sqrt{3}}{2} & 0 \\
	   0 & \frac{\sqrt{3}}{2} & 0 & -\frac{1}{2} & 0 & 0 \\
	   0 & 0 & \frac{\sqrt{3}}{2} & 0 & -\frac{1}{2} & 0 \\
	   0 & 0 & 0 & 0 & 0 & 1
	}$ &
	$\smx{
	   \frac{1}{3} & \frac{2 \sqrt{2}}{3} & 0 & 0 & 0 & 0 \\
	   \frac{2 \sqrt{2}}{3} & \frac{8}{3} & 0 & 0 & 0 & 0 \\
	   0 & 0 & 0 & 0 & 0 & 0 \\
	   0 & 0 & 0 & 0 & 0 & 0 \\
	   0 & 0 & 0 & 0 & \frac{4}{3} & \frac{2 \sqrt{5}}{3} \\
	   0 & 0 & 0 & 0 & \frac{2 \sqrt{5}}{3} & \frac{5}{3}
	}$ &
	$\smx{
	   -1 & 0 & 0 & 0 & 0 & 0 \\
	   0 & \frac{1}{2} & \frac{\sqrt{3}}{2} & 0 & 0 & 0 \\
	   0 & \frac{\sqrt{3}}{2} & -\frac{1}{2} & 0 & 0 & 0 \\
	   0 & 0 & 0 & \frac{1}{2} & \frac{\sqrt{3}}{2} & 0 \\
	   0 & 0 & 0 & \frac{\sqrt{3}}{2} & -\frac{1}{2} & 0 \\
	   0 & 0 & 0 & 0 & 0 & 1
	}$

  \\

  $\br{\yd{2,1},\yd{2}}$ &

    $\mx{-1 & 0 \\ 0 & 1}$ &
    $\smx{
       \frac{1}{2} & \frac{\sqrt{3}}{2} \\
       \frac{\sqrt{3}}{2} & -\frac{1}{2}
     }$ &
    $\mx{0 & 0 \\ 0 & 0}$ &
    $\mx{1 & 0 \\ 0 & 1}$

  \\

  $\br{\yd{1,1},\yd{1}}$ &

	$\smx{
    	 -1 & 0 & 0 & 0 & 0 \\
    	 0 & -1 & 0 & 0 & 0 \\
    	 0 & 0 & -1 & 0 & 0 \\
    	 0 & 0 & 0 & 1 & 0 \\
    	 0 & 0 & 0 & 0 & 1
	}$ &
	$\smx{
    	 -1 & 0 & 0 & 0 & 0 \\
    	 0 & \frac{1}{2} & 0 & \frac{\sqrt{3}}{2} & 0 \\
    	 0 & 0 & \frac{1}{2} & 0 & \frac{\sqrt{3}}{2} \\
    	 0 & \frac{\sqrt{3}}{2} & 0 & -\frac{1}{2} & 0 \\
    	 0 & 0 & \frac{\sqrt{3}}{2} & 0 & -\frac{1}{2}
	}$ &
	$\smx{
    	 \frac{1}{3} & \frac{2 \sqrt{2}}{3} & 0 & 0 & 0 \\
    	 \frac{2 \sqrt{2}}{3} & \frac{8}{3} & 0 & 0 & 0 \\
    	 0 & 0 & 0 & 0 & 0 \\
    	 0 & 0 & 0 & 0 & 0 \\
    	 0 & 0 & 0 & 0 & 0
	}$ &
	$\smx{
    	 1 & 0 & 0 & 0 & 0 \\
    	 0 & \frac{1}{4} & \frac{\sqrt{15}}{4} & 0 & 0 \\
    	 0 & \frac{\sqrt{15}}{4} & -\frac{1}{4} & 0 & 0 \\
    	 0 & 0 & 0 & \frac{1}{4} & \frac{\sqrt{15}}{4} \\
    	 0 & 0 & 0 & \frac{\sqrt{15}}{4} & -\frac{1}{4}
	}$

\end{tabular}
\caption{\label{tab:A32}Example of \cref{thm:main} in action: the matrices $\psi_\lambda(\sigma_i)$ in the Gelfand--Tsetlin basis for all irreps $\psi_\lambda$ and generators $\sigma_i$ of the matrix algebra $\A^3_{3,2}$.
Rows and columns correspond to irrep labels $\lambda = (\lambda_l,\lambda_r)$ and generators $\sigma_i$, respectively.
Note that irreps with $|\lambda_l| = 3$ and $|\lambda_r| = 2$ vanish on the contraction $\sigma_3$ since they are irreps of $\S_3 \times \S_2$.
The Bratteli diagram for $\A^3_{3,2}$ is shown in \cref{fig:A32}.}
\end{table}

%%%%%%%%%%%%%%%%%%%%%%%%%%%%%%%%%%%%%%%%%%%%%%%%%%%%%%%%%%%%%%%
%%%%%%%%%%%%%%%%%%%%%%%%%%%%%%%%%%%%%%%%%%%%%%%%%%%%%%%%%%%%%%%

\section{Mixed quantum Schur transform}\label{sec:mixed Schur}

\subsection{Mixed Schur--Weyl duality}\label{sec:mixed SW}

As before, let $V_d^p \defeq (\C^d)\xp{p}$ and $V_d^{p,q} \defeq V_d^p \x (V_d^q)^\ast$.
Consider the natural action $\phi^d_{p,q} \colon \U{d} \to \End(V_d^{p,q})$ of the unitary group $\U{d}$ on $V_d^{p,q}$ given by
\begin{equation}
    \label{eq:UnitaryAction2}
    \phi^d_{p,q}(U) \, \ket{x_1,\dotsc,x_{p+q}}
    =
    \of[\big]{U\xp{p} \x \bar{U}\xp{q}} \,
    \ket{x_1,\dotsc,x_{p+q}}
\end{equation}
where $U \in \U{d}$, $\bar{U}$ is the complex conjugate of $U$, and $x_1, \dotsc, x_{p+q} \in [d]$.
This action is relevant in several contexts in quantum information, such as unitary-equivariant quantum channels with $p$ input and $q$ output qudits (see \cref{sec:Unitary-equivariant quantum channels}).
The action \eqref{eq:UnitaryAction2} commutes with the aforementioned action $\psi^d_{p,q} \colon \B_{p,q}^d \to \End(V_d^{p,q})$ from \cref{eq:Brauer action} which defines the algebra $\A_{p,q}^d$ of partially transposed permutations on $V_d^{p,q}$.\footnote{This is a consequence of the fact that the maximally entangled state $\frac{1}{\sqrt{d}} \sum_{k \in [d]} \ket{k,k}$ is invariant under $U \x \bar{U}$ for any $U \in \U{d}$. On the other side, the permutation of registers between spaces $V_d^p$ and $(V_d^p)^\ast$ does not commute with action \cref{eq:UnitaryAction2}.}

The actions $\psi^d_{p,q}$ and $\phi^d_{p,q}$ of the matrix algebra $\A_{p,q}^d$ and the unitary group $\U{d}$ commute, i.e.,
\begin{equation}
    \sof[\big]{
        \psi^d_{p,q}(\sigma),
        \phi^d_{p,q}(U)
    } = 0,
    \quad\text{for all}\quad
    \sigma \in \A_{p,q}^d
    \text{ and }
    U \in \U{d}.
\end{equation}
Moreover, the space $V_d^{p,q}$ decomposes into a direct sum of irreducible modules of both actions:
\begin{equation}
    \label{eq:mSW}
    V_d^{p,q}
    \cong
    \bigoplus_{\lambda \in \Irr{\A_{p,q}^{d}}} V_\lambda^{\A_{p,q}^d} \x V_\lambda^{\U{d}}
    \quad\text{where}\quad
    V_\lambda^{\A_{p,q}^d} \defeq \C^{\Paths(\lambda)}
    \text{ and }
    V_\lambda^{\U{d}} \defeq \C^{\GT(\lambda,d)}.
\end{equation}
Here the direct sum ranges over all irreducible representations $\lambda$ of $A_{p,q}^d$, see \cref{eq:IrrA}, and the spaces $V_\lambda^{\A_{p,q}^d}$ and $V_\lambda^{\U{d}}$ correspond to irreducible representations of $\A_{p,q}^d$ and $\U{d}$, respectively.
We will denote their dimensions by
\begin{align}
    d_\lambda
    &\defeq \dim V_\lambda^{\A_{p,q}^d}
    = \abs{\Paths(\lambda)},
    \label{eq:d_lambda} \\
    m_{\lambda}
    &\defeq \dim V_\lambda^{\U{d}}
    = \abs{\GT(\lambda,d)}.
    \label{eq:m_lambda}
\end{align}
Both these irrep dimensions implicitly depend on the local dimension $d$.\footnote{For $d_\lambda$ the $d$-dependence is more subtle since it results from truncating the Bratteli diagram. Recall from \cref{eq:IrrA} that for small values of $d$ some vertices are removed from the Bratteli diagram, which can result in fewer paths reaching a given leaf. This phenomenon does not occur when $q = 0$, hence in the regular Schur--Weyl duality the $\S_p$ irrep dimension $d_\lambda$ does not depend on $d$.}
By comparing the dimensions on both sides of \cref{eq:mSW} we obtain a non-trivial combinatorial identity
\begin{equation}
    d^{p+q}
    = \sum_{\lambda \in \Irr{\A_{p,q}^{d}}}
    d_\lambda \, m_{\lambda},
\end{equation}
which is a consequence of the Robinson--Schensted--Knuth correspondence for mixed tensors \cite[Section~4]{stembridge1987rational}.

\Cref{eq:mSW} is known as \emph{mixed Schur--Weyl duality} and is usually expressed in the following more abstract way: the algebras generated by the actions $\psi^d_{p,q}$ and $\phi^d_{p,q}$ of $\A_{p,q}^d$ and $\U{d}$ on $V^{p,q}_d$ are complete mutual centralizers of each other within $\End(V^{p,q}_d)$ \cite{koike1989decomposition,bchlls}, see also \cite{grinko2022linear}.
The decomposition~\eqref{eq:mSW} follows from the double centralizer theorem \cite[Theorem~4.54]{etingof2011introduction} applied to this pair of algebras.

\subsection{Mixed Schur transform}\label{sec:SchTrans}

We would like to understand in more detail the basis change~\eqref{eq:mSW} that simultaneously block-diagonalizes both actions.
This is a unitary transformation
\begin{equation}
    \label{eq:Sch}
    \UschPQ \colon
    V^{p,q}_d
    \rightarrow
    \bigoplus_{\lambda \in \Irr{\A_{p,q}^{d}}}
    V_\lambda^{\A_{p,q}^d} \x V_\lambda^{\U{d}}
\end{equation}
known as \emph{mixed Schur transform} \cite{grinko2022linear}.
It maps the computational basis to a new basis composed of irreducible representations of $\A_{p,q}^d$ and $\U{d}$:
\begin{align}
    \UschPQ \; \psi^d_{p,q}(\sigma) \; \UschPQ\ct
    &= \bigoplus_{\lambda \in \Irr{\A_{p,q}^{d}}}
    \psi_\lambda(\sigma) \x I_{m_\lambda}, \\
    \UschPQ \; \phi^d_{p,q}(U) \; \UschPQ\ct
    &= \bigoplus_{\lambda \in \Irr{\A_{p,q}^{d}}}
    I_{d_\lambda} \x \phi_\lambda(U),
\end{align}
where $\psi^d_{p,q}$ and $\phi^d_{p,q}$ are defined in \cref{eq:UnitaryAction2,eq:Brauer action},
$\psi_\lambda$ is described in \cref{thm:main},
and $\phi_\lambda$ is an 

Recall from \cref{eq:mSW} that we can label the standard bases of $V_\lambda^{\A_{p,q}^d}$ and $V_\lambda^{\U{d}}$
by paths $T \in \Paths(\lambda)$ and Gelfand--Tsetlin patterns $M \in \GT(\lambda,d)$, respectively:
\begin{align}
    V_\lambda^{\A_{p,q}^d}
    &\defeq \spn_\C \set{\ket{T} : T \in \Paths(\lambda)}, \\
    V_\lambda^{\U{d}}
    &\defeq \spn_\C \set{\ket{M} : M \in \GT(\lambda,d)}.
\end{align}
When $q = 0$ we can equivalently think of $T$ and $M$ as a standard and a semistandard Young tableau, respectively (see \cref{sec:tableaux,sec:GT}).
For general values of $p,q$, one can interpret $T$ as a mixed standard Young tableau (see \cref{sec:mixed diagrams}) while $M$ can still be interpreted as a semistandard tableau by adding an appropriate constant to all entries of the Gelfand--Tsetlin pattern so that they become non-negative.
Note that the irrep label $\lambda \in \Irr{\A_{p,q}^{d}}$ is implicit in both $T$ and $M$.
Indeed, it can be recovered from the final vertex of the path $T$ as well as from the first row of the pattern $M$.
Hence, we can treat the output space in \cref{eq:Sch} as a formal linear span of all pairs $(T,M)$:
\begin{equation}
    \bigoplus_{\lambda \in \Irr{\A_{p,q}^{d}}}
    V_\lambda^{\A_{p,q}^d} \x V_\lambda^{\U{d}}
    = \spn_\C \set[\big]{\ket{(T,M)} : \lambda \in \Irr{\A_{p,q}^{d}}, T \in \Paths(\lambda), M \in \GT(\lambda,d)}.
\end{equation}
Note that the output space of $\UschPQ$ does \emph{not} have a tensor product structure, i.e., one should not treat $\ket{(T,M)}$ as $\ket{T} \x \ket{M}$.
Nevertheless, for each individual $\lambda$ the corresponding subspace $\C^{d_\lambda} \x \C^{m_\lambda}$ is indeed a tensor product.

We would like to find an efficient way of computing the mixed Schur transform matrix entries
\begin{equation}
    \bra{(T,M)} \, \UschPQ \, \ket{x_1,\dotsc,x_{p+q}}.
    \label{eq:Schur matrix entries}
\end{equation}
Here the rows are labelled by pairs $(T,M)$, where
$T \in \Paths(\lambda)$ and
$M \in \GT(\lambda,d)$
for all choices of
$\lambda \in \Irr{\A_{p,q}^{d}}$,
while the columns are labelled by strings
$x_1,\dotsc,x_{p+q} \in [d]$.\footnote{Note that \cref{eq:Sch} is a passive transformation, as a change of coordinates system from the computational basis vectors into the basis labelled by tuples $(T,M)$. As the number of such tuples matches the dimension of $V_d^{p,q}$ space, it can be seen as a transformation of $V_d^{p,q}$ onto itself, after applying any bijection between tuples $(T,M)$ and computational basis vectors. As such, the Schur transform is a unitary transformation (also active transformation).}%
$^{,}$%
\footnote{Note that Schur transform, defined as a transformation (\ref{eq:Sch}) which simultaneously decomposes $V_d^{p,q}$ into irreducible modules of $\A_{p,q}^d$ and $\U{d}$ is not uniquely defined. Indeed, arbitrary change of basis within modules $V_\lambda^{\A_{p,q}^d}$ and $ V_\lambda^{\U{d}}$ does not affect decomposition (\ref{eq:Sch}). On the other hand, such a change of basis within modules is the only degree of freedom for the Schur transform. In this paper, we present the most common form of Schur transform related to a Gelfand--Tsetlin basis corresponding to a sequence of algebras (\ref{eq:A inclusions}) for $\A_{p,q}^d$ and sequence $\U{1}\hookrightarrow \cdots\hookrightarrow \U{d}$ for unitary group $U(d)$. Moreover, for $d=2$, it can be seen as a consecutive composition of spin-$\tfrac{1}{2}$ particles into a system with well-defined global spin.}
The matrix entries \eqref{eq:Schur matrix entries} can be arranged into a matrix as follows:
\begin{equation}
    \UschPQ =
    \sum_{\lambda \in \Irr{\A_{p,q}^{d}}}
    \sum_{T \in \Paths(\lambda)}
    \sum_{M \in \GT(\lambda,d)}
    \sum_{x_1,\dotsc,x_{p+q} \in [d]}
    \bra{(T,M)} \, \UschPQ \, \ket{x_1,\dotsc,x_{p+q}} \cdot
    \ketbra{(T,M)}{x_1,\dotsc,x_{p+q}}.
\end{equation}
In particular, the mixed quantum Schur transform of any standard basis vector $\ket{x_1,\dotsc,x_{p+q}}$ is
\begin{equation}
    \label{eq:Sch2}
    \UschPQ \, \ket{x_1,\dotsc,x_{p+q}}
    =
    \sum_{\lambda \in \Irr{\A_{p,q}^{d}}}
    \sum_{T \in \Paths(\lambda)}
    \sum_{M \in \GT(\lambda,d)}
    \bra{(T,M)} \, \UschPQ \, \ket{x_1,\dotsc,x_{p+q}} \cdot
    \ket{(T,M)},
\end{equation}
while the Schur basis vectors $\bra{(T,M)}$ can be expressed as
\begin{equation}
    \label{eq:Schur basis vector}
    \bra{(T,M)}
    =
    \sum_{x_1,\dotsc,x_{p+q} \in [d]}
    \bra{(T,M)} \, \UschPQ \, \ket{x_1,\dotsc,x_{p+q}} \cdot
    \bra{x_1,\dotsc,x_{p+q}}.
\end{equation}

The most common way of implementing the regular $q = 0$ Schur transform is by a sequence of Clebsch--Gordan transforms \cite{HarrowThesis,bch2006quantumschur,kirby2018practical}.
Since this approach can be generalized to any $q \geq 0$ in a straightforward way, we can derive an explicit formula for the matrix entries of the mixed quantum Schur transform by inductively decomposing it as a sequence of Clebsch--Gordan transforms.

We start with a single qudit in state $\ket{x_1}$.
At each step $k = 2, \dotsc, p+q$ we use the Clebsch--Gordan transform $\CG^k \in \U{d^k}$ to couple the output state from the previous iteration with an additional qudit in state $\ket{x_k}$.
The input and output spaces of $\CG^k$ can be decomposed as follows:
\begin{equation}
    \label{def:CGk_map}
    \CG^k:
    \of[\bigg]{
        \bigoplus_{\lambda \in \Irr{\A_{k-1}^{d}}}
        V_\lambda^{\A_{k-1}^d} \x V_\lambda^{\U{d}}
    }
    \x \C^d
    \to
    \bigoplus_{\mu \in \Irr{\A_{k}^{d}}}
    V_\mu^{\A_{k}^d} \x V_\mu^{\U{d}},
\end{equation}
where we are using the single-subscript convention \eqref{eq:Ak} for the sequence of algebras $\A_k^d$ from \cref{eq:A inclusions simplified}.
For any path $T = (T^{0},\dotsc,T^{k-1})$ and pattern $M \in \GT(T^{k-1},d)$ the action of $\CG^k$ is defined as
\begin{equation}
    \label{def:CG_classical}
    \CG^{k} \of[\Big]{\ket{(T,M)} \otimes \ket{x_k}}
    \defeq \sum_{\mu : T^{k-1} \rightarrow \mu}
    \sum_{\substack{N \in \GT(\mu,d) \\ w(N) = w(M) + w(x_k)}}
    c_{N,M}^{x_k} \ket{(T \circ \mu, N)},
\end{equation}
where $T^{k-1} \rightarrow \mu$ means that $\mu$ can be reached from $T^{k-1}$ by one step in the Bratteli diagram,
and $c_{N,M}^{x_k}$ are the so-called \emph{Clebsch--Gordan coefficients} (see \cref{sec:CGcoefficients} for an explicit formula).
The mixed quantum Schur transform $\UschPQ$ is given by a cascade of $p+q-1$ Clebsch--Gordan transforms:
\begin{equation}
    \label{def:Usch_classical}
    \UschPQ = \CG^{p+q} \cdot
    \of*{\CG^{p+q-1} \otimes I} \cdots
    \of*{\CG^{3} \otimes I\xp{p+q-3}} \cdot
    \of*{\CG^{2} \otimes I\xp{p+q-2}}.
\end{equation}
In particular, the usual Schur transform satisfies
$\Usch(k,0) = \CG^k \of[\big]{\Usch(k-1,0) \x I}$
where $k = 2, \dotsc, p$.

Let us use \cref{def:Usch_classical} to derive an explicit formula for the matrix entries of the mixed quantum Schur transform.
For an arbitrary path
$T = (T^{0},\dotsc,T^{p+q}) \in \Paths(\lambda)$
and Gelfand--Tsetlin pattern
$M \in \GT(\lambda,d)$,
we get from \cref{def:CG_classical,def:Usch_classical} that
\begin{equation}
    \label{eq:USch3}
    \bra{(T,M)} \, \UschPQ \, \ket{x_1,\dotsc,x_{p+q}} =
        \bra{M}
        C_{T^{p+q} T^{p+q-1}}^{x_{p+q}}
        \cdots
        C_{T^{2} T^{1}}^{x_{2}}
        \ket{x_1}
\end{equation}
where $\ket{M}$ is a Gelfand--Tsetlin basis vector of the unitary group irrep corresponding to the staircase $\lambda$,
and $C_{T^{k} T^{k-1}}^{x_{k}}$ for $k = 2, \dotsc, p+q$ are rectangular matrices with rows labelled by $N \in \GT(T^{k},d)$ and columns labelled by $M \in \GT(T^{k-1},d)$, with the corresponding matrix entry equal to the Clebsch--Gordan coefficient:
\begin{equation}
    \label{eq:cg_matrix_entires}
    \bra{N} C_{T^{k} T^{k-1}}^{x_{k}} \ket{M} \defeq c_{N,M}^{x_k}.
\end{equation}
Hence, according to \cref{eq:USch3}, any matrix entry of $\UschPQ$ can be computed by applying a sequence of matrices onto $\ket{x_1}$.
The complexity of this computation depends on the dimensions of the matrices $C_{T^{k} T^{k-1}}^{x_{k}}$.

In practice, one can take advantage of the fact that the matrices $C_{T^{k} T^{k-1}}^{x_{k}}$ are block-diagonal. By tracking which blocks contribute non-trivially to a given $\bra{(T,M)}$, \cref{eq:USch3} can be modified as follows:
\begin{equation}
    \label{eq:USch4}
    \bra{(T,M)} \, \UschPQ \, \ket{x_1,\dotsc,x_{p+q}} =
        \bra{M}
        C_{T^{p+q} T^{p+q-1}}^{x_{p+q}, w(x_{p+q-1},\dotsc,x_1)}
        C_{T^{p+q-1} T^{p+q-2}}^{x_{p+q-1}, w(x_{p+q-2},\dotsc,x_1)}
        \cdots
        C_{T^{2} T^{1}}^{x_{2}}
        \ket{x_1},
\end{equation}
where $C_{T^{k} T^{k-1}}^{x_{k}, w(x_{k-1},\dotsc,x_1)}$ are submatrices of $C_{T^{k} T^{k-1}}^{x_{k}}$ with rows labelled only by $N \in \GT(T^{k},d)$ of weight $w(N) = w(x_k,\dotsc,x_1)$ and columns labelled only by $M \in \GT(T^{k-1},d)$ of weight $w(M) = w(x_{k-1},\dotsc,x_1)$.

%%%%%%%%%%%%%%%%%%%%%%%%%%%%%%%%%%%%%%%%%%%%%%%%%%%%%%%%%%%%%%%%%%%%%%%%%%%%%%%%%%%%%%%%%%%%%%%%%%%%%%%%%%%%%%%%%%%%%%%%%%%%%%%%%%%%%%%%%%%%%%%%%%%%%%%%%%%%%%%

\subsection{MPS representation of mixed Schur basis vectors}\label{sec:MPS}

Notice that \cref{eq:USch3} presents the Schur basis vectors $\ket{(T,M)}$ as matrix product states (MPS) with bond dimensions given by $D_k \defeq |\GT(T^{k},d)|$, see \cref{fig:MPS}. For fixed local dimension $d$, bond dimensions $D_k$ are upper-bounded by $(p+q)^{O(d^2)}$.%
\footnote{This can be easily seen by counting the number of Gelfand--Tsetlin patterns $\GT(T^{k},d)$. Indeed, for any pair of Young diagrams $\lambda=(\lambda_l,\lambda_r)$ satisfying $\ell (\lambda_l)+\ell (\lambda_r) \leq d$, and of size $|\lambda_l|\leq p$, $|\lambda_r|\leq q$, a corresponding set of Gelfand--Tsetlin patterns $\GT(\lambda,d)$ consists of patterns (\ref{eq:GTpattern}) with $\tfrac{d(d+1)}{2}$ entries $m_{ij}$, see \cref{eq:GTpatternOfAPair}. Notice that $m_{1d}=\lambda_{l,1}$ and $m_{dd}=-\lambda_{r,1}$. Obviously, $|\lambda_{l,1}|\leq|\lambda|\leq p$, and $|\lambda_{r,1}|\leq|\lambda|\leq q$. Therefore, $m_{1d}\leq p$ and $m_{dd}\geq -q$. It is easy to see, that all entries $m_{ij}$ of the Gelfand--Tsetlin pattern are bounded by $m_{dd}\leq m_{ij} \leq m_{1d}$, hence for arbitrary entry of the pattern $m_{ij}$, we have $-q\leq m_{ij} \leq p$. As the number of entries is $\tfrac{d(d+1)}{2}$, the size of the set $\GT(\lambda,d)$ can be upper bound by $(p+q)^{O(d^2)}$.}
Note that the length of an MPS is given by $(p+q)$, and bond dimensions $D_k$ are upper bounded by $(p+q)^{O(d^2)}$.%
\footnote{In principle, for $d>p+q$, the scaling should be $(p+q)^{O((p+q)^2)}$ by adapting a strategy outlined in \cite{HarrowThesis}. Indeed, in such case, any semistandard Young tableau $M \in \SSYT(\lambda,d)$ of size $|\lambda|=p+q$ and with letters in $[d]$ has effectively at most $p+q$ different symbols. Therefore, it can be encoded as $M' \in \SSYT(\lambda,n)$ by some encoding function $f_{\text{enc}}:[n]\rightarrow [d]$ which takes the values of the original tableau $M$.}
As a consequence, for fixed local dimension $d$ the computational complexity of computing $\bra{(T,M)} \, \UschPQ \, \ket{x_1,\dotsc,x_{p+q}}$ is upper bounded by
\begin{equation}
    \label{eq:upperb}
    (p+q)^{O(d^2)}.
\end{equation}
Not that the complexity of computing the entries of matrices $C_{T^{k} T^{k-1}}^{x_{k}}$ (see \cref{sec:CGcoefficients}) is absorbed into the above bound.
Thus, we have established the following result.

\begin{theorem}[MPS representation of mixed Schur basis vectors]\label{thm:MPS}
    The Schur basis states $\ket{(T,M)}$ (or rows of mixed quantum Schur transform $\UschPQ$) admit a matrix product state representation with bond dimension $(p+q)^{O(d^2)}$.
    Hence, the matrix entries $\bra{(T,M)} \, \UschPQ \, \ket{x_1,\dotsc,x_{p+q}}$ of mixed Schur transform $\Usch$ can be computed in time $(p+q)^{O(d^2)}$.
    In particular, for constant local dimension $d$, this is polynomial in the system size $p+q$.
\end{theorem}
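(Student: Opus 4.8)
The plan is to read \cref{thm:MPS} directly off the Clebsch--Gordan cascade construction \eqref{def:Usch_classical} and the factorized entry formula \eqref{eq:USch3}, both established above. Fix a leaf $\lambda \in \Irr{\A_{p,q}^{d}}$, a path $T = (T^{0},\dots,T^{p+q}) \in \Paths(\lambda)$, and a pattern $M \in \GT(\lambda,d)$. By \cref{eq:USch3},
\begin{equation*}
    \bra{(T,M)} \, \UschPQ \, \ket{x_1,\dots,x_{p+q}}
    = \bra{M} \, C^{x_{p+q}}_{T^{p+q} T^{p+q-1}} \cdots C^{x_{2}}_{T^{2} T^{1}} \, \ket{x_1} .
\end{equation*}
The first step is to observe that, with $T$ held fixed and the computational indices $x_1,\dots,x_{p+q}$ treated as free physical legs, this is precisely a matrix product state contraction of length $p+q$: the first site carries the input vector $\ket{x_1}$, the $k$-th site ($k = 2,\dots,p+q$) carries the tensor $\{C^{x_k}_{T^k T^{k-1}}\}_{x_k \in [d]}$ with physical leg $x_k$ and virtual legs labelled by $\GT(T^{k-1},d)$ and $\GT(T^{k},d)$, and the output leg of the last site is contracted with $\bra{M}$. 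Hence the bond dimension across the cut after site $k$ is $D_k = |\GT(T^{k},d)|$, which is the first assertion of the theorem (see \cref{fig:MPS}).

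The second step is to bound $D_k$. Each $T^{k}$ is a mixed Young diagram $(\lambda_l,\lambda_r)$ with $|\lambda_l| \le p$, $|\lambda_r| \le q$ and $\ell(\lambda_l) + \ell(\lambda_r) \le d$, so the associated Gelfand--Tsetlin patterns (those of the staircase $\widetilde{T^{k}}$; see \cref{sec:GT} and \cref{sec:mixed diagrams}) consist of $\binom{d+1}{2}$ integer entries $m_{ij}$, each squeezed by the interlacing relations \eqref{eq:GTpatterns-relations} between $m_{dd} = -\lambda_{r,1} \ge -q$ and $m_{1d} = \lambda_{l,1} \le p$. Therefore $D_k \le (p+q+1)^{\binom{d+1}{2}} = (p+q)^{O(d^2)}$, which for constant $d$ is polynomial in $p+q$. (When $d > p+q$ one first re-encodes the at most $p+q$ distinct symbols appearing in any semistandard tableau of size $p+q$ into $[p+q]$, following \cite{HarrowThesis}, which replaces $d$ by $p+q$ in the exponent; this does not affect the constant-$d$ conclusion.)

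The third step handles the complexity claim. Every entry of $C^{x_k}_{T^k T^{k-1}}$ is a single Clebsch--Gordan coefficient $c^{x_k}_{N,M}$, and by the explicit formula in \cref{sec:CGcoefficients} each such coefficient --- and hence each matrix $C^{x_k}_{T^k T^{k-1}}$, which has size at most $D_k \times D_{k-1}$ --- can be produced within the same $(p+q)^{O(d^2)}$ budget; it is convenient to pass first to the weight-refined blocks of \eqref{eq:USch4}, which only makes the matrices smaller. Multiplying the $p+q-1$ matrices and contracting with $\bra{M}$ and $\ket{x_1}$ then costs $O\bigl((p+q)\max_k D_k^{3}\bigr) = (p+q)^{O(d^2)}$, absorbing the cost of the individual coefficients and giving the bound \eqref{eq:upperb}. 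There is no real obstacle here --- the statement is essentially a repackaging of the cascade. The only points deserving care are (i) that the $C^{x_k}_{T^k T^{k-1}}$ are genuine finite matrices of the claimed size, which follows from the block structure of $\CG^k$ visible in \eqref{def:CG_classical}, and (ii) that the diagrams $T^{k}$ are bounded in size by $p$ and by $q$ separately (not merely by $p+q$), which is exactly what keeps the exponent at $O(d^2)$ rather than $O((p+q)^2)$ in the constant-$d$ regime.
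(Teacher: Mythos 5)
Your argument is correct and takes essentially the same route as the paper: both read the MPS structure directly off \cref{eq:USch3}, identify the bond dimensions $D_k = |\GT(T^{k},d)|$, bound them by $(p+q)^{O(d^2)}$ by counting the $O(d^2)$ interlaced integer entries of a Gelfand--Tsetlin pattern squeezed between $-q$ and $p$, and absorb the cost of computing the Clebsch--Gordan matrices (including the weight-refined blocks of \cref{eq:USch4} and the $d>p+q$ re-encoding) into the same bound. No gaps to report.
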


Moreover, notice that \cref{eq:USch4} provides a more refined approach for computing entries of vectors $\ket{(T,M)}$ in the computational basis. Although it is no longer an MPS, as the choice of consecutive matrices $C_{T^{k} T^{k-1}}^{x_{k}, w(x_{k-1},\dotsc,x_1)}$ does not depend on $x_k$ only. notice that matrices in this product are of the size given by $ |\{M\in \GT(T^{k},d): w(M)=w(x_1,\dotsc,x_k)\}|$ the number of Gelfand--Tsetlin patterns with a given weight $w(x_1,\dotsc,x_k)$. This number is known as a \emph{Kostka number} $K_{T^{k},w(x_1,\dotsc,x_k)}$ depending on two integer partitions $T^{k}$ and $w(x_1,\dotsc,x_k)$.
Clearly the size of those matrices is smaller than in \cref{eq:USch3}, however, we did not find an improvement in the asymptotic upper bound. As previously, $K_{T^{k},w(x_1,\dotsc,x_k)}$ can be upper bounded by $(p+q)^{O(d^2)}$, which leads to the same upper bound (\ref{eq:upperb}).

\begin{figure}[t]
\begin{tikzpicture}[
amp/.style = {regular polygon, regular polygon sides=3,
              draw, fill=white, text width=0.8em,
              inner sep=0.4mm, outer sep=0mm,
              shape border rotate=90}
                        ]
\begin{scope}[shift={(0,3.5)}]
\node (amp) at (-2.5,0) {$M$};
\node[rectangle,minimum width=4em,draw] (vp+q) at (0,0) {$C_{T^{p+q} T^{p+q-1}}$};
 \draw [thick] (amp.east) -- (vp+q.west);
\node (ip+q) at (0,1.5) {};
\draw (vp+q.north) -- (ip+q) ;
%  \draw (vp+q.north) -- ++ (5,0.5) ;
 \node (ivp+q) at (0,1.7) {$x_{p+q}$};
\draw [thick] (vp+q.east) -- ++ (0.5,0);
\node[rectangle,minimum width=4em,draw] (v2) at (5,0) {$C_{T^{3} T^{2}}$};
\node (i2) at (5,1.5) {};
\draw (v2.north) -- (i2) ;
 \node (iv2) at (5,1.7) {$x_{3}$};
 \draw [thick] (v2.west) -- ++ (-0.5,0);
\node[rectangle,minimum width=4em,draw] (v1) at (7.5,0) {$C_{T^{2} T^{1}}$};
\node (i1) at (7.5,1.5) {};
\draw (v1.north) -- (i1) ;
 \node (iv1) at (7.5,1.7) {$x_{2}$};
  \draw [thick] (v2.east) -- (v1.west);
  \draw  (v1.east) -- ++ (0.8,0);
\node (iv0) at (9.4,0) {$x_{1}$};
\node (cdots) at (2.8,0) {$\cdots$};
\end{scope}
\end{tikzpicture}
\caption{Tensor network representation of $\UschinvPQ \ket{(T,M)}$ from \cref{eq:USch3} as a matrix product state.
Tensors $C_{T^{k} T^{k-1}}$ have three indices. The matrices $C_{T^{k} T^{k-1}}^{x_{k}}$ from \cref{eq:cg_matrix_entires} are obtained from $C_{T^{k} T^{k-1}}$ by fixing the index corresponding to $x_k$. Indices $x_k \in [d]$ indicate the computational basis states $\ket{x_k}$. The bond dimensions $D_k = |\GT(T^k,d)|$ are equal to the number of Gelfand--Tsetlin patterns of a shape $T^k$. Asymptotically the maximal value of $D_k$ for different $k$ is upper bounded by $(p+q)^{O(d^2)}$.}
\label{fig:MPS}
\end{figure}

%%%%%%%%%%%%%%%%%%%%%%%%%%%%%%%%%%%%%%%%%%%%%%%%%%%%%%%%%%%%%%%%%%%%%%%%%%%%%%%%%%%%%%%%%%%%%%%%%%%%%%%%%%%%%%%%%%%%%%%%%%%%%%%%%%%%%%%%%%%%%%%%%%%%%%%%
\subsection{Mixed Schur transform achieves the Gelfand--Tsetlin basis}\label{sec:schur_gt_basis}

By construction \cite{Vilenkin1992}, the Clebsch--Gordan transform \cref{def:CG_classical} achieves the Gelfand--Tsetlin basis of the unitary group on the unitary group registers $V_\lambda^{\U{d}}$ in \cref{eq:Sch}.
As a consequence, the same holds true also for the mixed Schur transform \eqref{def:Usch_classical}.
However, it is also true that the mixed Schur transform yields the Gelfand--Tsetlin basis of the algebra $\A_{p,q}^d$ in the relevant register $V_\lambda^{\A_{p,q}^d}$.

Because the Clebsch--Gordan transform $\CG^{k}$ from \cref{def:CGk_map,def:CG_classical} acts only on the unitary irrep registers upon adding a qudit $\C^d$, we get for every $\sigma \in \A^d_{k-1}$
\begin{align}
    \CG^{k} \of[\bigg]{ \of[\bigg]{ \bigoplus_{ \mu \in \Irr{\A^d_{k-1}} } \psi_{\mu}(\sigma) \otimes I^{\U{d}}_\mu } \otimes I} {\CG^{k}}\ct
    = \bigoplus_{\lambda \in \Irr{\A^d_{k}}} \of[\bigg]{ \bigoplus_{\substack{\mu \in \Irr{\A^d_{k-1}} \\ \mu \,:\, \mu \rightarrow \lambda}} \psi_{\mu}(\sigma) } \otimes I_\lambda^{\U{d}}.
\end{align}
We see that the $\CG^{k}$ does not change the action inside the registers $V_\lambda^{\A_{k-1}^d}$, meaning that it is naturally implementing a subalgebra-adapted basis, namely a Gelfand--Tsetlin basis. However, there is still a degree of freedom of choosing phases for the Gelfand--Tsetlin basis vectors. In \cite{jordan2009permutational,HarrowThesis} it was argued, that our choice of the Clebsch--Gordan transforms $\CG^{k}$ implements exactly the same Gelfand--Tsetlin basis in $\A_{p,q}^d$ register of the mixed Schur--Weyl duality as in \cref{thm:main} for permutations $\sigma_i$, $i \neq p$. For the contraction generator $\sigma_p \in \A_{p,q}^d$ we numerically observe the same correspondence. This, in principle, can be proved by directly contracting two tensor networks from \cref{fig:MPS} corresponding to two different paths $S,T \in \Paths(\lambda)$ for every $\lambda \in \Irr{\A_{p,q}^d}$.

The knowledge of explicit action of the generators of $\A_{p,q}^d$ in the Gelfand--Tsetlin basis is useful for quantum computing applications. For example, it yields the efficient quantum circuit for port-based teleportation, see \cref{sec:PBT}. 
\Cref{def:Usch_classical} suggests not only a way of classically computing the matrix entries of the mixed quantum Schur transform unitary but also a quantum circuit for implementing the corresponding isometry (we will use the same notation for both). In the next section, we describe a quantum circuit which implements the needed basis transformation.

%%%%%%%%%%%%%%%%%%%%%%%%%%%%%%%%%%%%%%%%%%%%%%%%%%%%%%%%%%%%%%%%%%%%%%%%%%%%%%%%%%%%%%%%%%%%%%%%%%%%%%%%%%%%%%%%%%%%%%%%%%%%%%%%%%%%%%%%%%%%%%%%%%%%%%%%
\subsection{Quantum circuit for mixed Schur transform}\label{sec:SchTransQuantum}

 In this section, we describe a quantum circuit that implements the mixed quantum Schur transform isometry $\UschPQ$:
\begin{equation}\label{eq:quantum_schur_domain_range}
    \UschPQ \colon
    (\C^d)^{p+q} \to
    \rlap{$\overbrace{\phantom{\C^{\abs{\Irr{\A_2^d}}} \x \dotsb \x
    \C^{\abs{\Irr{\A_{p+q-1}^d}}} \x
    \C^{\abs{\Irr{\A_{p+q}^d}}}}}^T$}
    \C^{\abs{\Irr{\A_2^d}}} \x \dotsb \x
    \C^{\abs{\Irr{\A_{p+q-1}^d}}} \x
    \underbrace{\C^{\abs{\Irr{\A_{p+q}^d}}} \x
    \C^{\abs{\IrrU{d-1}^{p+q}}} \x \dotsb \x
    \C^{\abs{\IrrU{1}^{p+q}}}}_M,
\end{equation}
where the first $p+q-1$ registers of the output correspond to a path $T \in \Paths(\lambda)$ labelling corresponding irreps of $\A_k^d$, the last $d$ registers correspond to a Gelfand--Tsetlin pattern $M \in \GT(\lambda,d)$ for some $\lambda \in \Irr{\A_{p,q}^d}$ and $\IrrU{k}^{p+q}$ is a set of staircases of bounded size, labelling the irreps of the group $\U{k}$. The last vertex $T^{p+q}$ of the path $T$ coincides with the top row $\m_d$ of the Gelfand--Tsetlin pattern $M$, i.e., $T^{p+q}=\m_d=\lambda$; this explains the overlap between $T$ and $M$ in \cref{eq:quantum_schur_domain_range}.

More precisely, for every integer $n \in [p+q]$ and $k \in [d]$ we define the set $\IrrU{k}^n$ of staircases labelling some irreps of $\U{k}$ as
\begin{equation}
    \IrrU{k}^n \defeq 
    \begin{cases}
        \set[\big]{ \lambda \in \IrrU{k} \, \big| \, \abs{\lambda} \leq n} & n<p+q \text{ or } k<d, \\
        \Irr{\A_{p+q}^d} & n=p+q \text{ and } k=d.
    \end{cases}
\end{equation}
The output is \cref{eq:quantum_schur_domain_range} of the quantum mixed Schur transform has tensor product structure for both path $T \in \Paths(\lambda)$ and the Gelfand--Tsetlin pattern $M \in \GT(\lambda,d)$. The natural way of interpreting the rows of a general Gelfand--Tsetlin pattern $M = (\m_d,\m_{d-1},\m_{d-2},\dotsc,\m_{1}) \in \GT(\m_d,d)$ is to use the staircase notation for each row. This means that one should think of the quantum states $\ket{T}$ and $\ket{M}$ as follows:
\begin{align}
    \ket{T} &= \ket{T^2} \otimes \ket{T^3} \otimes \dotsb \otimes \ket{T^{p+q-1}} \otimes \ket{T^{p+q}}, \\
    \ket{M} &= \ket{\m_d} \otimes \ket{\m_{d-1}} \otimes \dotsb \otimes \ket{\m_2} \otimes \ket{\m_1}.
\end{align}
Note that we suppress the registers corresponding to $\ket{T^0}$ and $\ket{T^1}$ of the path $T \in \Paths(\lambda)$ since they are always one-dimensional. Moreover, it is useful to define a shorthand notation $M_{(k)}$ to indicate only the bottom $k$ rows of the Gelfand--Tsetlin pattern $M$ and the corresponding quantum state for any $k \in [d]$ as
\begin{equation}
    \ket{M_{(k)}} \defeq \ket{\m_{k}} \otimes \ket{\m_{k-1}} \otimes \dotsb \otimes \ket{\m_{1}}.
\end{equation}

Our construction is a slight modification of the original quantum Schur transform \cite{bch2006quantumschur,HarrowThesis} for the classical Schur--Weyl duality (i.e., the $q=0$ case of our formalism).
It involves a cascade of Clebsch--Gordan isometries $\CGqc{+}{d}$ (see \cref{fig:CG}), where each isometry implements one of the unitary Clebsch--Gordan transforms $\CG^{k}$ from \cref{def:CGk_map}.
The only modification we make to the original construction is to replace $\CGqc{+}{d}$ by a similarly defined \emph{dual Clebsch--Gordan isometry} $\CGqc{-}{d}$ in the second half of the circuit (see \cref{fig:mixed_sch_circuit}). This immediately leads to the following result which agrees with \cite{bch2006quantumschur,HarrowThesis} in the $q = 0$ case.

\begin{theorem}[Mixed quantum Schur transform]\label{thm:mixed_schur}
The mixed quantum Schur transform isometry (see \cref{fig:mixed_sch_circuit}) for block-diagonalizing the algebra $\A_{p,q}^d$ has a quantum circuit with $\poly(d,p+q,\log1/\epsilon)$ gates, where $\epsilon$ is the desired error, $d$ is the local dimension, and $p$ and $q$ are the parameters of $\A_{p,q}^d$.
\end{theorem}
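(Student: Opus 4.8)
The plan is to adapt the construction of Bacon, Chuang, and Harrow \cite{bch2006quantumschur,HarrowThesis} for the ordinary ($q=0$) quantum Schur transform and to patch it with dual Clebsch--Gordan isometries for the last $q$ coupling steps. By \cref{def:Usch_classical} the mixed Schur transform $\UschPQ$ factors as a cascade of $p+q-1$ Clebsch--Gordan transforms $\CG^{k}$, $k = 2, \dotsc, p+q$: for $k \le p$ the map $\CG^{k}$ couples a fresh qudit $\C^d$ to the unitary-irrep registers and is implemented by the Clebsch--Gordan isometry $\CGqc{+}{d}$, while for $k > p$ it couples a fresh dual qudit $(\C^d)^\ast$ and is implemented by the dual isometry $\CGqc{-}{d}$. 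Since the preceding subsections already show that this composition block-diagonalizes $\A_{p,q}^d$ and realizes the Gelfand--Tsetlin basis of \cref{thm:main}, it suffices to (i) exhibit, for each of $\CGqc{+}{d}$ and $\CGqc{-}{d}$, a quantum circuit with $\poly(d,\log(p+q),\log 1/\epsilon)$ gates, and then (ii) compose the $p+q-1$ blocks, adding gate counts and spectral-norm errors and taking accuracy $\epsilon/(p+q)$ in each block.

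For step (i) I would first record that all labels involved have polynomial description length: a vertex of the Bratteli diagram of \cref{sec:rep_theory_A} is a pair of Young diagrams with at most $d$ rows in total and at most $p+q$ boxes, and a Gelfand--Tsetlin pattern has $d(d+1)/2$ entries lying in $[-q,p]$, so each of the $O(p+q)$ output registers occupies $O(d^2\log(p+q))$ qubits and every classical manipulation of these labels (reading off removable and addable cells, evaluating contents and walled contents, sorting pattern rows) can be done in time $\poly(d,\log(p+q))$. For $\CGqc{+}{d}$ I would then follow the recursion of \cite{bch2006quantumschur,HarrowThesis}: the $\U{d}$ Clebsch--Gordan transform decomposes along the subgroup chain $\U{1} \hookrightarrow \cdots \hookrightarrow \U{d}$ into a ladder of $\U{j}$ Clebsch--Gordan transforms acting on consecutive rows of the Gelfand--Tsetlin pattern, and after sorting the relevant row each $\U{j}$ step is block-diagonal with only $1\times 1$ and $2\times 2$ blocks whose entries are the reduced Wigner coefficients spelled out in \cref{sec:CGcoefficients}. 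A controlled family of rotations by angles of the form $\arccos$ of ratios of square roots of integers bounded by $\poly(p+q)$ is then implemented in the standard way --- a quantum Fourier transform on $\mathbb{Z}_{N}$ with $N = O(p+q)$, a sequence of $O(\log(p+q))$ controlled single-qubit rotations, and an inverse Fourier transform --- using $\poly(d,\log(p+q),\log 1/\epsilon)$ gates for target accuracy $\epsilon/\poly(d,p+q)$.

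The only genuinely new ingredient is $\CGqc{-}{d}$, which decomposes $V_\lambda^{\U{d}} \x (\C^d)^\ast$. In the staircase picture of \cref{sec:mixed diagrams}, coupling a dual qudit amounts to subtracting a unit vector from the staircase, i.e., to either removing a cell from the left Young diagram or adding one to the right --- which is precisely the Bratteli branching used in \cref{thm:main} for $k > p$, and in particular it is still multiplicity-free. Hence the induced change of basis on the $\U{d}$ register is again governed by reduced Wigner operators along the chain $\U{1} \hookrightarrow \cdots \hookrightarrow \U{d}$, and since $(\C^d)^\ast$ differs from $\Lambda^{d-1}(\C^d)$ only by a power of the determinant representation, these dual reduced Wigner operators should inherit the same sparse block structure with only $1\times 1$ and $2\times 2$ blocks, so the identical Fourier-transform-plus-controlled-rotations template implements $\CGqc{-}{d}$ within the same $\poly(d,\log(p+q),\log 1/\epsilon)$ budget. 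I expect the main obstacle to be exactly this point: verifying that peeling off the chain $\U{1} \hookrightarrow \cdots \hookrightarrow \U{d}$ for the dual coupling really produces only $1\times 1$ and $2\times 2$ blocks (so that the $+d$ shift in the walled content of \cref{thm:main} and the two kinds of moves --- removing from the left, adding to the right --- are bookkept consistently), and writing down a closed form for the dual reduced Wigner coefficients that feeds the same synthesis routine. Granting this, step (ii) is routine: the full circuit has $(p+q-1)\cdot\poly(d,\log(p+q),\log 1/\epsilon) = \poly(d,p+q,\log 1/\epsilon)$ gates and, by subadditivity of the spectral-norm error under composition, total error at most $\epsilon$; for $q = 0$ no $\CGqc{-}{d}$ appears and the construction reduces to that of \cite{bch2006quantumschur,HarrowThesis}.
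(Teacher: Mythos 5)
Your proposal follows essentially the same route as the paper: factor $\UschPQ$ into a cascade of $\CGqc{+}{d}$ isometries for the first $p$ couplings and dual isometries $\CGqc{-}{d}$ for the last $q$, and implement each by the Bacon--Chuang--Harrow-style recursion along the chain $\U{1}\hookrightarrow\cdots\hookrightarrow\U{d}$ with reduced Wigner coefficients, composing gate counts and errors over the $p+q-1$ blocks. The one point you flag as the main obstacle---a closed form for the dual reduced Wigner coefficients and their sparse structure---is precisely what the paper imports from the known $\U{d}$ formulas of Vilenkin--Klimyk and spells out in \cref{sec:CGcoefficients} (see \cref{def:reduced_wigner-0,def:reduced_wigner--}), so your plan goes through as outlined.
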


The main new building block of the mixed quantum Schur transform circuit is the dual Clebsch--Gordan isometry $\CGqc{-}{d}$.
It can be obtained by a small modification of the usual $\CGqc{+}{d}$ isometry, which can be taken directly from \cite{bch2006quantumschur,HarrowThesis}.
While both isometries have the same structure, their representation-theoretic interpretation is slightly different:
$\CGqc{+}{d}$ is used when $k \leq p$ to add a new box to $T^{k-1}_l$ producing a new diagram $T^{k}_l$, while $\CGqc{-}{d}$ is used when $k > p$ either to remove a box from $T^{k-1}_l$ or to add a box to $T^{k-1}_r$.
In the staircase notation from \cref{sec:mixed diagrams}, this is equivalent to adding a box to the staircase $T^{k-1}$ when $k \leq p$ and removing a box when $k > p$.
We describe efficient quantum circuits for both isometries in a unified way in \cref{sec:CG circuit}.
Their complexity is $\poly(d,\log(p+q),\log1/\epsilon)$, which leads to the $(p+q)\poly(d,\log(p+q),\log1/\epsilon)$ complexity for the mixed quantum Schur transform $\UschPQ$.

\begin{figure}[!ht]
\centering
\begin{quantikz}[wire types = {q,b,q}, classical gap = 2pt]
\lstick{$\ket{T^{k-1}}$}   & \gate[wires=3]{\CGqc{\pm}{d}} & \rstick{$\ket{T^{k-1}}$} \\
\lstick{$\ket{M_{(d-1)}}$} &                               & \setwiretype{q} \rstick{$\ket{T^{k}}$} \\
\lstick{$\ket{x_{k}}$}     &                               & \setwiretype{b} \rstick{$\ket{N_{(d-1)}}$}
\end{quantikz}
\caption{\label{fig:CG}Input and output registers of the Clebsch--Gordan isometry $\CGqc{\pm}{d}$ which appears in \cref{fig:mixed_sch_circuit}.
Here $T^{k-1}$ and $T^{k}$ denote the incoming and outgoing unitary group irreps (they are denoted by $\lambda$ and $\mu$ in \cref{def:CGk_map}).
Moreover, $T^{k-1}$ is the first row of a Gelfand--Tsetlin pattern $M$ of length $d$ whose remaining rows $M_{(d-1)}$ are encoded by the tensor product state
$\ket{M_{(d-1)}} \defeq \ket{\m_{d-1}} \cdots \ket{\m_{1}}$.
Similarly, $T^{k}$ is the first row of a Gelfand--Tsetlin pattern $N$ of length $d$ whose remaining rows $N_{(d-1)}$ are encoded by
$\ket{N_{(d-1)}} \defeq \ket{\n_{d-1}} \cdots \ket{\n_{1}}$.}
\end{figure}
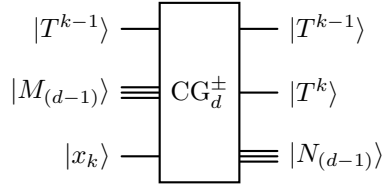

\begin{figure}[!ht]
\centering
\newcommand{\UC}[1]{\gate[wires=3]{\CGqc{#1}{d}}} % Clebsch-Gordan transform gate
\resizebox{0.3\textwidth}{!}{\begin{quantikz}[classical gap = 2pt]
\lstick{$\ket{x_1}$}       & \gate[wires=7]{\UschPQ} & \rstick{$\ket{T^{2}}$} \\
\lstick{$\ket{x_2}$}       &                         & \rstick{$\ket{T^{3}}$} \\
\lstick{$\ket{x_3}$}       &                         & \rstick{$\ket{T^{4}}$} \\
\setwiretype{n} \vdots     &                         & \vdots                       \\
\lstick{$\ket{x_{p+q-2}}$} &                         & \rstick{$\ket{T^{p+q-1}}$} \\
\lstick{$\ket{x_{p+q-1}}$} &                         & \rstick{$\ket{T^{p+q  }}$} \\
\lstick{$\ket{x_{p+q  }}$} &                         & \qb \rstick{$\ket{M_{(d-1)}}$} \\
\end{quantikz}}
%\quad
$=$
%\quad
\resizebox{0.65\textwidth}{!}{
\begin{quantikz}[row sep = 0pt, classical gap = 2pt]
\lstick{$\ket{\of*{\ytableausetup{boxsize=6pt}\ydiagram{1},\0}}$} & \UC+   &         &    \HDots &         &         &    \HDots &         &         &        \rstick{$\ket{T^{1}}$} \\
\lstick{$\ket{x_1}$}       &        & \UC+    &    \HDots &         &         &    \HDots &         &         &        \rstick{$\ket{T^{2}}$} \\
\lstick{$\ket{x_2}$}       &        &     \qb & \qq\HDots &         &         &    \HDots &         &         &        \rstick{$\ket{T^{3}}$} \\
\lstick{$\ket{x_3}$}       &        &         & \qb\HDots & \qq     &         &    \HDots &         &         &        \rstick{$\ket{T^{4}}$} \\
\setwiretype{n} \vdots     & \vdots & \vdots  &    \vdots & \vdots  & \vdots  &    \vdots & \vdots  & \vdots  & \vdots \\[6pt]
\lstick{$\ket{x_{p-2}}$}   &        &         &    \HDots & \UC+    &         &    \HDots &         &         &        \rstick{$\ket{T^{p-1}}$} \\
\lstick{$\ket{x_{p-1}}$}   &        &         &    \HDots &     \qb & \UC-\qq &    \HDots &         &         &        \rstick{$\ket{T^{p  }}$} \\
\lstick{$\ket{x_{p  }}$}   &        &         &    \HDots &         &     \qb & \qq\HDots &         &         &        \rstick{$\ket{T^{p+1}}$} \\
\lstick{$\ket{x_{p+1}}$}   &        &         &    \HDots &         &         & \qb\HDots & \qq     &         &        \rstick{$\ket{T^{p+2}}$} \\
\setwiretype{n} \vdots     & \vdots & \vdots  &    \vdots & \vdots  & \vdots  &    \vdots & \vdots  & \vdots  & \vdots \\[6pt]
\lstick{$\ket{x_{p+q-3}}$} &        &         &    \HDots &         &         &    \HDots & \UC-    &         &        \rstick{$\ket{T^{p+q-2}}$} \\
\lstick{$\ket{x_{p+q-2}}$} &        &         &    \HDots &         &         &    \HDots &     \qb & \UC-\qq &        \rstick{$\ket{T^{p+q-1}}$} \\
\lstick{$\ket{x_{p+q-1}}$} &        &         &    \HDots &         &         &    \HDots &         &     \qb & \qq    \rstick{$\ket{T^{p+q  }}$} \\
\lstick{$\ket{x_{p+q  }}$} &        &         &    \HDots &         &         &    \HDots &         &         & \qb    \rstick{$\ket{M_{(d-1)}}$} \\
\end{quantikz}
}
\caption{Schematic depiction of the mixed quantum Schur transform $\UschPQ$ and its implementation by a cascade of Clebsch--Gordan transforms $\CGqc{\pm}{d}$.
Note that we switch from $\CGqc{+}{d}$ to $\CGqc{-}{d}$ starting at input $\ket{x_{p+1}}$.
Since the topmost register on the right is one-dimensional, i.e., its input and output values are fixed to $\ket{T^{1}} = \ket{(\yd{1},\0)}$, we suppress it in $\UschPQ$.}
\label{fig:mixed_sch_circuit}
\end{figure}

%%%%%%%%%%%%%%%%%%%%%%%%%%%%%%%%%%%%%%%%%%%%%%%%%%%%%%%%%%%%%%%%%%%%%%%%%%%%%%%%%%%%%%%%%%%%%%%%%%%%%%%%%%%%%%%%%%%%%%%%%%%%%%%%%%%%%%%%%%%%%

\subsection{Quantum circuit for Clebsch--Gordan transforms}\label{sec:CG circuit}

In this section, we describe a recursive quantum circuit of complexity $\poly(d,\log(p+q),\log1/\epsilon)$ for the (dual) Clebsch--Gordan isometry $\CGqc{\pm}{d}$.
The only difference between $\CGqc{-}{d}$ and $\CGqc{+}{d}$ is the formulas for their matrix entries and the labelling scheme of their input and output basis vectors.
Our construction is based on \cite{bch2006quantumschur,HarrowThesis} and can be seen as a consequence of the fact that both the usual and the dual Clebsch--Gordan coefficients can be expressed as products of \emph{reduced Wigner coefficients}\footnote{In \cite{bch2006quantumschur,HarrowThesis} this fact is proved based on the Wigner--Eckart theorem. In our approach, the starting point is the fact that Clebsch--Gordan coefficients can be expressed as products of reduced Wigner coefficients, see \cite{Vilenkin1992}. Consequently, we try to keep the notation from \cite{Vilenkin1992}, see \cref{sec:CGcoefficients}.}, see \cref{sec:CGcoefficients}.

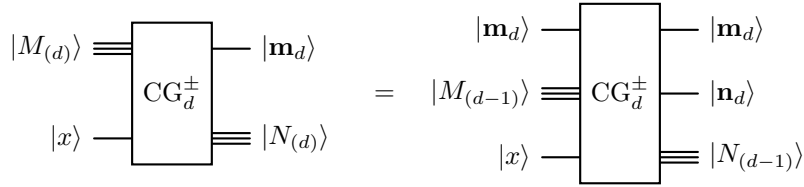
\begin{figure}[!ht]
\centering
%%%%%%%%%%%%%%%%%%%%%%%%%%%%%%%%%%%%%%
\begin{quantikz}[wire types = {b,q}, classical gap = 2pt]
\lstick{$\ket{M_{(d)}}$} & \gate[wires=2]{\CGqc{\pm}{d}} & \setwiretype{q} \rstick{$\ket{\m_d}$} \\
\lstick{$\ket{x}$}       &                               & \setwiretype{b} \rstick{$\ket{N_{(d)}}$}
\end{quantikz}
%%%%%%%%%%%%%%%%%%%%%%%%%%%%%%%%%%%%%%
\quad$ = $\quad
%%%%%%%%%%%%%%%%%%%%%%%%%%%%%%%%%%%%%%
\begin{quantikz}[wire types = {q,b,q}, classical gap = 2pt]
\lstick{$\ket{\m_d}$}      & \gate[wires=3]{\CGqc{\pm}{d}} & \rstick{$\ket{\m_d}$} \\
\lstick{$\ket{M_{(d-1)}}$} &                               & \setwiretype{q} \rstick{$\ket{\n_d}$} \\
\lstick{$\ket{x}$}         &                               & \setwiretype{b} \rstick{$\ket{N_{(d-1)}}$}
\end{quantikz}
%%%%%%%%%%%%%%%%%%%%%%%%%%%%%%%%%%%%%%
\caption{Input and output registers of the quantum gate $\CGqc{\pm}{d}$ that implements the $\CG^{k}$ transform from \cref{def:CGk_map}.
Here $\m_d,\n_d$ denote the top rows of the Gelfand--Tsetlin patterns $M$ and $N$, indicating irrep labels. We denote the last $d-1$ rows of $M$ by $M_{(d-1)}$ and encode it as a tensor product state: $\ket{M_{(d-1)}} = \ket{\m_{d-1}} \cdots \ket{\m_{1}}$.}
\end{figure}

Let us fix an arbitrary level $k = 2, \dotsc, p+q$ in the Bratteli diagram of $\A_{p,q}^d$, which is the same as fixing a position in the cascade of Clebsch--Gordan transforms in \cref{fig:mixed_sch_circuit}. Then for any $x \in [d]$ and $\n_d, \m_d \in \IrrU{d}^n$,
we define
\begin{equation}
    C^{x,\pm}_{\n_{d},\m_{d}} \ket{M_{(d-1)}} \defeq \sum_{\substack{ N \in \GT(\n_{d},d)}} c^{x,\pm}_{N,M} \ket{N_{(d-1)}}
\end{equation}
where $c^{x,\pm}_{N,M}$ are defined in \cref{sec:CGcoefficients} and $\pm$ refers to either dual or direct Clebsch--Gordan coefficients. The operators $C^{x,\pm}_{\n_{d},\m_{d}}$ defined above are essentially the same as classical matrices $C^{x_{j}}_{T^{j}, T^{j-1}}$ defined in \cref{eq:cg_matrix_entires}. Now we can define quantum Clebsch--Gordan transforms $\CGqc{\pm}{d}$, which are quantum analogues of $\CG^{k}$ from \cref{def:CGk_map}, as
\begin{equation}
    \CGqc{\pm}{d} \ket{M_{(d)}} \ket{x} \defeq \ket{\m_{d}} \sum_{\n_{d} \,:\, \m_{d} \rightarrow_{\pm} \n_{d} } \ket{\n_d} \of[\big]{C^{x,\pm}_{\n_{d},\m_{d}}\ket{M_{(d-1)}}},
\end{equation}
where the notation $\n_{d} \,:\, \m_{d} \rightarrow_{\pm} \n_{d} $ means that we obtain a staircase $\n_{d}$ by either adding or removing a box from the staircase $\m_d$, depending on the chosen $\CGqc{\pm}{d}$ Clebsch--Gordan transform.
More explicitly, we can write
\begin{equation}\label{eq:cg_qc_def}
    \CGqc{\pm}{d} = \sum_{x \in [d]} \sum_{\substack{\m_{d},\n_{d} \in \IrrU{d}^{p+q} }} \sum_{\substack{ M \in \GT(\m_{d},d) \\ N \in \GT(\n_{d},d) }} c^{x,\pm}_{N_{(d)}, M_{(d)}} \ketbra{\m_{d},\n_{d},N_{(d-1)}}{\m_{d},M_{(d-1)},x}.
\end{equation}
Our goal is to recursively implement $\CGqc{\pm}{d}$ in terms of $\CGqc{\pm}{d-1}$.
Using \cref{eq:cg_qc_def} we can easily write the Clebsch--Gordan transform for $d-1$:
\begin{equation}
    \CGqc{\pm}{d-1} = \sum_{x \in [d-1]}  \sum_{\substack{\m_{d-1},\n_{d-1} \in \IrrU{d-1}^{p+q}}} \sum_{\substack{\\ M \in \GT(\m_{d-1},d-1) \\ N \in \GT(\n_{d-1},d-1) }} c^{x,\pm}_{N_{(d-1)}, M_{(d-1)}} \ketbra{\m_{d-1},\n_{d-1},N_{(d-2)}}{\m_{d-1},M_{(d-2)},x}.
\end{equation}
Recall from \cref{sec:CGcoefficients}
that for every $k \in [d]$ and Gelfand--Tsetlin patterns of length $k-1$,
\begin{equation}
    c^{k,\pm}_{N_{(k-1)},M_{(k-1)}} \defeq \delta_{N_{(k-1)},M_{(k-1)}}.
\end{equation}
Moreover, for any
symbol
$x \in [k]$,
staircases
$\m_k, \n_k \in \IrrU{k}^{p+q}$,
and Gelfand--Tsetlin patterns
$M \in \GT(\m_{k},k)$,
$N \in \GT(\n_{k},k)$
the following recursive identity holds:
\begin{equation}
    c^{x,\pm}_{N_{(k)},M_{(k)}} =
        \rwpm{\m_k}{\n_k}{\m_{k-1}}{\n_{k-1}} \cdot c^{x,\pm}_{N_{(k-1)},M_{(k-1)}}.
\end{equation}
Moreover, note that
\begin{align}
    & \ketbra{\m_{d},\n_{d},N_{(d-1)}}{\m_{d},M_{(d-1)},x} \nonumber \\
    &= \ketbra{\m_{d},\n_{d},\n_{d-1},N_{(d-2)}}{\m_{d},\m_{d-1},M_{(d-2)},x} \\
    &= \of[\big]{\ketbra{\m_d,\n_d,\n_{d-1}}{\m_d,\m_{d-1},\n_{d-1}} \otimes I} \cdot
    \of[\big]{I \otimes \ketbra{\m_{d-1},\n_{d-1},N_{(d-2)}}{\m_{d-1},M_{(d-2)},x}},
\end{align}
which implies
\begin{align}
     & c^{x,\pm}_{N_{(d)},M_{(d)}} \ketbra{\m_{d},\n_{d},N_{(d-1)}}{\m_{d},M_{(d-1)},x} \nonumber \\
     &= \of[\big]{\rwpm{\m_d}{\n_d}{\m_{d-1}}{\n_{d-1}} \ketbra{\m_d,\n_d,\n_{d-1}}{\m_d,\m_{d-1},\n_{d-1}} \otimes I} \cdot
     \nonumber \\
     & \hspace{+3cm} \cdot \of[\big]{c^{x,\pm}_{N_{(d-1)},M_{(d-1)}} I \otimes \ketbra{\m_{d-1},\n_{d-1},N_{(d-2)}}{\m_{d-1},M_{(d-2)},x} }.
\end{align}
Together, these observations allow us rewrite \cref{eq:cg_qc_def} as
\begin{align}
    \CGqc{\pm}{d} &=  \sum_{x \in [d]}  \sum_{\substack{\m_{d},\n_{d} \in \IrrU{d}^{p+q} }} \sum_{\substack{ M \in \GT(\m_{d},d) \\ N \in \GT(\n_{d},d) }} \ketbra{\m_{d},\n_{d},N_{(d-1)}}{\m_{d},M_{(d-1)},x} \\
    &=\sum_{x \in [d]} \sum_{\substack{\m_{d},\n_{d} \in \IrrU{d}^{p+q} \\ \m_{d-1},\n_{d-1} \in \IrrU{d-1}^{p+q} }} \sum_{\substack{ M \in \GT(\m_{d-1},d-1) \\ N \in \GT(\n_{d-1},d-1) }} \of[\big]{\rwpm{\m_d}{\n_d}{\m_{d-1}}{\n_{d-1}} \ketbra{\m_d,\n_d,\n_{d-1}}{\m_d,\m_{d-1},\n_{d-1}} \otimes I} \cdot \\
    & \hspace{+6cm} \cdot \of[\big]{c^{x,\pm}_{N_{(d-1)},M_{(d-1)}} I \otimes \ketbra{\m_{d-1},\n_{d-1},N_{(d-2)}}{\m_{d-1},M_{(d-2)},x} } \nonumber \\
    &= \of[\Bigg]{  \sum_{\substack{\m_{d},\n_{d} \in \IrrU{d}^{p+q} \\ \m_{d-1},\n_{d-1} \in \IrrU{d-1}^{p+q} }} \rwpm{\m_d}{\n_d}{\m_{d-1}}{\n_{d-1}} \ketbra{\m_d,\n_d,\n_{d-1}}{\m_d,\m_{d-1},\n_{d-1}} \otimes I} \cdot \\
    & \hspace{+10pt} \cdot \of[\Bigg]{I \otimes  \sum_{x \in [d]} \sum_{\substack{\m_{d-1},\n_{d-1} \in \IrrU{d-1}^{p+q}}} \sum_{\substack{\\ M \in \GT(\m_{d-1},d-1) \\ N \in \GT(\n_{d-1},d-1) }} c^{x,\pm}_{N_{(d-1)},M_{(d-1)}} \ketbra{\m_{d-1},\n_{d-1},N_{(d-2)}}{\m_{d-1},M_{(d-2)},x} } \nonumber \\
    &= \of[\big]{C^{\pm}_d \otimes I} \cdot \of[\big]{I \otimes \CGqcTilde{\pm}{d-1}}, \label{eq:main_obs_recursion}
\end{align}
where we introduced the following two operators
\begin{align}
    \label{def:dualCG_C_op}
    C^{\pm}_d &\defeq \sum_{\substack{\m_{d},\n_{d} \in \IrrU{d}^{p+q} \\ \m_{d-1},\n_{d-1} \in \IrrU{d-1}^{p+q} }} \rwpm{\m_d}{\n_d}{\m_{d-1}}{\n_{d-1}} \ketbra{\m_d,\n_d,\n_{d-1}}{\m_d,\m_{d-1},\n_{d-1}}, \\
    \CGqcTilde{\pm}{d-1} &\defeq \sum_{x \in [d]} \sum_{\substack{\m_{d-1},\n_{d-1} \in \IrrU{d-1}^{p+q}}} \sum_{\substack{\\ M \in \GT(\m_{d-1},d-1) \\ N \in \GT(\n_{d-1},d-1) }} c^{x,\pm}_{N_{(d-1)},M_{(d-1)}} \ketbra{\m_{d-1},\n_{d-1},N_{(d-2)}}{\m_{d-1},M_{(d-2)},x} \nonumber \\
             &= \CGqc{\pm}{d-1} + \sum_{\m_{d-1} \in \IrrU{d-1}^{p+q}} \sum_{\substack{M \in \GT(\m_{d-1},d-1)}} \ketbra{\m_{d-1},\m_{d-1},M_{(d-2)}}{\m_{d-1},M_{(d-2)},d},
\end{align}
where the last term corresponds to $x = d$.
We can translate \cref{eq:main_obs_recursion} into a quantum circuit shown in \cref{fig:dCG_recurse}. This procedure can be continued recursively on the parameter $d$.

\begin{figure}[!ht]
\centering
\begin{quantikz}[row sep = 20pt, classical gap = 2pt]
\lstick{$\ket{\m_{d}}$}    & \gateCGpm{d}    &     \rstick{$\ket{\m_{d}}$} \\
\lstick{$\ket{M_{(d-1)}}$} &             \qb & \qq \rstick{$\ket{\n_{d}}$}  \\
\lstick{$\ket{x}$}         &                 & \qb \rstick{$\ket{N_{(d-1)}}$}
\end{quantikz}
\qquad$=$\qquad
\begin{quantikz}[row sep = 0pt, classical gap = 2pt]
\lstick{$\ket{\m_{d}}$}        &                        & \gateCpm{d} & \rstick{$\ket{\m_{d}}$} \\
\lstick{$\ket{\m_{d-1}}$}      & \gateCGtilde{\pm}{d-1} &             & \rstick{$\ket{\n_{d}}$} \\
\lstick{$\ket{M_{(d-2)}}$} \qb &                        & \qq         & \rstick{$\ket{\n_{d-1}}$} \\
\lstick{$\ket{x}$}             &                        & \qb         & \rstick{$\ket{N_{(d-2)}}$}
\end{quantikz}
\caption{Recursive implementation of $\CGqc{\pm}{d}$. The registers $\ket{M_{(d-1)}}$ and $\ket{N_{(d-1)}}$ on the left-hand side should be understood as tensor products $\ket{M_{(d-1)}} = \ket{\m_{d-1}}\ket{M_{(d-2)}}$ and $\ket{N_{(d-1)}} = \ket{\n_{d-1}}\ket{N_{(d-2)}}$.}
\label{fig:dCG_recurse}
\end{figure}

Note from \cref{def:dualCG_C_op} that $C^{\pm}_d$ is a controlled operation acting on the middle register:
\begin{equation}
    C^{\pm}_d = \sum_{\m_{d} \in \IrrU{d}^{p+q}} \sum_{\n_{d-1} \in \IrrU{d-1}^{p+q}} \proj{\m_d} \otimes C^{\pm}_{\m_{d},\n_{d-1}} \otimes \proj{\n_{d-1}},
\end{equation}
where we define $C^{\pm}_{\m_{d},\n_{d-1}}$ as
\begin{align}
    \label{def:CG_C_op_small}
    C^{\pm}_{\m_{d},\n_{d-1}} &\defeq  \sum_{\m_{d-1} \in \IrrU{d-1}^{p+q}} \sum_{\n_{d} \in \IrrU{d}^{p+q}} \rwpm{\m_d}{\n_d}{\m_{d-1}}{\n_{d-1}} \ketbra{\n_d}{\m_{d-1}}.
\end{align}
We can think of $C^{\pm}_d$ as the quantum circuit shown in \cref{fig:dCG_recurse_2}. Crucially, the $C^{\pm}_{\m_{d},\n_{d-1}}$ operator is essentially a $d \times d$ matrix since most of the coefficients $\rwpm{\m_d}{\n_d}{\m_{d-1}}{\n_{d-1}}$ are zero, see \cref{sec:CGcoefficients}. The operator $C^{\pm}_{\m_{d},\n_{d-1}}$ admits an efficient implementation as a quantum circuit because the reduced Wigner coefficients $\rwpm{\m_d}{\n_d}{\m_{d-1}}{\n_{d-1}}$ are efficiently computable, see \cref{def:reduced_wigner-0,def:reduced_wigner--,def:reduced_wigner+0,def:reduced_wigner++}.

From these formulas, we see that computing the reduced Wigner coefficients has complexity $\poly(d,\log(p+q))$. Therefore, the complexity of implementing the operator $C^{\pm}_d$ to accuracy $\epsilon$ is $\poly(d,\log(p+q),\log(1/\epsilon))$. This is the same complexity as in \cite{bch2006quantumschur,HarrowThesis}.

\begin{figure}[!ht]
\centering
\begin{quantikz}[row sep = 10pt, classical gap = 2pt]
\lstick{$\ket{\m_{d}}$}      \qq & \gateCpm{d} & \rstick{$\ket{\m_{d}}$} \\
\lstick{$\ket{\m_{d-1}}$}    \qq &             & \rstick{$\ket{\n_{d}}$} \\
\lstick{$\ket{\n_{d-1}}$}    \qq  &             & \rstick{$\ket{\n_{d-1}}$}
\end{quantikz}
\qquad$=$\qquad
\begin{quantikz}[row sep = 10pt, classical gap = 2pt]
\lstick{$\ket{\m_{d}}$}       \qq &    \ctrl{1}                           & \rstick{$\ket{\m_{d}}$} \\
\lstick{$\ket{\m_{d-1}}$}     \qq &    \gate{C^{\pm}_{\m_{d},\n_{d-1}}}   & \rstick{$\ket{\n_{d}}$} \\
\lstick{$\ket{\n_{d-1}}$}     \qq &    \ctrl{-1}                          & \rstick{$\ket{\n_{d-1}}$}
\end{quantikz}
\caption{Circuit for the $C^{\pm}_d$ operator from \cref{def:dualCG_C_op}.}
\label{fig:dCG_recurse_2}
\end{figure}
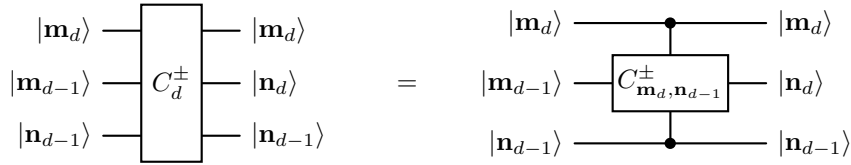

%%%%%%%%%%%%%%%%%%%%%%%%%%%%%%%%%%%%%%%%%%%%%%%%%%%%%%%%%%%%%%%%%%%%%%%%%%%%%%%%%%%%%%%%%%%%%%%%%%%%%%%%%%%%%%%%%%%%%%%%%%%%%%%%%%%%%%%%%%%%%%%%%%%%%%%%%%%%%%%%%%%%%%%%%%%%%%%%%%%%%%%%%%%%%%%%%%%%%%%%%%%%%%%%%%%%
%%%%%%%%%%%%%%%%%%%%%%%%%%%%%%%%%%%%%%%%%%%%%%%%%%%%%%%%%%%%%%%%%%%%%%%%%%%%%%%%%%%%%%%%%%%%%%%%%%%%%%%%%%%%%%%%%%%%%%%%%%%%%%%%%%%%%%%%%%%%%%%%%%%%%%%%%%%%%%%%%%%%%%%%%%%%%%%%%%%%%%%%%%%%%%%%%%%%%%%%%%%%%%%%%%%%

\section{Unitary-equivariant SDPs}\label{sec:SDP}

\subsection{Unitary-equivariant quantum channels}
\label{sec:Unitary-equivariant quantum channels}

A quantum channel $\Phi\colon \End(V^p) \to \End(V^q)$ is \emph{locally $\U{d}$-equivariant} or simply \emph{unitary-equivariant} if
\begin{equation}
    \Phi\of[\big]{U\xp{p} \, \rho \, U\ctxp{p}}
    = U\xp{q} \, \Phi(\rho) \, U\ctxp{q}
    \label{eq:Phi equivariance}
\end{equation}
for every $U \in \U{d}$ and quantum state $\rho$.
A convenient way of representing any quantum channel is by its \emph{Choi matrix}.
In particular, for a quantum channel $\Phi\colon \End(V^p) \to \End(V^q)$, its Choi matrix $X^\Phi \in \End(V^{p,q}_d)$ is defined as
\begin{equation}
    X^\Phi \defeq \sum_{\substack{i_1,\dotsc,i_p \in [d] \\ j_1,\dotsc,j_p \in [d]}}
    \ketbra{i_1,\dotsc,i_p}{j_1,\dotsc,j_p} \x
    \Phi\of[\big]{\ketbra{i_1,\dotsc,i_p}{j_1,\dotsc,j_p}}
\end{equation}
where the sum runs over the computational basis of $V^p_d$.
Recall that the action of $\Phi$ on state $\rho$ can be recovered from its Choi matrix $X^\Phi$ via $\Phi(\rho) = \Tr_{V^p_d} \sof[\big]{X^\Phi (\rho\tp \x I_{V^q_d})}$.
Furthermore, a given matrix $X \in \End(V^{p,q}_d)$ describes a quantum channel if and only if
% it is positive semidefinite and trace-preserving, i.e.,
$X \succeq 0$ and $\Tr_{V^q_d} (X) = I_{V^p_d}$ \cite{watrous}.

Let $X^\Phi \in \End(V^{p,q}_d)$ be the Choi matrix of a channel $\Phi$. Then $\Phi$ is unitary-equivariant if and only if
\begin{equation}
    \sof[\big]{
        X^\Phi,
        U\xp{p} \x \bar{U}\xp{q}
    } = 0,
    \label{eq:Choi equivariance}
\end{equation}
for all $U \in \U{d}$ \cite{grinko2022linear}.
Thanks to Schur's lemma, applying the mixed quantum Schur transform (\ref{eq:Sch}) decomposes $X^\Phi$ as follows:
\begin{equation}
    \label{eq:matrix_units}
    \UschPQ
    X^\Phi
    \UschinvPQ
    =
    \bigoplus_{\lambda \in \Irr{\A_{p,q}^{d}}}
    \sum_{T,S \in \Paths(\lambda)}
    x_{TS}
    \ketbra{T}{S} \otimes I_{m_{\lambda}}
\end{equation}
where $m_{\lambda}$ is the dimension of the $\lambda$-irrep of $\U{d}$, $I_{m_{\lambda}} = \sum_{M \in \GT(\lambda,d)} \proj{M}$ is the identity matrix on the unitary irrep register $\lambda$, and
\begin{equation}
    \E_{TS} \defeq \UschinvPQ \of[\bigg]{ \bigoplus_{\mu \in \Irr{\A_{p,q}^{d}}} \delta_{\lambda \mu} \ketbra{T}{S} \otimes I_{m_{\lambda}}} \UschPQ
\end{equation}
is the so-called \emph{matrix unit} algebra $\A_{p,q}^d$ corresponding to the pair of paths $T,S \in \Paths(\lambda)$.
These matrix units form a basis of $\A_{p,q}^d$ and span the whole algebra:
\begin{equation}
    \A_{p,q}^d = \spn \Set*{ \E_{TS} \given \lambda \in \Irr{\A_{p,q}^{d}},\, S,T \in \Paths(\lambda) }
\end{equation}

Notice that the trace of a matrix unit for $T,S \in \Paths(\lambda)$ is
\begin{equation}
    \Tr(\E_{TS}) = \delta_{TS} {m_{\lambda}},
\end{equation}
and their product satisfies
\begin{equation}
    \E_{ST} \cdot \E_{T'S'} = \delta_{TT'} \E_{SS'}.
\end{equation}

\subsection{Full trace}\label{sec:Trace}

In this section, we present methods to efficiently compute the trace $\Tr (YX)$ of the product of two matrices $X$ and $Y$, where $X$ is of the form (\ref{eq:matrix_units}), i.e.,
\begin{equation}
    \label{eq:Xmatrix}
    X
    = \sum_{\lambda \in \Irr{\A_{p,q}^{d}}} \sum_{S,T \in \Paths(\lambda)} x_{ST} \E_{ST},
\end{equation}
and $Y$ is assumed to be of one of three special forms: either corresponding to a matrix unit, to an entry in the matrix written in the computational basis or corresponding to a walled Brauer diagram.
Those methods will be crucial for semi-definite
optimization problems over unitary equivariant quantum channels.

%%%%%%%%%%%%%%%%%%%%%%%%%%%%%%%%%%%%%%%%%%%%%%%%%%%%%%%%%%%%%%%%%%%%%%%%%%%%%%%%%%%%%%%%%%%%%%%%%%%%%%%%%%%%%%%%%%%%%%%%%%%%%%%%%%%%%%%%%%%%%%%%%%%%%%%%%%%%%%%
\subsubsection{}
Firstly, assume that matrix $Y$ is also a matrix unit, i.e. $Y =\E_{S,T}$ for some $\E_{S, T} \in \Epq$ with $S, T \in \Paths(\lambda)$. $X = \sum_{S',T'} x_{S',T'} \E_{S',T'}$. As the matrix $X$ is of the form (\ref{eq:Xmatrix}), by elementary properties of matrix units, we have
\begin{equation}
  \Tr(Y X) =   \sum_{T',S'} x_{T',S'} \Tr(\E_{S,T} \E_{T',S'} ) = x_{T,S}  \cdot m_{\lambda}.
\end{equation}
Hence, the computational complexity of computing all coefficients in front of $x_{T,S}$ of this operation equals the complexity of the computation of the number $m_{\lambda}$, which is $O((p+q)\log^2(p+q+d))$.

%%%%%%%%%%%%%%%%%%%%%%%%%%%%%%%%%%%%%%%%%%%%%%%%%%%%%%%%%%%%%%%%%%%%%%%%%%%%%%%%%%%%%%%%%%%%%%%%%%%%%%%%%%%%%%%%%%%%%%%%%%%%%%%%%%%%%%%%%%%%%%%%%%%%%%%%%%%%%%%
\subsubsection{}
Secondly, assume that matrix $Y$ is written in the computational basis and has only a single non-zero entry, i.e. $Y=\ketbra{I}{J}$ for some $I=i_1,\dotsc,i_{p+q}$ and $J =j_1,\dotsc,j_{p+q}$. In order to compute the trace $\Tr(Y X)$, we need to re-write the entries of matrix (\ref{eq:Xmatrix}) into computational basis by applying Schur transform:
\begin{equation}
\label{eq:a1}
    \Tr(\ketbra{I}{J} X) = \bra{J}X\ket{I} = \sum_{\lambda\in \Irr{\A_{p,q}^{d}}} \sum_{S,T \in \Paths(\lambda)} x_{TS} \sum_{M \in \GT(\lambda,d)} \bra{J} \UschinvPQ \ket{(T,M)} \bra{(S,M)} \UschPQ \ket{I}
\end{equation}
Using previously derived form of Schur transform (\ref{eq:USch4}), the computation of $\bra{(T,M)} \UschPQ \ket{I} = \bra{I} \UschinvPQ \ket{(T,M)}$ is reduced to the multiplication of Clebsch--Gordan matrices:
\begin{equation}
    \bra{I} \UschinvPQ \ket{(T,M)} =
    \begin{cases}
        \bra{M} C_{T^{p+q} T^{p+q-1}}^{i_{p+q}, w(i_{p+q-1},\dotsc,i_1)}
\cdots
C_{T^{2} T^{1}}^{i_{2}} \ket{i_1} &\, w(I) = w(M), \\
        0 &\, w(I) \neq w(M), \\
    \end{cases}
\end{equation}
where $\ket{M}$ is a Gelfand--Testlin basis vector for the irreducible representation of a unitary group corresponding to $\lambda$.
Notice that $\sum_{M\in \GT(\lambda,d)} \ketbra{M}{M} = I_{m_{\lambda}}$, hence we can rewrite (\ref{eq:a1}) as
\begin{equation}
\label{eq:a3}
    \bra{J} X \ket{I} =
    \begin{cases}
         \sum_{S,T} x_{S,T} \bra{j_1} \of*{C_{S^{2} S^{1}}^{i_{2}}} \ct \cdots \of*{C_{S^{p+q} S^{p+q-1}}^{j_{p+q}, w(j_{p+q-1},\dotsc,j_1)}} \ct C_{T^{p+q} T^{p+q-1}}^{i_{p+q}, w(i_{p+q-1},\dotsc,i_1)}
\cdots
C_{T^{2} T^{1}}^{i_{2}} \ket{i_1} &\,  w(I) = w(J),\\
         0 &\, w(I) \neq w(J).
    \end{cases}
\end{equation}
As it we demonstrated in \cref{sec:MPS}, matrices in the above equation have dimensions given by Kostka numbers $K_{T^{(k)},w(i_k,\dotsc,i_1)}$ and $K_{S^{(k)},w(j_k,\dotsc,j_1)}$ respectively, and computation of complexity of computing (\ref{eq:a3}) is given by $(p+q)^{O(d^2)}$.

%%%%%%%%%%%%%%%%%%%%%%%%%%%%%%%%%%%%%%%%%%%%%%%%%%%%%%%%%%%%%%%%%%%%%%%%%%%%%%%%%%%%%%%%%%%%%%%%%%%%%%%%%%%%%%%%%%%%%%%%%%%%%%%%%%%%%%%%%%%%%%%%%%%%%%%%%%%%%%%
\subsubsection{}\label{sec:trace_comp_pi}
Lastly, assume that the matrix $Y$ is an image of a single Brauer diagram $\pi \in \B_{p,q}^d$, i.e. $Y = \psi_{p,q}^d (\pi)\in \A_{p,q}^d$. In order to compute trace $\Tr(\psi_{p,q}^d (\pi) X)$, we shall compute matrix entries of $\psi_{p,q}^d (\pi)$ in the Gelfand--Tstelin basis of all irreducible representations of $\A_{p,q}^d$. For that purpose, we shall present diagram $\pi\in \B_{p,q}^d$ in terms of generators $\sigma_i\in \B_{p,q}^d$ of the walled Brauer algebra $\B_{p,q}^d$.
Assume that diagram $\pi$ has exactly $k$ contractions.
By applying certain permutations $\sigma^u_l, \sigma^d_l \in \S_p$ and $\sigma^u_r, \sigma^d_r \in \S_q$ acting on the left and rights sides of the diagram $\pi$, it can be represented as
\begin{equation}
    \pi = (\sigma^u_l \otimes \sigma^u_r) \overline{\sigma}_k (\sigma^d_l \otimes \sigma^d_r)
\end{equation}
where $\overline{\sigma}_k \defeq \prod_{i=0}^{k-1} \overline{(p-i,p+1+i)}$ is represented by the diagram with $k$ contractions located near the wall, i.e
\begin{equation}
  \begin{tikzpicture}[baseline = 1cm]
    \BrauerTikZStyle
    % p and q
    \curlybrace{0.5*6.5}{1.4}{1.4} \node at (0.5*6.5,1.9) {$k$};
    % Nodes
    \foreach \i in {1,3,4,6,7,9,10,12} {
      \fill (0.5*\i,1) circle [dot] coordinate (A\i);
      \fill (0.5*\i,-0.2) circle [dot] coordinate (B\i);
    }
    % Edges
    \draw (A1) to [l] (B1);
    \draw (A3) to [l] (B3);
    \draw (A4) to [u] (A9);
    \draw (A6) to [u] (A7);
    \draw (B4) to [n] (B9);
    \draw (B6) to [n] (B7);
    \draw (A10) to [n] (B10);
    \draw (A12) to [l] (B12);
    % Wall
    \draw [dashed] (0.5*6.5,1.2) -- (0.5*6.5,-0.2);
     \node at (0.5*2,1) {$\cdots$};
     \node at (0.5*2,-0.5) {$\cdots$};
     \node at (0.5*5,1) {$\cdots$};
     \node at (0.5*5,-0.5) {$\cdots$};
     \node at (0.5*8,1) {$\cdots$};
     \node at (0.5*8,-0.5) {$\cdots$};
     \node at (0.5*11,1) {$\cdots$};
     \node at (0.5*11,-0.5) {$\cdots$};
  \end{tikzpicture}
\end{equation}
As $\sigma^u_l, \sigma^d_l \in \S_p$ they might be written as a product of at most $p^2$ generators $\sigma_i$ of the algebra $\B_{p,q}^d$.
Similarly, $\sigma^u_r, \sigma^d_r \in \S_q$ can be written as a product of $q^2$ generators $\sigma_i$ of the algebra $\B_{p,q}^d$.
Furthermore, the diagram $\overline{\sigma}_k$ can be decomposed into generators $\sigma_i$ of the algebra $\B_{p,q}^d$ as follows:
\begin{equation}
 \overline{\sigma}_k =
  (\sigma_p\sigma_{p+1}\cdots
  \sigma_{p-k+1}\cdots\sigma_{p-1}\sigma_{p+k-1})
\cdots
(\sigma_p\sigma_{p+1}\sigma_{p+2}\sigma_{p-1}\sigma_{p-2})
 (\sigma_p\sigma_{p+1}\sigma_{p-1})
 \sigma_p
\end{equation}
which is of the length given by $k^2$.
Altogether, the decomposition of an arbitrary diagram $\pi$ into generators of the algebra $\B_{p,q}^d$ requires $O((p+q)^2)$ multiplications of these generators.
We shall multiply those generators separately in the Gelfand--Tsetlin basis for each irreducible representation related to $\lambda \in \Irr{\A_{p,q}^{d}}$. Such a multiplication has a complexity $O(d_\lambda^3)$, where $d_\lambda$ is a dimension of irreducible representation corresponding to $\lambda$.
Due to the fact \cite{grinko2022linear}
\begin{equation}
    \dim{\A_{p,q}^d} = \sum_{\lambda \in \Irr{\A_{p,q}^{d}}} d_\lambda^2,
\end{equation}
complexity of computing the aforementioned product of generators is $O\of*{\of*{\dim{\A_{p,q}^d}}^{3/2}}$
\footnote{It seems, that the factor $\tfrac{3}{2}$ in this estimation could be dropped by applying Fast Fourier Transform (FFT). Indeed, FFT was successfully adapted to the setting of finite groups and some finite-dimensional semisimple algebras \cite{FFT2,maslen2018efficient} and could be easily adapted for the full walled Brauer algebra $\B_{p,q}^d$, when it is semisimple. However, for the algebra of partially transposed permutations $\A_{p,q}^d$, it is not clear to us how to present the set of vectors which span the entire algebra $\A_{p,q}^d$. The non-triviality of the ideal $\ker(\psi_{p,q}^d)$ makes the adaptation of \cite{FFT2,maslen2018efficient} highly nontrivial, so we leave it for future work.}.
Combing everything together, the total complexity of computing all matrix units of $\psi_{p,q}^d (\pi) $, i.e. computing $\psi(\pi)_{S,T} \defeq \Tr (\psi_{p,q}^d (\pi) \E_{S,T})$ is $O\of*{(p+q)^2\of*{\dim{\A_{p,q}^d}}^{3/2}}$.
Note that having the matrix units of $\psi_{p,q}^d (\pi)$ allows us to write
\begin{equation}
 \Tr(YX)=
    \Tr \of*{ \psi_{p,q}^d (\pi) X }
    = \sum_{\lambda \in \Irr{\A_{p,q}^{d}}} \sum_{S,T \in \Paths(\lambda)} \psi(\pi)_{S,T} \Tr(\E_{S,T}X)
    = \sum_{\lambda \in \Irr{\A_{p,q}^{d}}} \sum_{S,T \in \Paths(\lambda)} \psi(\pi)_{S,T} \cdot x_{T,S} \cdot m_{\lambda}.
\end{equation}

%%%%%%%%%%%%%%%%%%%%%%%%%%%%%%%%%%%%%%%%%%%%%%%%%%%%%%%%%%%%%%%%%%%%%%%%%%%%%%%%%%%%%%%%%%%%%%%%%%%%%%%%%%%%%%%%%%%%%%%%%%%%%%%%%%%%%%%%%%%%%%%%%%%%%%%%%%%%%%%
\subsection{Partial trace}\label{sec:Partial trace}

In this section, we present methods to efficiently compute the partial trace $\Tr_{S}(XY)$ of a product of two matrices where $X$ is presented as a linear combination of matrix units (\ref{eq:Xmatrix}), and $Y$ is also of one of the special form. For simplicity, we assume that traced out systems are always the last system, i.e. we compute $\Tr_{S_{k}}(X)$ for the sets $S_{k} \defeq \set{k+1,\dotsc,p+q}$ for arbitrary $k$ such that $1\leq k\leq p+q-1$. In that case, we use the following general result by by Ram and Wenzl \cite{ram1992matrix}, which we adopt to our setting of algebras $\A_{p,q}^d$
\footnote{In fact, it should be possible to adapt the same result for computing the partial trace over all types of subsystems $S\subset [p+q]$. Indeed, one can "SWAP" given two subsystems using, so-called, \textit{$6j$-symbol}. In that way, subsystems in $S$ can be effectively swapped to the last positions. We leave for future work the details of such procedure.}.

\begin{lemma}[\cite{ram1992matrix}]
\label{Lemma:Ram and Wenzl}
Consider any irreducible representation $\lambda\in \A_{p,q}^d$, two paths $S,T \in \Paths(\lambda)$ and a corresponding matrix unit $\E_{S,T}$. One can decompose paths $S,T$ with respect to the last system, i.e write $S=\Bar{S}\circ \lambda$ and $T=\Bar{T}\circ \lambda$, and $T\in \Paths_{p+q-1}(\mu,)$ and $S\in \Paths_{p+q-1}(\mu)$ where $S^{p+q-1} = \mu$ and $T^{p+q-1} = \mu'$.
The partial trace of the last system for the matrix unit $\E_{S,T}$ reads:
    \begin{equation}
        \Tr_{p+q} \E_{S,T} =
        \begin{cases}
            \frac{m_{\lambda}}{m_{\mu}} \E_{\Bar{S},\Bar{T}} &\,  \mu = \mu', \\
            0 &\, \mu \neq \mu'.
        \end{cases}
    \end{equation}
\end{lemma}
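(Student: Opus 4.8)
The plan is to prove the formula by reducing the partial trace computation to a statement purely inside the abstract algebra $\A^d_{p,q}$ and its matrix units, rather than working in the tensor space $V_d^{p,q}$. First I would recall the basic structure: since $\A^d_{p+q-1}$ embeds in $\A^d_{p,q}$ by tensoring with $I_d$ on the last qudit, tracing out the last system is the (unnormalized) conditional expectation $\Tr_{p+q}\colon \A^d_{p,q} \to \A^d_{p+q-1}$ (up to scalars, this is the restriction of the global trace). The key object is the image $\Tr_{p+q}(\E_{S,T})$, which must land in the commutant-type structure: more precisely, for $\sigma \in \A^d_{p+q-1}$ one has $\sigma \Tr_{p+q}(\E_{S,T}) = \Tr_{p+q}(\sigma \E_{S,T})$ and similarly on the right, so $\Tr_{p+q}(\E_{S,T})$ intertwines the $\A^d_{p+q-1}$-actions. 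Decomposing $S = \bar S \circ \mu$ and $T = \bar T \circ \mu'$ where $\mu = S^{p+q-1}$, $\mu' = T^{p+q-1}$, and using that the restriction $\Res^{\A^d_{p,q}}_{\A^d_{p+q-1}} V_\lambda$ is multiplicity-free (\cref{sec:rep_theory_A}), Schur's lemma forces $\Tr_{p+q}(\E_{S,T})$ to be supported only on the isotypic components where $\mu$ appears; in particular it must be proportional to $\E_{\bar S,\bar T}$ when $\mu = \mu'$ and zero when $\mu \neq \mu'$ (the latter because $\E_{\bar S, \bar T}$ would need $\bar S, \bar T$ to end at the same vertex).

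The remaining work is to pin down the scalar $c$ in $\Tr_{p+q}(\E_{S,T}) = c\,\E_{\bar S,\bar T}$ in the case $\mu = \mu'$. I would extract $c$ by taking a further trace: applying the full trace $\Tr$ on $\A^d_{p+q-1}$ to both sides and using $\Tr(\E_{\bar S,\bar T}) = \delta_{\bar S \bar T} m_\mu$ from \cref{sec:Unitary-equivariant quantum channels} (where $m_\mu$ is the $\U{d}$-irrep dimension at level $p+q-1$), together with the fact that $\Tr \circ \Tr_{p+q} = \Tr$ is just the global trace, which gives $\Tr(\E_{S,T}) = \delta_{ST} m_\lambda$. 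Comparing, when $\bar S = \bar T$ (equivalently $S = T$) we get $m_\lambda = c\, m_\mu$, hence $c = m_\lambda / m_\mu$; the off-diagonal case $\bar S \neq \bar T$ gives $c$ the same value by an intertwining argument (the scalar cannot depend on which pair $\bar S, \bar T$ we chose, only on the irrep labels $\lambda$ and $\mu$, since the matrix units with fixed endpoints are permuted transitively by the algebra action). Alternatively, and perhaps more transparently, one can work in the mixed Schur basis of \cref{eq:matrix_units}: write $\E_{S,T} = \UschinvPQ\of[\big]{\ketbra{T}{S} \otimes I_{m_\lambda}}\UschPQ$, use \cref{eq:quantum_schur_domain_range} to see that the last qudit register is absorbed into the $\U{d}$-irrep register $V^{\U{d}}_\lambda$ via the last Clebsch--Gordan step $\CG^{p+q}$, and compute the partial trace of $\ket{M}$-components using the branching $V^{\U{d}}_\lambda \downarrow V^{\U{d-1}}_\mu$; the multiplicity counting then produces the ratio $m_\lambda/m_\mu$ directly.

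The main obstacle I anticipate is making the scalar computation genuinely rigorous in the off-diagonal case $\bar S \neq \bar T$: the global-trace argument only directly fixes $c$ on the diagonal, so one needs a clean argument that $\Tr_{p+q}$ is an $(\A^d_{p+q-1}, \A^d_{p+q-1})$-bimodule map (which follows from $\Tr_{p+q}(a\,x\,b) = a\,\Tr_{p+q}(x)\,b$ for $a,b$ acting trivially on the last qudit) and then transport the diagonal result to off-diagonal matrix units using $\E_{\bar S,\bar S} \cdot (\text{something}) \cdot \E_{\bar T,\bar T}$. This is the part where one must be careful that the scalar is a representation-theoretic invariant (a ratio of formal dimensions in the tower $\A^d_{p+q-1} \subset \A^d_{p,q}$) and not sensitive to the particular Gelfand--Tsetlin labels; citing \cite{ram1992matrix} for the general tower-of-algebras version and then checking that the relevant ``Markov trace'' weights for $\A^d_{p,q}$ are exactly the $m_\lambda$ (which is what \cref{eq:matrix_units} asserts) closes the gap.
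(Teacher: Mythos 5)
Your proposal is correct, but it is worth noting that the paper does not actually prove this lemma at all: it simply imports the statement from Ram and Wenzl \cite{ram1992matrix} and adapts the notation to $\A^d_{p,q}$. What you give is therefore a genuinely self-contained alternative, and it works with only facts already established in the paper: the partial trace over system $p+q$ maps $\A^d_{p,q}$ into $\A^d_{p+q-1}$ (by unitary invariance of the partial trace and mixed Schur--Weyl duality), it is an $(\A^d_{p+q-1},\A^d_{p+q-1})$-bimodule map, the Gelfand--Tsetlin matrix units are adapted to the subalgebra chain, and $\Tr(\E_{S,T})=\delta_{S T}\,m_\lambda$ from \cref{eq:matrix_units}. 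Concretely, since the embedded matrix unit of $\A^d_{p+q-1}$ satisfies $\E_{\bar{S},\bar{S}}\,\E_{S,T}\,\E_{\bar{T},\bar{T}}=\E_{S,T}$, the bimodule property forces $\Tr_{p+q}(\E_{S,T})$ into the corner $\E_{\bar{S},\bar{S}}\,\A^d_{p+q-1}\,\E_{\bar{T},\bar{T}}$, which is $\spn\{\E_{\bar{S},\bar{T}}\}$ when $\mu=\mu'$ and $\{0\}$ otherwise; the diagonal trace comparison then gives the scalar $m_\lambda/m_\mu$. The one obstacle you flag (off-diagonal scalar) closes in a single line: under the embedding, $\E_{\bar{S},\bar{T}}$ acts as $\sum_{\nu:\mu\rightarrow\nu}\E_{\bar{S}\circ\nu,\bar{T}\circ\nu}$, so $\E_{\bar{S},\bar{T}}\,\E_{T,T}=\E_{S,T}$, and hence $\Tr_{p+q}(\E_{S,T})=\E_{\bar{S},\bar{T}}\,\Tr_{p+q}(\E_{T,T})=\frac{m_\lambda}{m_\mu}\,\E_{\bar{S},\bar{T}}$, so no separate invariance argument is needed. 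What your route buys is a proof internal to the paper's formalism (and it makes explicit that the Markov-trace weights are the $\U{d}$-dimensions $m_\lambda$); what the paper's citation buys is brevity and the generality of Ram--Wenzl's tower-of-algebras framework.
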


\noindent
For simplicity of the notation, in this subsection, we rewrite the sequence of inclusions (\ref{eq:A inclusions}) in the following way:
\begin{equation}
    \A_{0}^d \hookrightarrow \A_{1}^d \hookrightarrow \cdots \hookrightarrow \A_{p+q-1}^d \hookrightarrow \A_{p+q}^d ,
\end{equation}
i.e. $\A_{k}^d \defeq \A_{k,0}^d$ for $k \leq p$ and $\A_{k}^d \defeq \A_{p,k-p}^d$ for $k\geq p$.
As previously, we assume that matrix $X$ is of the form (\ref{eq:matrix_units}), and matrix $Y$ is of one of the special forms: either identity matrix, or corresponding to a matrix unit, or corresponding to a walled Brauer diagram.

%%%%%%%%%%%%%%%%%%%%%%%%%%%%%%%%%%%%%%%%%%%%%%%%%%%%%%%%%%%%%%%%%%%%%%%%%%%%%%%%%%%%%%%%%%%%%%%%%%%%%%%%%%%%%%%%%%%%%%%%%%%%%%%%%%%%%%%%%%%%%%%%%%%%%%%%%%%%%%%

\subsubsection{}
Firstly, assume that $Y = I$ is an identity matrix. Applying \cref{Lemma:Ram and Wenzl} recursively, we have
\begin{align}
    \Tr_{S_k}(X) &= \sum_{\lambda \in \Irr{\A_{p,q}^{d}}} \sum_{S,T \in \Paths(\lambda)} x_{S,T} \Tr_{k} \Tr_{k+1} \dotsc \Tr_{p+q} \E_{S,T} \\
    &= \sum_{\lambda \in \Irr{\A_{p,q}^{d}}} \sum_{\mu \in \Irr{\A_{k}^{d}}} \sum_{\substack{S,T \in \Paths(\lambda) \\ S^{k} = T^{k} = \mu \\ \forall i \geq k  \,:\, S^{i} = T^{i}}} x_{S,T} \frac{m_{\lambda}}{m_{\mu}} \E_{\Bar{S},\Bar{T}} ,
\end{align}
where $\Bar{S} \in \Paths_{k}(\mu), \, \Bar{T} \in \Paths_{k}(\mu)$ are truncations of $S,T \in \Paths(\lambda)$ to the first $k$ subsystems.

%%%%%%%%%%%%%%%%%%%%%%%%%%%%%%%%%%%%%%%%%%%%%%%%%%%%%%%%%%%%%%%%%%%%%%%%%%%%%%%%%%%%%%%%%%%%%%%%%%%%%%%%%%%%%%%%%%%%%%%%%%%%%%%%%%%%%%%%%%%%%%%%%%%%%%%%%%%%%%%

\subsubsection{}
Secondly, assume that matrix $Y = \E_{S',T'}$ is a matrix unit. By elementary properties of matrix units and by applying \cref{Lemma:Ram and Wenzl} recursively, we have
\label{sec:partial_trace_comp_EST}
\begin{align}
    \Tr_{S_k}(\E_{S',T'}X) &= \sum_{\lambda \in  \Irr{\A_{p,q}^{d}}} \sum_{T \in \Paths(\lambda)} x_{T',T} \Tr_{S_k}(\E_{S',T}) = \sum_{\lambda \in  \Irr{\A_{p,q}^{d}}} \sum_{\substack{T \in \Paths(\lambda) \\ T^{k} = \mu \\ \forall i \geq k  \,:\, T^{i} = S'^{i}}} x_{T',T}  \frac{m_{\lambda}}{m_{\mu}} \E_{\Bar{S'},\Bar{T}}
\end{align}
where $\Bar{S}' \in \Paths_{k}(\mu), \, \Bar{T} \in \Paths_{k}(\mu)$ are truncations of $S',T \in \Paths(\lambda)$ respectively to the first $k$ subsystems.

%%%%%%%%%%%%%%%%%%%%%%%%%%%%%%%%%%%%%%%%%%%%%%%%%%%%%%%%%%%%%%%%%%%%%%%%%%%%%%%%%%%%%%%%%%%%%%%%%%%%%%%%%%%%%%%%%%%%%%%%%%%%%%%%%%%%%%%%%%%%%%%%%%%%%%%%%%%%%%%

\subsubsection{}\label{sec:partial_trace_comp_pi}
Lastly, assume that the matrix $Y$ is an image of a single Brauer diagram $\pi \in \B_{p,q}^d$, i.e. $Y = \psi_{p,q}^d (\pi)\in \A_{p,q}^d$. As In \cref{sec:trace_comp_pi}, we shown how to compute all matrix units of $\psi_{p,q}^d (\pi)$, i.e. computing $\psi(\pi)_{S,T} \defeq \Tr (\psi_{p,q}^d (\pi) \E_{S,T})$ in time $O\of*{(p+q)^2\of*{\dim{\A_{p,q}^d}}^{3/2}}$. Having done it, we can use the same method as in \cref{sec:partial_trace_comp_EST}:
\begin{align}
\nonumber
 \Tr_{S_k}(YX)
   & = \Tr_{S_k}(\psi_{p,q}^d (\pi)X)
     = \sum_{\lambda \in \Irr{\A_{p,q}^{d}}} \sum_{S,T \in \Paths(\lambda)} \psi_{p,q}^d (\pi)_{S,T} \Tr_{S_k}(\E_{S,T}X)
     \\
    &
    =  \sum_{\lambda \in \Irr{\A_{p,q}^{d}}} \sum_{S,T \in \Paths(\lambda)}  \psi_{p,q}^d (\pi)_{S,T} \sum_{\lambda \in \Irr{\A_{p,q}^{d}}} \sum_{\mu \in \Irr{\A_{k}^{d}}} \sum_{\substack{T' \in \Paths(\lambda) \\ S^{k} = T'^{k} = \mu \\ \forall i \geq k  \,:\, T'^{i} = S^{i}}} x_{T,T'}  \frac{m_{\lambda}}{m_{\mu}} \E_{\Bar{S},\Bar{T}'}
\end{align}
in order to compute the partial trace.

\subsection{Unitary-equivariant SDPs}\label{sec:subSDP}

Semidefinite programming is an important subfield of optimization \cite{handbookSDP} that has numerous applications in quantum information theory \cite{SDPs,watrous}.
A particularly common class of semidefinite programs (SDPs) that occur in quantum information is one where the matrix variable $X$ has a continuous unitary symmetry group:
\begin{equation}
\label{eq:equiSDP}
    \sof[\big]{X, U\xp{p} \x \bar{U}\xp{q}} = 0,
    \quad \forall U \in \U{d}.
\end{equation}
For example, this occurs when $X$ describes a unitary-equivariant quantum channel with $p$ inputs and $q$ outputs, each of dimension $d$ (see \cref{sec:Unitary-equivariant quantum channels} for more details).
We will focus on the following general class of SDPs with this symmetry:
\begin{equation}
    \begin{aligned}
        \max_X \quad & \Tr(CX) \\
        \textrm{s.t.} \quad
        & \Tr(A_k X) \leq b_k, & \forall k &\in [m_1], \\
        & \Tr_{S_{i_k}} (D_k X) = B_k, & \forall k &\in [m_2], \\
        & \sof[\big]{X, U\xp{p} \x \bar{U}\xp{q}} = 0, & \forall U &\in \U{d}, \\
        & X \succeq 0,
    \end{aligned}
    \label{eq:input SDP}
\end{equation}
Here $m_1$ and $m_2$ denote the number of inequality and equality constraints, respectively.
The matrices $A_k, D_k, C$ are Hermitian, and we consider only a specific choice of the sets $S_{i_k}$, namely $S_{i_k} \defeq \set{i_k,\dotsc,p+q}$, furthermore, we assume that matrices $D_k$ are written as linear combinations of matrix units \eqref{eq:matrix_units}.

Our goal is to devise a symmetry reduction procedure that translates the above SDP into one where the irrelevant degrees of freedom in the matrix $X$ variable are eliminated.
Thanks to mixed Schur--Weyl duality, the unitary equivariance constrain \eqref{eq:equiSDP} on matrix $X$ implies that it can be written as a linear combination of the matrix units \eqref{eq:Xmatrix} of the matrix algebra $\A_{p,q}^d$, i.e.,
\begin{equation}
    X = \sum_{S,T} x_{ST} \E_{ST}
    = \sum_{\lambda \in \Irr{\A_{p,q}^{d}}} \sum_{S,T \in \Paths(\lambda)} x_{ST} \E_{ST}
    .
\end{equation}
By rewriting the constraints in \eqref{eq:input SDP} into the mixed Schur basis, we can translate the input problem in \cref{eq:input SDP} into the following equivalent SDP problem:
\begin{equation}
    \begin{aligned}
        \max_{x_{ST}} \quad & f_C (x_{ST}) \\
        \textrm{s.t.} \quad
        & f_{A_k} (x_{ST})\leq b_k, & \forall k &\in [m_1], \\
        & g_{D_k, B_k}^{i_k,T_{i_k},S_{i_k}} (x_{ST}) = 0, & \forall k &\in [m_2],\, \\
        & X_\lambda \succeq 0,
    \end{aligned}
    \label{eq:output SDP}
\end{equation}
where $T_{i_k},S_{i_k}\in \Paths_{i_k}(\mu)$ for some $\mu \in \A_k^d $, and $X_\lambda \defeq \sum_{S,T\in \Paths(\lambda)} x_{ST} \E_{ST}$,
and $f_C$, $f_{A_k}$, $g_{D_k, B_k}^{i_k,T_{i_k},S_{i_k}}$ are some affine functions depending on indicated matrices.
The above optimisation problem has $\dim \of*{\A_{p,q}^d} = \sum_{\lambda \in \Irr{\A_{p,q}^{d}}} d^2_\lambda$ degrees of freedom.
For comparison, the original optimization problem has $d^{2(p+q)}$ degrees of freedom.
In order to make the optimization problem trackable, we make some further assumptions on the form and sparseness of matrices $A_k, C, D_k, B_k$ in the original problem. Hence, we analyze some particular cases in which the aforementioned matrices are given in one of the following forms:
\begin{enumerate}
    \item they are arbitrary linear combinations of matrix units of $\A_{p,q}^d$,
    \item they are sparse linear combinations of computational basis matrix units,
    \item they are sparse linear combinations of diagrams that span $\B_{p,q}^d$.
\end{enumerate}

Firstly, notice that if all matrices $C,A_k,D_k,B_k$ are written as a linear combination of matrix units $\E_{ST}$, the optimization problem (\ref{eq:input SDP}) can be rewritten to the form (\ref{eq:output SDP}) trivially.

Furthermore, we can use the methods presented in \cref{sec:Trace,sec:Partial trace} to efficiently compute the (partial) traces of products of matrices of different forms.
Indeed, by summarizing results form \cref{sec:Trace,sec:Partial trace}, we obtain the following generalization of the main result of \cite{grinko2022linear} from linear to semidefinite programming.

\begin{theorem}\label{thm:SDP}
The computational complexity of rewriting the input SDP (\ref{eq:input SDP}) with $d^{2(p+q)}$ variables to the reduced SDP (\ref{eq:output SDP}) with $\dim (\A_{p,q}^d)$ variables is
\begin{itemize}
    \item $O(s)$ if $C,A_k,D_k,B_k$ are given as $s$-sparse linear combinations of matrix units $\E_{ST}$ of $\A_{p,q}^d$, or are the identity matrix,
    \item $s \cdot (p+q)^{O(d^2)}$ if $C,A_k,D_k,B_k$ are given as $s$-sparse linear combinations of computational basis matrix units, while the matrices $D_k$ are linear combinations of matrix units,
    \item $O \of*{s (p+q)^2 \of*{\dim{\A_{p,q}^d}}^{3/2}}$ if $C,A_k,D_k,B_k$ are $s$-sparse linear combinations of diagrams in $\B_{p,q}^d$.
\end{itemize}
\end{theorem}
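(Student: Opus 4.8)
The plan is to assemble the theorem directly from the computational primitives developed in \cref{sec:Trace,sec:Partial trace}, since \cref{thm:SDP} is essentially a bookkeeping statement about how those primitives compose. First I would recall that, by mixed Schur--Weyl duality together with \cref{eq:Choi equivariance}, the unitary-equivariance constraint $[X,U\xp{p}\x\bar U\xp{q}]=0$ forces $X$ to lie in $\A_{p,q}^d$, hence $X=\sum_{\lambda\in\Irr{\A_{p,q}^{d}}}\sum_{S,T\in\Paths(\lambda)}x_{ST}\,\E_{ST}$ as in \cref{eq:Xmatrix}; this immediately accounts for the claimed reduction from $d^{2(p+q)}$ to $\dim(\A_{p,q}^d)=\sum_\lambda d_\lambda^2$ variables. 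The remaining content is to show that each constraint in \eqref{eq:input SDP} can be rewritten as an affine constraint in the new variables $x_{ST}$ within the stated time budget, and that the positivity constraint $X\succeq0$ translates into the block constraints $X_\lambda\succeq0$ of \eqref{eq:output SDP}.

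Next I would treat the three regimes case by case, in the order they appear in the statement. In the first case, if $C$, $A_k$, $D_k$, $B_k$ are already given as $s$-sparse linear combinations of matrix units (or the identity), then the traces $\Tr(A_kX)$, $\Tr(CX)$ reduce term-by-term via $\Tr(\E_{ST}\E_{T'S'})=\delta_{TT'}\delta_{SS'}m_\lambda$ (\cref{sec:Trace}), each term costing $O(1)$ plus the cost of computing one $m_\lambda$, and the partial traces $\Tr_{S_{i_k}}(D_kX)$ reduce via iterated application of \cref{Lemma:Ram and Wenzl} exactly as in \cref{sec:partial_trace_comp_EST}; summing over the $s$ nonzero terms gives the $O(s)$ bound (absorbing the $m_\lambda$ evaluation cost, which is $O((p+q)\log^2(p+q+d))$ per the hook-content formula, into the $O$). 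In the second case, if the input data is sparse in the computational basis, I would invoke \cref{eq:a3} of \cref{sec:Trace}: each entry $\bra J X\ket I$ is computed as a product of Clebsch--Gordan submatrices $C_{T^k T^{k-1}}^{x_k,w(\cdots)}$ whose sizes are Kostka numbers bounded by $(p+q)^{O(d^2)}$, giving $(p+q)^{O(d^2)}$ per nonzero entry and $s\cdot(p+q)^{O(d^2)}$ overall; the hypothesis that the $D_k$ are linear combinations of matrix units is what keeps the equality constraints in the affine form needed for \eqref{eq:output SDP}. In the third case, if the data is $s$-sparse in walled Brauer diagrams, I would use \cref{sec:trace_comp_pi}: each diagram $\pi$ is decomposed into $O((p+q)^2)$ generators $\sigma_i$, whose Gelfand--Tsetlin matrix entries are known from \cref{thm:main}, and multiplied block-by-block at cost $O(\sum_\lambda d_\lambda^3)=O((\dim\A_{p,q}^d)^{3/2})$, so each of the $s$ diagrams costs $O((p+q)^2(\dim\A_{p,q}^d)^{3/2})$, and the trace/partial-trace formulas of \cref{sec:trace_comp_pi,sec:partial_trace_comp_pi} then express the constraint as an affine function of the $x_{ST}$.

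The step I expect to require the most care is not any single complexity estimate but the verification that the rewritten problem \eqref{eq:output SDP} is genuinely \emph{equivalent} to \eqref{eq:input SDP}: one must check that $X\succeq0$ is equivalent to $X_\lambda\succeq0$ for all $\lambda$ (which follows from the block-diagonal structure after the mixed Schur transform, since $\UschPQ X\UschinvPQ=\bigoplus_\lambda (\sum_{S,T}x_{ST}\ketbra TS)\x I_{m_\lambda}$ by \cref{eq:matrix_units}), and that no feasible point is gained or lost --- in particular that the affine maps $f_C,f_{A_k},g_{D_k,B_k}^{i_k,T_{i_k},S_{i_k}}$ are well-defined independently of the chosen decompositions. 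Everything else is a matter of carefully tallying the per-term costs established in \cref{sec:Trace,sec:Partial trace} against the sparsity parameter $s$ and checking that the $m_\lambda$ and Kostka-number evaluations fit inside the stated asymptotic bounds.
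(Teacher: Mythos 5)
Your proposal is correct and follows essentially the same route as the paper: the theorem is obtained by writing $X$ in terms of matrix units via mixed Schur--Weyl duality and then tallying, case by case, the per-term costs of the trace and partial-trace computations from \cref{sec:Trace,sec:Partial trace} (matrix-unit orthogonality and \cref{Lemma:Ram and Wenzl} for the first case, the Clebsch--Gordan/Kostka-number bound of \cref{eq:a3} for the second, and the generator decomposition of \cref{sec:trace_comp_pi,sec:partial_trace_comp_pi} for the third) against the sparsity parameter $s$. Your additional check that $X\succeq 0$ is equivalent to the blockwise constraints $X_\lambda\succeq 0$ matches what the paper handles implicitly in \cref{sec:subSDP}, so there is no substantive difference in approach.
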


In the first two cases, the complexity scales polynomially in the system size $p+q$ when the local dimension $d$ is constant.
In the third case, the complexity scales in $\dim(\A_{p,q}^d)$ as opposed to $\dim(\B_{p,q}^d) = (p+q)!$, which can make a significant difference for small values of the local dimensions $d$ (see \cite{grinko2022linear} for more discussion).
Note that scaling in
$\dim(\A_{p,q}^d)$ versus
$\dim(\B_{p,q}^d)$
results from writing the matrix variable as a formal linear combination of matrix units $\E_{ST}$ instead of walled Brauer algebra diagrams.

%%%%%%%%%%%%%%%%%%%%%%%%%%%%%%%%%%%%%%%%%%%%%%%%%%%%%%%%%%%%%%%%%%%%%%%%%%%%%%%%%%%%%%%%%%%%%%%%%%%%%%%%%%%%%%%%%%%%%%%%%%%%%%%%%%%%%%%%%%%%%%%%%%%%%%%%%%%%%%%%%%%%%%%%%%%%%%%%%%%%%%

\Cref{thm:SDP} can be applied to a number of important optimization problems in quantum information which are all of the form \eqref{eq:input SDP}: optimization over PPT-extendible channels \cite{HoldsworthVishalWilde}, SDP relaxation hierarchy for bilinear optimization \cite{berta2021semidefinite} for approximate quantum error correction \cite{chee2023efficient}, optimization over quantum combs \cite{quintino2019probabilistic,quintino2019reversing,yoshida2021universal,quintino2021deterministic,grinko2022linear}, unitary-equivariant Boolean functions \cite{buhrman2016quantum},
monogamy of entanglement
\cite{Monogamy},
and many others
(see \cite{grinko2022linear} for an extensive list of references).
In most of these settings, there are additional discrete symmetries as well, which can be utilized together with unitary-equivariance to simplify the problem even further.
We leave finding efficient ways of combining these symmetries with unitary equivariance for future work.

%%%%%%%%%%%%%%%%%%%%%%%%%%%%%%%%%%%%%%%%%%%%%%%%%%%%%%%%%%%%%%%%%%%%%%%%%%%%%%%%%%%%%%%%%%%%%%%%%%%%%%%%%%%%%%%%%%%%%%%%%%%%%%%%%%%%%%%%%%%%%%%%%%%%%%%%%%%%%%%%%%%%%%%%%%%%%%%%%%%%%%%
%%%%%%%%%%%%%%%%%%%%%%%%%%%%%%%%%%%%%%%%%%%%%%%%%%%%%%%%%%%%%%%%%%%%%%%%%%%%%%%%%%%%%%%%%%%%%%%%%%%%%%%%%%%%%%%%%%%%%%%%%%%%%%%%%%%%%%%%%%%%%%%%%%%%%%%%%%%%%%%%%%%%%%%%%%%%%%%%%%%%%%%

\section{Efficient quantum circuit for optimal port-based teleportation}\label{sec:PBT}

Quantum teleportation is a cornerstone of quantum information \cite{bennett1993teleporting}.
However, one drawback of the original teleportation protocol is that the receiving party needs to perform a correction operation on the received state.
\emph{Port-based teleportation} (PBT) gets around this limitation \cite{IshizakaHiroshima,ishizaka2009quantum}.
In PBT, Alice and Bob share an entangled resource state distributed evenly among $p$ quantum systems called \emph{ports} on each side.
To teleport an unknown quantum state, Alice measures it together with her share of the ports
The measurement outcome, which she communicates to Bob, indicates to which of Bob's ports the state has teleported to.
Bob does not need to perform any correction but simply discard the remaining ports.

Port-based teleportation possesses the crucial feature of unitary equivariance, meaning it remains effective when Bob applies the same unitary operation to his port systems before the protocol starts. However, due to finite resources \cite{NoProgramming}, unitarily equivariant PBT protocols can only achieve approximate teleportation. Nevertheless, certain PBT protocols become asymptotically faithful as the number of ports increases \cite{beigi2011simplified,mozrzymas2018optimal,christandl2021asymptotic}. PBT has diverse applications in non-local quantum computation and quantum communication \cite{beigi2011simplified,buhrman2016quantum,may2022complexity}, channel discrimination \cite{pirandola2019fundamental}, channel simulation \cite{pereira2021characterising}, and holography in high-energy physics \cite{may2019quantum,may2022complexity}. PBT has also been extended to multi-port teleportation \cite{studzinski2020efficient,kopszak2020multiport,mozrzymas2021optimal}. The resource requirements for PBT have been studied further in \cite{studzinski2022square,strelchuk2023minimal}.

Usually, two types of PBT protocols are considered: probabilistic exact and deterministic inexact. The optimal entanglement fidelity for deterministic protocol is related to the success probability in probabilistic version \cite{leditzky2020optimality}. Typical resource states considered for PBT are either $p$ maximally entangled pairs of states or a nontrivial optimized state (which achieves the best possible optimal entanglement fidelity) \cite{studzinski2017port,mozrzymas2018optimal}.

It turns out that the same \emph{pretty good measurement} measurement is optimal for both cases \cite{studzinski2017port,mozrzymas2018optimal,leditzky2020optimality}. We denote this positive operator-valued measure (POVM) by $E = \set{E_k}_{k=0}^{p}$. Denote Alices' ports by $A_1,\dotsc,A_p$. The input register $p+1$ on Alices' side is for the state $\ket{\psi} \in \C^d$ to be teleported to Bob. Alice measures all her registers and if she obtains outcome $k \in [p]$, then this is the number of the port where Bob should find the teleported state $\ket{\psi}$. Otherwise, upon measuring $k=0$ she aborts the protocol (for probabilistic exact PBT) or sends a random classical outcome $k \in [p]$ to Bob (deterministic inexact PBT). The optimal POVM $E$ is given by \cite{studzinski2017port,mozrzymas2018optimal,leditzky2020optimality}
\begin{equation}\label{def:PGM_PBT}
    E_k = \rho^{-1/2} \rho_k \rho^{-1/2} \text{ for every $k \in [p]$}, \qquad E_0 = I-\sum_{k=1}^{p} E_k.
\end{equation}
Here $\rho^{-1}$ should be understood as the generalized inverse of $\rho \defeq \sum_{k=1}^p \rho_k$ where
\begin{equation}
    \rho_k \defeq \psi^d_{p,q} \of*{\pi^{k} \sigma_p \pi^{-k}},
\end{equation}
where $\psi^d_{p,q}$ denotes the map defined in \cref{eq:Brauer action}, $\pi \defeq \sigma_{1}\sigma_{2}\dots\sigma_{p-2}\sigma_{p-1}$ is the cyclic shift permutation, and $\sigma_p$ the contraction between systems $p$ and $p+1$.

While the form of the optimal measurement is known, an efficient quantum circuit for implementing it was not known until our work. Our main result is
\begin{theorem}\label{thm:pbt}
    The pretty good measurement $E$ for the port-based teleportation protocol from \cref{def:PGM_PBT} can be implemented by a quantum circuit with gate complexity $\poly(p,d)$, where $p$ is the number of ports and $d$ is the dimension of the teleported quantum state.
\end{theorem}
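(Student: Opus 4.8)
The plan is to realize the pretty good measurement $E = \{E_k\}_{k=0}^p$ from \cref{def:PGM_PBT} as a composition of three ingredients, each of which we already control: (i) the mixed quantum Schur transform $\Usch(p,1)$ from \cref{thm:mixed_schur}, which block-diagonalizes $\A^d_{p,1}$ and is implementable with $\poly(p,d)$ gates; (ii) the explicit Gelfand--Tsetlin action of the generators of $\A^d_{p,1}$ from \cref{thm:main}, which lets us compute, in each irrep $\lambda \in \Irr{\A^d_{p,1}}$, the matrix $\psi_\lambda(\rho_k)$ for $\rho_k = \psi^d_{p,q}(\pi^k \sigma_p \pi^{-k})$ and hence $\psi_\lambda(\rho) = \sum_k \psi_\lambda(\rho_k)$, $\psi_\lambda(\rho^{-1/2})$, and $\psi_\lambda(E_k)$; and (iii) the observation that after the Schur transform the POVM acts only on the (small) $\A^d_{p,1}$-register $V^{\A^d_{p,1}}_\lambda$ of dimension $d_\lambda$, trivially on the $\U{d}$-register. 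Concretely: conjugating by $\Usch(p,1)$ turns the measurement of $E$ into a measurement that, within each $\lambda$-block, is a $d_\lambda$-dimensional POVM, and $d_\lambda = \abs{\Paths(\lambda)} = \poly(p)$ for $q=1$.

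First I would spell out the representation-theoretic structure. Since $\pi = \sigma_1\cdots\sigma_{p-1}$ and $\sigma_p$ are (images of) generators of $\A^d_{p,1}$, \cref{thm:main} gives explicit sparse matrices $\psi_\lambda(\sigma_i)$ in the Gelfand--Tsetlin basis $\{\ket{T} : T \in \Paths(\lambda)\}$; multiplying $O(p)$ of these yields $\psi_\lambda(\pi^k \sigma_p \pi^{-k})$ for each $k \in [p]$ in time $\poly(p,d)$ per block, and summing over $k$ gives $\psi_\lambda(\rho)$. I would then note that $\rho$ is a positive element of $\A^d_{p,1}$, so $\psi_\lambda(\rho)$ is a PSD $d_\lambda \times d_\lambda$ matrix whose (generalized) inverse square root $\psi_\lambda(\rho^{-1/2})$ is computable by classical diagonalization in $\poly(d_\lambda) = \poly(p)$ time; this defines $\psi_\lambda(E_k) = \psi_\lambda(\rho^{-1/2})\psi_\lambda(\rho_k)\psi_\lambda(\rho^{-1/2})$ and $\psi_\lambda(E_0) = I - \sum_k \psi_\lambda(E_k)$. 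Collecting these over all $\lambda$ gives a classical description of the Naimark-dilated unitary $V$ such that $\Usch(p,1)$ followed by a block-controlled $V$ (controlled on the register storing $\lambda$) followed by a computational-basis measurement of an $O(\log p)$-qubit outcome register reproduces $E$. The block-controlled $V$ is, inside each $\lambda$-block, an isometry on a $\poly(p)$-dimensional space, so by standard universal-circuit synthesis (e.g.\ Solovay--Kitaev or QR-decomposition of a $\poly(p)\times\poly(p)$ unitary) it costs $\poly(p)$ gates per block and $\poly(p,d)$ gates in total; prepending $\Usch(p,1)$ from \cref{thm:mixed_schur} adds only $\poly(p,d)$ more. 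Finally I would observe the outcome $k$ obtained equals the Schur-basis readout, completing the construction.

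The hard part will be two bookkeeping issues that need care rather than new ideas. First, the Bratteli diagram for $\A^d_{p,1}$ is \emph{not} the same as for the symmetric group: for $q=1$ the leaves are mixed Young diagrams $(\lambda_l,\lambda_r)$ with $\abs{\lambda_r}\le 1$, and one must verify that $\rho$, $\rho^{-1/2}$, and the $E_k$ genuinely lie in $\A^d_{p,1}$ (not merely in $\End(V^{p,1}_d)$) so that they are block-diagonal in the Schur basis — this follows because $\pi,\sigma_p \in \A^d_{p,1}$ and the algebra is closed under products, sums, and generalized inverses, but it should be stated cleanly, perhaps invoking that $\rho$ commutes with $U^{\otimes p}\otimes \bar U$ by \cref{eq:Choi equivariance}-type reasoning. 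Second, one must check that $d_\lambda = \abs{\Paths(\lambda)}$ and the number of $\lambda$'s are both $\poly(p)$ when $q=1$ and $d$ is fixed (indeed the total dimension $\sum_\lambda d_\lambda^2 = \dim \A^d_{p,1} = (p+1)!/\,?$ is large, but each individual $d_\lambda$ is polynomial because adding one conjugate box to a partition of $p$ can be done in at most $p+1$ ways), so that per-block classical linear algebra and circuit synthesis stay efficient. Everything else — multiplying generators, diagonalizing small PSD matrices, Naimark dilation, synthesizing a block-controlled unitary, and composing with the mixed Schur transform — is routine given \cref{thm:main,thm:mixed_schur}.
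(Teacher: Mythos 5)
Your proposal breaks down at the complexity accounting for the blocks of $\A^d_{p,1}$: the claim that $d_\lambda = \abs{\Paths(\lambda)} = \poly(p)$ when $q=1$ is false. A path to a leaf $\lambda=(\lambda_l,\0)$ first climbs the symmetric-group part of the Bratteli diagram for $p$ steps, so $d_\lambda = \sum_{\mu \,:\, \lambda_l \rightarrow \mu} f^{\mu}$ where $f^{\mu}$ is the number of standard Young tableaux of shape $\mu \vdash p$ with at most $d$ rows; already for $d=2$ these are of order $\binom{p}{\lfloor p/2 \rfloor}$, i.e.\ exponential in $p$. (The ``at most $p+1$ ways to add/remove the last box'' argument only bounds the branching at the final level, not the total number of paths.) Consequently every step of your plan that works block-by-block with explicit $d_\lambda \times d_\lambda$ matrices --- multiplying the sparse generator matrices to get $\psi_\lambda(\pi^k\sigma_p\pi^{-k})$, classically diagonalizing $\psi_\lambda(\rho)$ to obtain $\rho^{-1/2}$, and above all synthesizing a block-controlled unitary by QR/Solovay--Kitaev on a ``$\poly(p)\times\poly(p)$'' matrix --- actually costs time and gate count exponential in $p$. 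Generic circuit synthesis of the measurement unitary inside an irrep block is exactly what an efficient construction must avoid, so the proposal does not establish the $\poly(p,d)$ bound. (Your other worry, that $\rho$, $\rho^{-1/2}$ and the $E_k$ lie in $\A^d_{p,1}$ and are block-diagonal in the Schur basis, is fine and not the issue.)

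The paper circumvents this by never touching a block as a generic matrix. It uses the identity $\rho = d - J_{p+1}$ with the Jucys--Murphy element, so by \cref{lem:JMaction} $\rho$ is \emph{diagonal} in the Gelfand--Tsetlin basis with explicitly known eigenvalues $d+\cont(T^{p}\backslash\lambda)$; by \cref{thm:main}, $E_p = \rho^{-1/2}\sigma_p\rho^{-1/2}$ becomes a sum of projectors onto vectors $\ket{S}\otimes\ket{\widetilde{w}_{\lambda}}$ whose only nontrivial content is a $(d+1)\times(d+1)$ rotation $W_\lambda$ acting on the single register for level $p$, controlled on $\lambda$. A Naimark dilation is built explicitly by enlarging the Bratteli diagram with shapes having a $(d+1)$-st row, which restores the normalization $\sum_{a\in\AC(\lambda)} d_{\lambda\cup a} = p\, d_\lambda$ and turns the POVM into a PVM $\Pi=\set{\Pi_k}$ with $\Pi_k=\pi^k\Pi_p\pi^{-k}$. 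The PVM is then implemented by phase estimation: controlled powers of $V=\sum_k \omega_{p+1}^{k}\Pi_k$ are realized from $W$, local phase gates, and the cyclic shift $\pi=\sigma_1\cdots\sigma_{p-1}$, which in the Gelfand--Tsetlin basis is a product of $p-1$ transpositions each acting on $O(1)$ neighbouring path registers, followed by a QFT over $\mathbb{Z}_{p+1}$ to read out $k$. Each ingredient is $\poly(p,d)$ because it acts on few registers with efficiently computable coefficients --- this structural use of \cref{thm:main} (not block synthesis) is the missing idea in your proposal.
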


The proof is explained in detail in the next two sections. Our construction provides the first efficient implementation of the optimal measurement $E$ as a quantum circuit of gate complexity $\poly(p,d)$. This is an exponential improvement over a trivial implementation
which has complexity $\poly(d^p)$.

The setting of port-based teleportation is naturally suited to the use of representation theory of the algebra $\A_{p,1}^d$. It is also natural to work in the mixed Schur basis, which can be achieved by applying the mixed quantum Schur transform from \cref{sec:SchTransQuantum}.\footnote{Starting from now on, for brevity we will not mention explicitly the matrix representation of the walled Brauer algebra $\psi^d_{p,q}$ and we assume that we are working in the mixed Schur basis, i.e., when using, for example, a diagram $\sigma \in \B^d_{p,q}$ it should be understood as $U_{\mathrm{Sch}} \psi^d_{p,q}(\sigma) U^\dagger_{\mathrm{Sch}}$. Moreover, we ignore multiplicity registers $V_\lambda^{\U{d}}$ corresponding to unitary group irreps in the mixed Schur--Weyl duality, i.e., we write all expressions in the Gelfand--Tsetlin basis from \cref{thm:main}.}

%%%%%%%%%%%%%%%%%%%%%%%%%%%%%%%%%%%%%%%%%%%%%%%%%%%%%%%%%%%%%%%%%%%%%%%%%%%%%%%%%%%%%%%%%%%%%%%%%%%%%%%%%%%%%%%%%%%%%%%%%%%%%%%%%%%%%%%%%%%%%%%%%%%%%%%%%%%%%%%
\subsection{Naimark's dilation}\label{sec:pbt:dilation}

Before we present our circuit, we need to explain how to dilate the POVM $E$ to a projective measurement $\Pi$ (projection-valued measure or PVM for short). That, in principle, is possible for any POVM due to the Naimark's dilation theorem. However, a simple and efficient dilation is not obvious to achieve and implement in general. After we explain how to construct such dilation explicitly, we present a construction of an efficient circuit for $E$ in the next section.

Note that the unnormalized state $\rho$ coincides with the shifted Jucys--Murphy element $d-J_{p+1}$ of $\A_{p,1}^d$, so its spectrum can be easily obtained, see \cref{lem:JMaction}. More concretely, $\rho$ is diagonal in the Gelfand--Tsetlin basis and due to \cref{lem:JMaction}:
\begin{equation}
    \rho = \sum_{\substack{\lambda \in \Irr{\A_{p,1}^{d}} \\ \lambda_r=\0 }} \rho_{\lambda}, \qquad \rho_{\lambda} \defeq \sum_{T \in \Paths(\lambda)} \of[\big]{d + \cont(T^{p} \backslash \lambda_l)} \proj{T}.
\end{equation}
Note that $\rho$ is zero on irreps $\lambda$ for which $\lambda_r \neq \0$.\footnote{Therefore our convention, from now on, is that we will drop subscript $l$ from $\lambda_l$ and will refer to it simply by $\lambda$. Moreover, all vertices $\mu$ in all levels up to $p$ in the Bratteli diagram do have the property $\mu_r = \0$, so a similar convention applies to all such $\mu$.}
Also note that due to \cref{thm:main} the generator $\sigma_p$ in the Gelfand--Tsetlin basis can be written as
\begin{equation}
    \sigma_p = \sum_{\substack{\lambda \in \Irr{\A_{p,1}^{d}} \\ \lambda_r=\0 }} \sum_{S \in \Paths_{p-1}(\lambda)} \ketbra{v_{S,\lambda}}{v_{S,\lambda}}, \quad \text{where} \quad \ket{v_{S,\lambda}} \defeq \sum_{T \in \M{S,\lambda}} c(T) \ket{T}
\end{equation}
where $c(T) = \sqrt{\frac{m_{T^{p}}}{ m_{T^{p-1}}}}$ and we define for $S \in \Paths_{p-1}(\lambda)$:
\begin{equation}
    \M{S,\lambda} \defeq \Set*{ T \in \Paths(\lambda) \given  \exists \, \mu \in \Irr{\A_{p,0}^{d}} : T = (S^{0},S^{1},\dotsc,S^{p-2},\lambda,\mu,\lambda)}.
\end{equation}
Note that $\M{S,\lambda}$ is in bijection with a subset $\AC_d(\lambda) \subseteq \AC(\lambda)$ of addable boxes to $\lambda \pt p-1$ formally defined as
\begin{equation}
    \AC_d(\lambda) \defeq \Set{ a \in \AC(\lambda) \given \ell(\lambda \cup a) \leq d }.
\end{equation}
Now we can rewrite $E_p$ in the Gelfand--Tsetlin basis as follows:
\begin{align}\label{eq:pbt_main}
    E_p = \rho^{-1/2} \sigma_p \rho^{-1/2} = \sum_{\substack{\lambda \in \Irr{\A_{p,1}^{d}} \, : \, \lambda_r=\0 \\ S \in \Paths_{p-1}(\lambda)}} \rho^{-1/2}_{\lambda} \ketbra{v_{S,\lambda}}{v_{S,\lambda}} \rho^{-1/2}_{\lambda} = \sum_{\substack{\lambda \in \Irr{\A_{p,1}^{d}} \, : \, \lambda_r=\0 \\ S \in \Paths_{p-1}(\lambda)}} \ketbra{w_{S,\lambda}}{w_{S,\lambda}},
\end{align}
with
\begin{equation}
    \ket{w_{S,\lambda}} \defeq \sum_{T \in \M{S,\lambda}} \sqrt{\frac{d_{T^{p}}}{p \cdot d_{T^{p-1}}}} \ket{T} = \sum_{a \in \AC_d(\lambda)} \sqrt{\frac{d_{\lambda \cup a}}{p \cdot d_{\lambda}}} \ket{S \circ \of{\lambda \cup a} \circ \lambda},
\end{equation}
where we used \cref{thm:main,lem:content_into_sym_u_dims,lem:ration_into_u_dims}. Since $\rho$ commutes with $\A_{p,0}^d$, the other POVM elements $E_k$ for $k \in [p]$ can be written as
\begin{equation}
    E_k = \rho^{-1/2} \pi^{k} \sigma_p \pi^{-k} \rho^{-1/2} = \pi^{k} E_p \pi^{-k}.
\end{equation}
We will denote the restriction of $E_k$ to irrep $\lambda \in \Irr{\A_{p,1}^{d}}$ by $E_k^\lambda$.

Note that for any irrep $\lambda \in \Irr{\A_{p,1}^{d}}$ with $\lambda_r = \0$ and every path $T \in \Paths(\lambda)$ the dimensions $d_{T^{p}}$ and $d_{T^{p-1}}$ coincide with the dimensions of the corresponding irreps of the symmetric groups $\S_p$ and $\S_{p-1}$, respectively.
Now recall that the Bratteli diagram of the symmetric group is the \emph{Young lattice} or \emph{Young graph} \cite{Sagan}, and the following identity holds for every $\lambda \pt p-1$ \cite{stanley2013algebraic} in the Young lattice:
\begin{equation}
    p \cdot d_\lambda = \sum_{a \in \AC(\lambda)}  d_{\lambda \cup a},
    \label{pbt:naimrak_sym_induction}
\end{equation}
where the notation $\lambda \cup a$ denotes the Young diagram in the Young lattice obtained by adding a box $a$ to $\lambda$, and $d_\lambda$ is the dimension of the symmetric group irrep $\lambda$.\footnote{The dimension $d_\lambda$ can be both understood as the number of paths from the root to a vertex $\lambda$ in the Young lattice as well as in the Bratteli diagram of $\A_{p,0}^d$, since up to level $p$ the Bratteli diagram is a subset of the full Young lattice and the procedure of adding a cell is monotonic with respect to the number of rows in $\lambda$ along a given path in the Young lattice.}

The main observation of this section is that for a Young diagram $\lambda \pt p-1$, we have $\AC_d(\lambda) = \AC(\lambda)$ if $\lambda_d = 0$ and $\AC_d(\lambda) \neq \AC(\lambda)$ if $\lambda_d > 0$. In particular, when $\lambda_d = 0$ this implies that
\begin{equation}
    \norm{\ket{w_{S,\lambda}}}^2 = \sum_{T \in \M{S,\lambda}} \frac{d_{T^{p}}}{p \cdot d_{T^{p-1}}} = \sum_{a \in \AC_d(\lambda)} \frac{d_{\lambda \cup a}}{p \cdot d_\lambda} = \sum_{a \in \AC(\lambda)} \frac{d_{\lambda \cup a}}{p \cdot d_\lambda} = 1,
\end{equation}
so $E^\lambda_p$ is an orthogonal projector. Since the cyclic shift $\pi$ acts unitarily, all $E^\lambda_i$ are orthogonal projectors as well. Because $E^\lambda$ provides a resolution of the identity in the irreducible representation $\lambda$, the POVM $E^\lambda$ restricted to the irreducible representation $\lambda$ with $\lambda_d = 0$ is actually a PVM on that irreducible representation. We will replace $E^\lambda$ by $\Pi^\lambda$ from now on to indicate that $E^\lambda$ is actually a PVM.

However, for the irreps $\lambda$ with $\lambda_d > 0$ the POVM $E^\lambda$ is not a PVM because $\AC(\lambda) = \AC_d(\lambda) \sqcup \set{(d+1,1)}$ and the vectors $\ket{w_{S,\lambda}}$ are not normalized anymore:
\begin{equation}\label{eq:pbt_norm_<1}
    \norm{\ket{w_{S,\lambda}}}^2 =  \sum_{T \in \M{S,\lambda}} \frac{d_{T^{p}}}{p \cdot d_{T^{p-1}}} = \sum_{a \in \AC_d(\lambda)} \frac{d_\mu}{p \cdot d_\lambda} = \of[\bigg]{\sum_{a \in \AC(\lambda)} \frac{d_\mu}{p \cdot d_\lambda}} - \frac{d_{\lambda \cup (d+1,1)}}{p \cdot d_\lambda} = 1 - \frac{d_{\lambda \cup (d+1,1)}}{p \cdot d_\lambda} < 1,
\end{equation}
where $\lambda \cup (d+1,1)$ denotes the Young diagram obtained from $\lambda$ by adding a cell with coordinates $(d+1,1)$, so that $\ell \of*{\lambda \cup (d+1,1)}=d+1$. The vertex corresponding to this Young diagram does not exist in the Bratteli diagram of $\A_{p,1}^d$. Fortunately, \cref{eq:pbt_norm_<1} suggests immediately how to construct a Naimark's dilation $\Pi^\lambda$ of $E^\lambda$ for $\lambda$ with $\lambda_d > 0$. For this construction, one needs to modify the Bratteli diagram of $\A_{p,1}^d$ by adding certain vertices to each level of the diagram. Then the set of all paths in this modified Bratteli diagram will define a new basis for the Naimark dilated Hilbert space.

More concretely, to each level $k \leq p$ of the Bratteli diagram of $\A_{p,1}^d$ we add all possible vertices labelled by all Young diagrams $\nu \pt k$ such that $\nu_{d+1} = 1$ (if such Young diagrams exist for a given level $k$).
An edge between a pair of Young diagrams in two consecutive levels is added if the latter diagram can be obtained by adding a cell to the previous one. This procedure ensures that all the levels up to $p$ of the new Bratteli diagram form a subset of the Young lattice, such that for every vertex at level $p$ the irrep dimensions still satisfy \cref{pbt:naimrak_sym_induction}. The basis for the Naimark dilated Hilbert space consists of all paths in this modified Bratteli diagram, which we denote by $\widetilde{\Paths}$ which we formally define as follows. For every $\lambda \in \Irr{\A_{p,1}^{d}}$, if $\lambda_r = \0 $ we define
\begin{equation}
    \widetilde{\Paths}(\lambda) \defeq
        \set*{ T = (T^{1},\dotsc,T^{p},\lambda) \in \Irr{\A_{1,0}^{d+1}} \times \dotsb \times \Irr{\A_{p,0}^{d+1}} \times \Irr{\A_{p,1}^{d}} \mid T \text{ satisfies \cref{def:bratteli_rule}} },
\end{equation}
where
\begin{equation}\label{def:bratteli_rule}
    T^{k}_{l,d+1} \leq 1 \quad \forall k \, \in [p], \quad \text{ and } \quad T^{k} \backslash T^{k-1} \in \AC(T^{k-1})  \quad \forall k \in [p], \quad \text{ and } \quad T^{p} \backslash \lambda \in \RC(T^{p}).
\end{equation}
If $\lambda_r \neq \0$ then
\begin{equation}
     \widetilde{\Paths}(\lambda) \defeq \Paths(\lambda).
\end{equation}
The full modified Bratteli diagram can be thought of as a disjoint union of modified sets of paths for every $\lambda \in \Irr{\A_{p,1}^{d}}$:
\begin{equation}
    \widetilde{\Paths} \defeq \bigsqcup_{\lambda \in \Irr{\A_{p,1}^{d}}} \widetilde{\Paths}(\lambda).
\end{equation}

The action of the generators $\sigma_1,\dotsc,\sigma_{p-1}$ of $\S_p$ in this modified Bratteli diagram is given by \cref{thm:main} when $\lambda_r \neq \0$, and by the trivial generalization of these formulas to all paths in $\widetilde{\Paths}(\lambda)$ if $\lambda_r = \0$. For this new Bratteli diagram, we define the dilated versions $\ket{\widetilde{w}_{S,\lambda}}$ of vectors $\ket{w_{S,\lambda}}$, for $S \in \widetilde{\Paths}_{p-1}(\lambda)$ as follows:
\begin{equation}
    \ket{\widetilde{w}_{S,\lambda}} \defeq \sum_{T \in \widetilde{M}(S,\lambda)} \sqrt{\frac{d_{T^{p}}}{p \cdot d_{T^{p-1}}}} \ket{T},
\end{equation}
where $\widetilde{M}(S,\lambda)$ is defined in the same way as the set $\M{S,\lambda}$, but for the dilated Bratteli diagram $\widetilde{\Paths}$, i.e., for every $S \in \widetilde{\Paths}_{p-1}(\lambda)$:
\begin{equation}
    \widetilde{M}(S,\lambda) \defeq \Set*{ T \in \widetilde{\Paths}(\lambda) \given \exists \, \mu \in \Irr{\A_{p,0}^{d+1}} : T = (S^{0},S^{1},\dotsc,S^{p-2},\lambda,\mu,\lambda), \, \mu_{d+1} \leq 1}.
\end{equation}
Note that $\widetilde{M}(S,\lambda)$ is in bijection now with the set $\AC(\lambda)$ and we can rewrite $\ket{\widetilde{w}_{S,\lambda}}$ as
\begin{equation}
    \ket{\widetilde{w}_{S,\lambda}} = \sum_{a \in \AC(\lambda)} \sqrt{\frac{d_{\lambda \cup a}}{p \cdot d_{\lambda}}} \ket{S \circ \of{\lambda \cup a} \circ \lambda},
\end{equation}
Therefore in the dilated space since \cref{pbt:naimrak_sym_induction} holds and all vertices $\lambda \cup a$, which can be obtained from a vertex $\lambda$ at level $p-1$ by adding a box $a$, exist then
\begin{equation}
    \norm{\ket{\widetilde{w}_{S,\lambda}}}^2 = \sum_{T \in \widetilde{M}(S,\lambda)} \frac{d_{T^{p}}}{p \cdot d_{T^{p-1}}} = \sum_{a \in \AC(\lambda)} \frac{d_{\lambda \cup a}}{p \cdot d_{\lambda}} = 1.
\end{equation}
Consequently, in the dilated space our POVM $E^\lambda$ is actually a PVM, which we denote by $\Pi^\lambda$. From now on assume that we work in the dilated space and we want to implement the PVM $\Pi = \set{\Pi_k}_{k=0}^p$, where for every $k \in [p]$:
\begin{align}\label{def:pbt_dilated_pvm}
    \Pi_k = \hspace{-10pt} \sum_{\substack{\lambda \in \Irr{\A_{p,1}^{d}} \\ \lambda_r=\0 }} \Pi^\lambda_k, \qquad \Pi^\lambda_k \defeq \pi^{k} \Pi^\lambda_p \pi^{-k} = \hspace{-10pt} \sum_{S \in \widetilde{\Paths}_{p-1}(\lambda)} \hspace{-10pt} \pi^k \ketbra{\widetilde{w}_{S,\lambda}}{\widetilde{w}_{S,\lambda}} \pi^{-k}, \qquad  \Pi_0 = I - \sum_{k=1}^{p}\Pi_k.
\end{align}
We have provided a \textit{Wolfram Mathematica} notebook implementing our construction in \cite{github}.

%%%%%%%%%%%%%%%%%%%%%%%%%%%%%%%%%%%%%%%%%%%%%%%%%%%%%%%%%%%%%%%%%%%%%%%%%%%%%%%%%%%%%%%%%%%%%%%%%%%%%%%%%%%%%%%%%%%%%%%%%%%%%%%%%%%%%%%%%%%%%%%%%%%%%%%%%%%%%%%%%%%%%%%%%%%%%%%%%%%%%%%%%%%%%%%%%%
\subsection{Efficient quantum circuit for the pretty good measurement}

Using the results of \cref{sec:pbt:dilation}, our task now is to implement the PVM $\Pi$ from \cref{def:pbt_dilated_pvm}. To explain our construction, let us illustrate the main idea in a simpler example, so first we reformulate the problem in a more abstract language for a simplified setting of rank $1$ projectors.

Suppose we have the ability to implement $n+1$ orthogonal vectors $\ket{x_k}$ for $k \in \set{0,\dotsc,n}$ via some easy-to-implement unitaries $U_k$ starting from known basis vector $\ket{0}$ as $\ket{x_k} = U_k \ket{0}$. Assume, that we also know that these vectors comprise a PVM $X = \set*{\ketbra{x_k}{x_k}}_{k=0}^{n}$. We can implement PVM $X$ via the following two ideas:
\begin{enumerate}
    \item Define a unitary $V$ as
        \begin{equation}
            V \defeq \sum_{k=0}^{n} \omega^{k}_{n+1} \ketbra{x_k}{x_k}
        \end{equation}
    where $\omega_{n+1}$ is root of unity of order $n+1$. Note, that we can implement the unitary $V$ efficiently if we have easy-to-implement unitaries $U_k$ via the following circuit:
    \begin{equation*}
        \begin{quantikz}
        \qw &\gate{V} & \qw
        \end{quantikz}
        \hspace{0.1em}=
        \begin{quantikz}
        \qw &\gate{U^\dagger_1} & \gate{\omega_{n+1}} & \gate{U_1}   &  \gate{U^\dagger_2} & \gate{\omega^{2}_{n+1}} & \gate{U_2}  & \HDots & \gate{U^\dagger_{n}} & \gate{\omega^{n}_{n+1}} & \gate{U_n}  & \qw
        \end{quantikz},
    \end{equation*}
    where \begin{quantikz}\gate{\omega^{k}_{n+1}}\end{quantikz} represents the gate $\omega^{k}_{n+1} \ketbra{0}{0} + \ketbra{0^\perp}{0^\perp}$.
    \item Note that implementing $V^i$ is easy:
    \begin{equation*}
        \begin{quantikz}
        \qw &\gate{V^i} & \qw
        \end{quantikz}
        \hspace{0.1em}=
        \begin{quantikz}
        \qw &\gate{U^\dagger_1} & \gate{\omega^{i}_{n+1}} & \gate{U_1}   &  \gate{U^\dagger_2} & \gate{\omega^{2i}_{n+1}} & \gate{U_2}  & \HDots & \gate{U^\dagger_{n}} & \gate{\omega^{ni}_{n+1}} & \gate{U_n}  & \qw
        \end{quantikz}.
    \end{equation*}
    Therefore we can use the standard phase estimation circuit to measure a given state $\ket{\Psi}$ with respect to the PVM $X$ as follows:
    \begin{equation*}
    \begin{quantikz}%[classical gap = 2pt]
        \lstick{$\ket{0}$}  & \gate{\mathrm{QFT}_{n+1}} & \ctrl{1} \wire[l][1]["i"{above,pos=0.2}]{a} & \gate{\mathrm{QFT}_{n+1}^\dagger} & \meter{} & \setwiretype{c} \rstick{$k$} \\
        \lstick{$\ket{\Psi}$} & & \gate{V^i} & & &
    \end{quantikz}
    \end{equation*}
\end{enumerate}
The above ideas are trivially extended to higher-rank PVMs. Now before presenting our circuit for $\Pi$ from \cref{def:pbt_dilated_pvm}, we need to define a unitary $W$, which can be used to prepare states $\ket{\widetilde{w}_{S,\lambda}} = \ket{S}\ket{\widetilde{w}_{\lambda}}$ for every $\lambda \in \Irr{\A_{p,1}^{d}}$ with $\lambda_r = \0$ and $\S \in \widetilde{\Paths}_{p-1}(\lambda)$, where we denote
\begin{equation}\label{def:w_lambda_tilde}
    \ket{\widetilde{w}_{\lambda}} \defeq \sum_{a \in \AC(\lambda)} \sqrt{\frac{d_{\lambda \cup a}}{p \cdot d_{\lambda}}} \ket{\lambda \cup a}.
\end{equation}
Now we define $W$ as a unitary, which prepares $\ket{\widetilde{w}_{\lambda}}$ conditioned on $\ket{\lambda}$:
\begin{equation}\label{def:W_lambda}
    W \defeq \sum_{\lambda} W_\lambda \otimes \ketbra{\lambda}{\lambda}, \qquad W_\lambda \ket{0} \defeq \ket{\widetilde{w}_{\lambda}}
\end{equation}
$W_\lambda$ is a rotation matrix of size at most $(d+1) \times (d+1)$ with easy-to-compute coefficients determined from \cref{def:w_lambda_tilde} and \cref{lem:content_into_sym_u_dims}.

Assume we start at the state $\ket{S}\ket{0}\ket{\lambda} \defeq \ket{S^{0}} \ket{S^{1}} \ket{S^{2}} \dotsc \ket{S^{p-1}} \ket{0} \ket{\lambda}$ defined for arbitrary $S \in \widetilde{\Paths}_{p-1}$, where $\ket{0}$ is some basis state of the register, corresponding to the $p$-th level of the dilated basis $\widetilde{\Paths}$. Then we can prepare a state $\ket{S}\ket{\widetilde{w}_{\lambda}}\ket{\lambda}$ as follows:
\begin{equation}
    I \otimes W \ket{S}\ket{0}\ket{\lambda} = \ket{S}\ket{\widetilde{w}_{\lambda}}\ket{\lambda},
\end{equation}
where identity $I$ acts on the register $\ket{S}$.

Now following the outlined prescription we construct the efficient circuit for the pretty good measurement $E$ for the PBT protocol in \cref{fig:pbt_circuit}. The circuit should be understood as acting on the dilated space spanned by $(T^{0},T^{1},\dotsc, T^{p},\lambda) \in \widetilde{\Paths}$, which forms the Gelfand--Tsetlin basis. The ancilla registers used for the dilation should be understood as discarded after the computation. Let us comment on \cref{fig:pbt_circuit}.

Firstly, the Schur transform maps the standard basis into the mixed Schur basis which is labelled usually by $\ket{M,T}$, where $M$ is a Gelfand--Tsetlin pattern and $T$ is a mixed Young tableau. We assume a tensor product structure in the mixed Schur basis for different vertices $T^{i}$ of the path $T \in \widetilde{\Paths}$, where all registers $T^{2},\dotsc,T^{p}$ are assumed to be dilated, according to the procedure explained in \cref{sec:pbt:dilation}. Since $T^{0}$ and $T^{1}$ are always constant, we omit those registers from the diagram. The last level $T^{p+1}$ of the path $T$ is labelled by $\lambda$ and indicates the irreducible representation. The cyclic permutation gate $\pi = (12\dotsc p) = \sigma_{1}\sigma_{2}\dotsc\sigma_{p-1}$ acts only on $p-1$ wires of the dilated Gelfand--Tsetlin basis, and each of the transpositions $\sigma_i$ act only locally in the registers $T^{i-1}, T^{i}, T^{i+1}$ ($\sigma_1$ acts only on $T^{2}$, and $\sigma_2$ acts only on $T^{2}, T^{3}$). $W$ prepares the state $\ket{\widetilde{w}_{\lambda}}$ conditioned on $\lambda$, i.e. $W_{\lambda} \ket{0} = \ket{\widetilde{w}_{\lambda}}$ and are controlled on $\lambda$ as well. The phase gates $\omega^{ki}_{p+1}$ act only on the state $\ket{0}$ in the register $T^{p}$, when they are controlled on the condition $T^{p-1} = T^{p+1}= \lambda$. Finally, the measured outcome $k = 0$ corresponds to the failure of the protocol, otherwise $k \in [p]$ indicates the port, where Bob should find the teleported state $\ket{\psi}$ of dimension $d$.

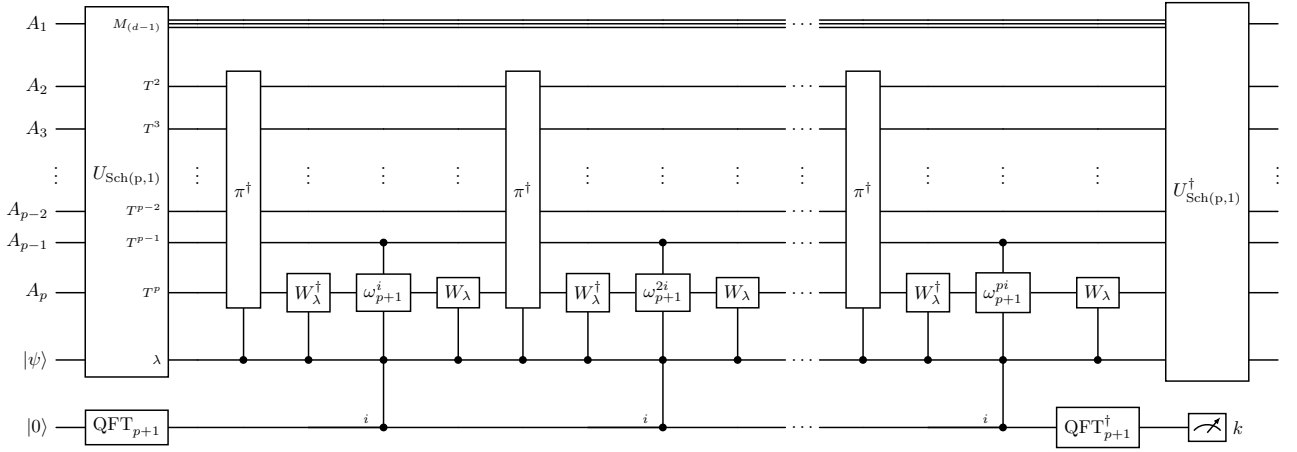
\begin{figure}[!ht]
\centering
\resizebox{\textwidth}{!}{
\begin{quantikz}[classical gap = 2pt]
\lstick{$A_1$} & \gateSchurPone \gateoutput{$M_{(d-1)}$}  & \qb         & & & & & & & & & \HDots & & & & & \gateSchurInvPone & \qq \\
\lstick{$A_2$} & \gateoutput{$T^{2}$}            & \qq    & \gateCyc  & & & & \gateCyc  & & & & \HDots & \gateCyc & & & & & \\
\lstick{$A_3$} & \gateoutput{$T^{3}$}            & \qq    &           & & & &           & & & & \HDots &          & & & & & \\
\setwiretype{n} \vdots &                         & \vdots &           & \vdots & \vdots & \vdots & & \vdots & \vdots & \vdots & \vdots & & \vdots & \vdots & \vdots & & \vdots \\
\lstick{$A_{p-2}$}     & \gateoutput{$T^{p-2}$}  & \qq    &           & & & & & & & & \HDots & & & & & & \\
\lstick{$A_{p-1}$}     & \gateoutput{$T^{p-1}$}  & \qq    &           &           & \ctrl{1}  &                   & & & \ctrl{1}  & & \HDots & &                   & \ctrl{1}   & & & \\
\lstick{$A_p$}         & \gateoutput{$T^{p}$}    & \qq    &           & \gateWinv & \gateR{i} & \gateW  & & \gateWinv & \gateR{2i} & \gateW & \HDots & & \gateWinv & \gateR{pi} & \gateW & & \\
\lstick{$\ket{\psi}$}  & \gateoutput{$\lambda$}  & \qq    & \ctrl{-1} & \ctrl{-1} & \ctrl{-1} & \ctrl{-1} & \ctrl{-1} & \ctrl{-1} & \ctrl{-1} & \ctrl{-1} & \HDots & \ctrl{-1}  & \ctrl{-1} & \ctrl{-1} & \ctrl{-1} & & \\
\lstick{$\ket{0}$} & \gateQFT & & & & \ctrl{-1} \wire[l][1]["i"{above,pos=0.2}]{a} & & & & \ctrl{-1} \wire[l][1]["i"{above,pos=0.2}]{a} & & \HDots & & & \ctrl{-1} \wire[l][1]["i"{above,pos=0.2}]{a} & \gateQFTinv & \meter{} \rstick{$k$}
\end{quantikz}
}
\caption{The circuit implementation of the pretty good measurement for port-based teleportation. The registers $T^{2},T^{3},\dotsc,T^{p}$ are assumed to be dilated according to \cref{sec:pbt:dilation}. The total qubit and gate cost is upper bounded by $\poly(p,d)$.
}
\label{fig:pbt_circuit}
\end{figure}

Now we argue that the complexity of our circuit is $\poly(p,d)$:
\begin{enumerate}
    \item The complexity of implementing the mixed Schur transform according to \cref{sec:SchTransQuantum} is $O(\poly(p,d)$. The number of ancillas qubits needed to implement the mixed Schur transform isometry and create a Naimark's dilation after the mixed Schur transfer is also polynomial $\poly(d,\log(p))$. This is so because the number of compositions of the integer $p$ into $d$ non-negative parts is $\binom{p+d-1}{p}$, so $\log\of*{\binom{p+d-1}{p}} = \poly(d,\log(p))$.
    \item The complexity of implementing $\pi = \sigma_{1}\sigma_{2}\dotsc\sigma_{p-1}$ is also $\poly(p,d)$, since each transposition operator $\sigma_{i}$ acts locally on the registers $T^{i-1}, T^{i}, T^{i+1}$. Namely $\sigma_{i}$ is a $2 \times 2$ rotation in the register $ T^{i}$ controlled from $T^{i-1}, T^{i+1}$ and $\lambda$. So using the standard methods (e.g. Given's rotations) it can be implemented with complexity upper bounded by $\poly(d,\log(p))$.\footnote{Alternatively, one can go back to the standard basis via inverse mixed Schur transform, implement a simple permutation gate corresponding to $\pi$ and then go back to the Gelfand--Tsetlin basis via mixed Schur transform. That also would also have complexity $\poly(p,d)$}
    \item $W$ operator implements $\lambda$-controlled $(d+1) \times (d+1)$ operator $W_\lambda$ on the register $T^{p}$. The number of qubits in the register $T^{p}$ is $\poly(d,\log(p))$. The coefficients of the matrix $W_\lambda$ are determined from \cref{def:w_lambda_tilde} and \cref{lem:content_into_sym_u_dims} and are easy to compute in classical time $\poly(d,\log(p))$. Therefore, implementing $W$ would have the gate complexity $\poly(d,\log(p))$.
    \item $\omega^{ki}_{p+1}$ denotes a simple-to-implement gate $\omega^{ki}_{p+1} \ketbra{0}{0} + \ketbra{0^\perp}{0^\perp}$ in the register $T^{p}$ conditioned on the registers $T^{p-1}=T^{p+1}=\lambda$. This has complexity $\poly(d,\log(p))$.
    \item Finally, the complexity of the Quantum Fourier Transform $\mathrm{QFT}_{p+1}$ is $\poly(p)$.
\end{enumerate}

%%%%%%%%%%%%%%%%%%%%%%%%%%%%%%%%%%%%%%%%%%%%%%%%%%%%%%%%%%%%%%%%%%%%%%%%%%%%%%%%%%%%%%%%%%%%%%%%%%%%%%%%%%%%%%%%%%%%%%%%%%%%%%%%%%%%%%%%%%%%%%%%%%%%%%%%%%%%%%%
\subsection{Exponentially improved lower bound for non-local quantum computation}

Port-based teleportation has interesting applications in holography and non-local quantum computation \cite{may2019quantum,may2022complexity}, where it was argued that the complexity of the local operation controls the amount of entanglement needed to implement it non-locally, using ideas from AdS/CFT correspondence. In particular, it was derived in \cite[Lemma 9]{may2022complexity} that port-based teleportation can be used to lower bound the amount of entanglement needed to implement a given channel (from a large class of one-sided quantum channels) non-locally in terms of the so-called \emph{interaction-class circuit complexity} $\mathcal{C}$ \cite[Definition 3]{may2022complexity} denoted by $\mathcal{C}$. Port-based teleportation can also be used to find an upper bound \cite{beigi2011simplified,speelman:LIPIcs:2016:6690,may2022complexity,}. The bounds read as
\begin{equation}
    \Omega \of*{ \log \log \mathcal{C}} \leq E_c \leq O \of*{\mathcal{C} \cdot 2^{\mathcal{C}}},
\end{equation}
where $E_c$ is the entanglement cost needed to implement non-locally a unitary with complexity $\mathcal{C}$ \cite{may2022complexity}. The derivation of the lower bound uses a trivial upper bound $\exp\of{O(p)}$ for the complexity of the port-based teleportation in terms of the number of ports $p$, see \cite[Equation 47]{may2022complexity}. It is pointed out in \cite[page 28]{may2022complexity} that a better implementation of the port-based teleportation protocol would lead to a better lower bound.

Complexity of our implementation of PBT protocol is $\poly(p)$, therefore this immediately translates, according to \cite[Lemma 9]{may2022complexity}, to a better lower bound:
\begin{equation}
    \Omega \of*{ \log \mathcal{C}} \leq E_c,
\end{equation}
thus improving exponentially upon the previous bound.\\
\par

\section*{Acknowledgements}

DG thanks Tudor Giurgica-Tiron, Quynh Nguyen, Aram Harrow, Hari Krovi, Philip Verduyn Lunel, Rene Allerstorfer and Florian Speelman for useful discussions.
DG, AB, and MO were supported by an NWO Vidi grant (Project No VI.Vidi.192.109).

%%%%%%%%%%%%%%%%%%%%%%%%%%%%%%%%%%%%%%%%%%%%%%%%%%%%%%%%%%%%%%%%%%%%%%%%%%%%%%%%%%%%%%%%%%%%%%%%%%%%%%%%%%%%%%%%%%%%%%%%%%%%%%%%%%%%%%%%%%%%%%%%%%%%%%%%%%%%%%%%

\printbibliography

%%%%%%%%%%%%%%%%%%%%%%%%%%%%%%%%%%%%%%%%%%%%%%%%%%%%%%%%%%%%%%%%%%%%%%%%%%%%%%%%%%%%%%%%%%%%%%%%%%%%%%%%%%%%%%%%%%%%%%%%%%%%%%%%%%%%%%%%%%%%%%%%%%%%%%%%%%%%%%%%
%%%%%%%%%%%%%%%%%%%%%%%%%%%%%%%%%%%%%%%%%%%%%%%%%%%%%%%%%%%%%%%%%%%%%%%%%%%%%%%%%%%%%%%%%%%%%%%%%%%%%%%%%%%%%%%%%%%%%%%%%%%%%%%%%%%%%%%%%%%%%%%%%%%%%%%%%%%%%%%%

\begin{appendix}
\markboth{APPENDIX}{APPENDIX}

\section{Clebsch--Gordan coefficients}\label{sec:CGcoefficients}

This appendix summarizes formulas from \cite[Chapter~18]{Vilenkin1992} for evaluating the (dual) Clebsch--Gordan\footnote{Also known as Wigner coefficients \cite{biedenharn1968pattern,HarrowThesis}.} coefficients of $\U{d}$.
%in terms of the \emph{scalar factors} $\rw{\m_n}{\pm i}{\m_{n-1}}{\pm j}$.
Recall from \cref{sec:GT} that a Gelfand--Tsetlin pattern $M \in \GT(\lambda,d)$ is a column vector
\begin{equation}
    M =
    \bmx{
        \m_d \\
        \vdots \\
        \m_1
    },
\end{equation}
where $\m_n = (m_{1,n}, \dotsc, m_{n,n})$ are row vectors of non-decreasing integers subject to interlacing relations~\eqref{eq:GTpatterns-relations}.
For any row $\m_n$ and integer $i \in \set{1, \dotsc, n}$, we denote by $\m_n^{\pm i}$ the vector $\m_n$ with entry $m_{i,n}$ replaced by $m_{i,n} \pm 1$.
Let us fix any symbol $x \in [d]$.
We define Gelfand--Tsetlin patterns $M^{+}$ and $M^{-}$ by modifying the top $d-x+1$ rows of $M$ as follows:
\begin{equation}
    \label{eq:Mprime}
    M =
    \bmx{
        \m_d \\
        \vdots \\
        \m_{x} \\
        \m_{x-1} \\
        \vdots \\
        \m_1
    }, \qquad
    M^{+} =
    \bmx{
        \m_d^{+i_d} \\
        \vdots \\
        \m_{x}^{+i_x} \\
        \m_{x-1} \\
        \vdots \\
        \m_1
    }, \qquad
    M^{-} =
    \bmx{
        \m_d^{-i_d} \\
        \vdots \\
        \m_{x}^{-i_x} \\
        \m_{x-1} \\
        \vdots \\
        \m_1
    },
\end{equation}
for some integers $i_x, \dotsc, i_d$ where $1 \leq i_j \leq j$.
Intuitively, this means that the semistandard tableau $M^{+}$ is obtained from $M$ by adding a box containing $x$ in the row $i_x$, and then consecutively bumping the entries $j$ from row $i_j$ downwards the tableau.

For any $x \in [d]$, the Gelfand--Tsetlin patterns corresponding to $x$ and its dual are defined as follows:
\begin{align}
    X^+ &\defeq
    \bmx{
      1 \; 0 \; 0 \dots 0 \; 0 \; 0 \\
      1 \; 0 \; 0 \dots 0 \; 0 \\
        \dots \dots \dots \\
      1 \; 0 \dots 0 \\
      0 \dots 0 \\
        \dots \\
      0 \; 0 \\
      0
    }
    \begin{array}{c}
        d \\ d-1 \\ \dots \\ x \\ x-1 \\ \dots \\ 2 \\ 1
    \end{array}, &
    X^- &\defeq
    \bmx{
      0 \; 0 \; 0 \dots 0 \; 0 \; -1 \\
      0 \; 0 \; 0 \dots 0 \; -1 \\
        \dots \dots \dots \\
      0 \; 0 \dots -1 \\
      0 \dots 0 \\
        \dots \\
      0 \; 0 \\
      0
    }
    \begin{array}{c}
        d \\ d-1 \\ \dots \\ x \\ x-1 \\ \dots \\ 2 \\ 1
    \end{array}.
\end{align}
Now we can define the Clebsch--Gordan coefficients $c_{M^{\pm},M}^{x,\pm}$ uniformly as
\begin{equation}
    c_{M^{\pm},M}^{x,\pm} \defeq c_{M^{\pm},M}^{X^{\pm}},
\end{equation}
where $c_{M^{\pm},M}^{X^{\pm}}$ is the product of \emph{reduced Wigner coefficients}:
%%%%%%%%%%%%%%%%%%%%%%%%%%%%%%%%
\begin{equation}
    c_{M^{\pm},M}^{X^{\pm}} =
    \prod_{n=x+1}^{d}
    \left(\begin{array}{cc|c}
        \m_{n}   & \mathbf{x}^{\pm}_{n}   & \m_n^{\pm i_n} \\
        \m_{n-1} & \mathbf{x}^{\pm}_{n-1} & \m_{n-1}^{\pm i_{n-1}}
    \end{array}\right).
\end{equation}
%%%%%%%%%%%%%%%%%%%%%%%%%%%%%%%%
The \emph{reduced Wigner coefficients} \cite[p.~152]{HarrowThesis} or \emph{scalar factors} \cite[p.~385]{Vilenkin1992} are defined as follows.
We take two consecutive rows $\m_n$ and $\m_{n-1}$
($1 < n \leq d$)
of a Gelfand--Tsetlin pattern $M$ and modify them at positions
$1 \leq i \leq n$ and
$1 \leq j \leq n-1$.
The corresponding reduced Wigner coefficients are
\begin{align}
    \label{def:reduced_wigner+0}
    \left(\begin{array}{cc|c}
        \m_n & (1, \mathbf{0}_{n-1}) & \m_n^{+i} \\
        \m_{n-1} & (0, \mathbf{0}_{n-2}) & \m_{n-1}
    \end{array}\right)
    &=
    \left|\frac{\prod_{j=1}^{n-1}\left(\ell_{j, n-1}-\ell_{i,n}-1\right)}{\prod_{j \neq i}\left(\ell_{j,n}-\ell_{i,n}\right)}\right|^{1/2}, \\
    \label{def:reduced_wigner++}
    \left(\begin{array}{cc|c}
        \m_n & (1,\mathbf{0}_{n-1}) & \m_n^{+i} \\
        \m_{n-1} & (1,\mathbf{0}_{n-2}) & \m_{n-1}^{+j}
    \end{array}\right)
    &=
    S(i, j)\left|\frac{\prod_{k \neq j}\left(\ell_{k, n-1}-\ell_{i, n}-1\right) \prod_{k \neq i}\left(\ell_{k, n}-\ell_{j, d-1}\right)}{\prod_{k \neq i}\left(\ell_{k,n}-\ell_{i, n}\right) \prod_{k \neq j}\left(\ell_{k, n-1}-\ell_{j, n-1}-1\right)}\right|^{1/2}, \\
    \label{def:reduced_wigner-0}
    \left(\begin{array}{cc|c}
        \m_n & (\mathbf{0}_{n-1},-1) & \m_n^{-i} \\
        \m_{n-1} & (\mathbf{0}_{n-2}, 0) & \m_{n-1}
    \end{array}\right)
    &=
    \left|\frac{\prod_{j=1}^{n-1}\left(\ell_{j, n-1}-\ell_{i, n}\right)}{\prod_{j \neq i}\left(\ell_{j, n}-\ell_{i, n}\right)}\right|^{1/2}, \\
    \label{def:reduced_wigner--}
    \left(\begin{array}{cc|c}
        \m_n & (\mathbf{0}_{n-1},-1) & \m_n^{-i} \\
        \m_{n-1} & (\mathbf{0}_{n-2},-1) & \m_{n-1}^{-j}
    \end{array}\right)
    &=
    S(i, j)\left|\frac{\prod_{k \neq j}\left(\ell_{k, n-1}-\ell_{i, n}\right) \prod_{k \neq i}\left(\ell_{k,n}-\ell_{j, n-1}+1\right)}{\prod_{k \neq i}\left(\ell_{k, n}-\ell_{i, n}\right) \prod_{k \neq j}\left(\ell_{k, n-1}-\ell_{j, n-1}+1\right)}\right|^{1/2}.
\end{align}
where $\mathbf{0}_n$ denotes a row vector with $n$ zeros, $\ell_{k,s} \defeq m_{k,s} - k$, $S(i,j) \defeq 1$ if $i \leq j$ and $S(i,j) \defeq -1$ if $i > j$.

More explicitly, the Clebsch--Gordan coefficient $c_{M^{+},M}^{x,+}$ is equal to the product of the reduced Wigner coefficients obtained by cutting the Gelfand--Tsetlin patterns $M$, $x$ and $M^{\pm}$ into pairs of consecutive rows:
\begin{equation}
    c_{M^{+},M}^{x,+}=
    \begin{aligned}
    &\left(\begin{array}{cc|c}
    \m_{x} & (1, \mathbf{0}) & \m_{x}^{+i_x} \\
    \m_{x-1} & (0, \mathbf{0}) & \m_{x-1}
    \end{array}\right)
    \end{aligned}
    \cdot \prod_{n=x+1}^{d}
    \begin{aligned}
    &\left(\begin{array}{cc|c}
    \m_n & (1, \mathbf{0}) & \m_n^{+i_n} \\
    \m_{n-1} & (1, \mathbf{0}) & \m_{n-1}^{+i_{n-1}}
    \end{array}\right)
    \end{aligned}
    .
\end{equation}
On the other hand, for arbitrary $M^{+}$ which is not of the form (\ref{eq:Mprime}), Clebsch--Gordan coefficient $c_{M^{+},M}^{x,+}=0$.
A dual Clebsch--Gordan coefficient $c_{M^{-},M}^{x,-}$ is given by the product of dual reduced Wigner coefficients:
\begin{equation}
    c_{M^{-},M}^{x,-}=
    \begin{aligned}
    &\left(\begin{array}{cc|c}
    \m_{x} & (\mathbf{0},-1) & \m_{x}^{-i_x} \\
    \m_{x-1} & (\mathbf{0},0) & \m_{x-1}
    \end{array}\right)
    \end{aligned}
    \cdot \prod_{n=x+1}^{d}
    \begin{aligned}
    &\left(\begin{array}{cc|c}
    \m_n & (\mathbf{0},-1) & \m_n^{-i_n} \\
    \m_{n-1} & (\mathbf{0},-1) & \m_{n-1}^{-i_{n-1}}
    \end{array}\right)
    \end{aligned}
    .
\end{equation}
On the other hand, for arbitrary $M^{-},M$ which is not of the form (\ref{eq:Mprime}), dual Clebsch--Gordan coefficient $c_{M^{-},M}^{x,-}=0$.

We can summarize the above definitions succinctly as follows. We can define Clebsch--Gordan coefficients $c_{N_{(k)},M_{(k)}}^{x,\pm} = 0$ for arbitrary $k \in [d]$, $\lambda \in \IrrU{k}$, $N_{(k)}, M_{(k)} \in \GT(\lambda,k)$, $x \in [k]$ recursively as 
\begin{equation}\label{app:cg_recursion}
    c_{N_{(k)},M_{(k)}}^{x,\pm} = \rwpm{\m_k}{\n_k}{\m_{k-1}}{\n_{k-1}} \cdot c^{x,\pm}_{N_{(k-1)},M_{(k-1)}},
\end{equation}
where for every $k \in [d]$ and Gelfand--Tsetlin patterns of length $k-1$ we define
\begin{equation}\label{app:cg_recursion_corner_case}
    c^{k,\pm}_{N_{(k-1)},M_{(k-1)}} \defeq \delta_{N_{(k-1)},M_{(k-1)}},
\end{equation}
and the coefficients $\rwpm{\m_k}{\n_k}{\m_{k-1}}{\n_{k-1}}$ are defined for $\m_{k-1} \squb \m_k$ and $\n_{k-1} \squb \n_k$ as
\begin{equation}\label{app:rw+}
    \rwsgn{+}{\m_k}{\n_k}{\m_{k-1}}{\n_{k-1}} \defeq 
    \begin{cases}
       \left(\begin{array}{cc|c}
        \m_k & (1, \mathbf{0})& \m_k^{+i} \\
        \m_{k-1} & (1, \mathbf{0}) & \m_{k-1}^{+j}
        \end{array}\right) & \n_k = \m_k^{+i},\, \n_{k-1} = \m_{k-1}^{+j}\, \text{for some $i \in [k]$, $j \in [k-1]$}  \\
        \left(\begin{array}{cc|c}
        \m_k & (1, \mathbf{0}) & \m_k^{+i}, \\
        \m_{k-1} & (\mathbf{0},0) & \m_{k-1}^{+j}
        \end{array}\right) & \n_k = \m_k^{+i},\, \text{for some $i \in [k]$}, \\
       1 & \m_k = \n_k,\, \m_{k-1} = \n_{k-1}, \\
       0 &\text{otherwise}
    \end{cases}
\end{equation}
and
\begin{equation}\label{app:rw-}
    \rwsgn{-}{\m_k}{\n_k}{\m_{k-1}}{\n_{k-1}} \defeq 
    \begin{cases}
       \left(\begin{array}{cc|c}
        \m_k & (\mathbf{0},-1) & \m_k^{-i} \\
        \m_{k-1} & (\mathbf{0},-1) & \m_{k-1}^{-j}
        \end{array}\right) & \n_k = \m_k^{-i},\, \n_{k-1} = \m_{k-1}^{-j}\, \text{for some $i \in [k]$, $j \in [k-1]$}  \\
        \left(\begin{array}{cc|c}
        \m_k & (\mathbf{0},-1) & \m_k^{-i}, \\
        \m_{k-1} & (\mathbf{0},0) & \m_{k-1}^{-j}
        \end{array}\right) & \n_k = \m_k^{-i},\, \text{for some $i \in [k]$},  \\
       1 & \m_k = \n_k,\, \m_{k-1} = \n_{k-1}, \\
       0 &\text{otherwise}.
    \end{cases}
\end{equation}
If either $\m_{k-1} \squb \m_k$ or $\n_{k-1} \squb \n_k$ is not satisfied then we define $ \rwpm{\m_k}{\n_k}{\m_{k-1}}{\n_{k-1}} \defeq 0$.

\section{Proof of \cref{thm:main}}\label{proof_main_gt_theorem}

%%%%%%%%%%%%%%%%%%%%%%%%%%%%%%%%%%%%%%%%%%%%%%%%%%%%%%%%%%%%%%%%%%%%%%%%%%%%%%%%%%%%%%%%%%%%%%%%%%%%%%%%%%%%%%%%%%%%%%%%%%%%%%%%%%%%%%%%%%%%%%%%%%%%%%%%%%%%%%%

%%%%%%%%%%%%%%%%%%%%%%%%%%%%%%%%%%%%%%%%%%%
%%%%%%%%%%%%%%%%%%%%%%%%%%%%%%%%%%%%%%%%%%%

% Restating the walled Brauer algebra definition
\brauerdef*

%%%%%%%%%%%%%%%%%%%%%%%%%%%%%%%%%%%%%%%%%%%
%%%%%%%%%%%%%%%%%%%%%%%%%%%%%%%%%%%%%%%%%%%

% Restating the main theorem
\mainthm*

%%%%%%%%%%%%%%%%%%%%%%%%%%%%%%%%%%%%%%%%%%%
%%%%%%%%%%%%%%%%%%%%%%%%%%%%%%%%%%%%%%%%%%%

\begin{proof}
    We will prove the theorem in three steps. First, we check that such action defines a representation of $\A_{p,q}^d$ by checking the relations in \cref{rel:transpositions} for transpositions. The relations for transpositions in \cref{gtbasis:transpositions} are defined in the same way as the Young--Yamanouchi basis of the symmetric group \cite{rutherford2013substitutional}. It is essentially folklore knowledge today, however, we still provide the proof for completeness. Next, we check the relations in \cref{rel:contraction1,rel:contraction2,rel:contraction3} for the contraction $\sigma_p$. Finally, we prove that such representation is irreducible.
    For simplicity, we will write $\sigma_i$ instead of $\psi_\lambda(\sigma_i)$.

    (a) To verify $\sigma_i^2=1$, consider the action of $\sigma_i$ on the invariant subspaces $V_T$ spanned by $\set{\ket{T}, \ket{\sigma_i T}}$ for each $T \in \Paths(\lambda)$. It is clear from \cref{gtbasis:transpositions} that the matrix $\sigma_i|_{V_T}$ of this action is
    \begin{equation*}
        \restr{\sigma_i}{V_T} =
        \mx{
          \frac{1}{r_i(T)} & \sqrt{1 - \frac{1}{r_i(T)^2}}  \\
          \sqrt{1 - \frac{1}{r_i(T)^2}} & -\frac{1}{r_i(T)}
        },
    \end{equation*}
    meaning that trivially $(\sigma_i|_{V_T})^2=1$. Since that holds for every $T \in \Paths(\lambda)$, then $\sigma_i^2=1$ holds.

    (b) To verify $(\sigma_i \sigma_{i+1})^3=1$, consider for every $T \in \Paths(\lambda)$ the action of $\sigma_i \sigma_{i+1}$ (according to \cref{gtbasis:transpositions}) on the invariant vector space $V_T \defeq \spn \set{ \ket{T}, \ket{\sigma_i T}, \ket{\sigma_{i+1} \sigma_i T}, \ket{ \sigma_i \sigma_{i+1} \sigma_i T}, \ket{ (\sigma_{i+1} \sigma_i)^2 T}, \ket{ \sigma_i (\sigma_{i+1} \sigma_i)^2 T}}$. Now if we define $a \defeq r_i(T), \, b \defeq r_i\of*{ \sigma_{i+1}\sigma_i T}, \, c \defeq r_i\of*{(\sigma_{i+1}\sigma_i)^2 T}$, then the action of $\sigma_i$ and $\sigma_{i+1}$ on the $V_T$ in the above basis is given by the following matrices:
    \begin{align*}
        \restr{\sigma_i}{V_T} \hspace{-0.2em} = \hspace{-0.2em} \of*{\begin{smallmatrix}
          \frac{1}{a} & \sqrt{1 - \frac{1}{a^2}} & 0 & 0 & 0 & 0 \\
          \sqrt{1 - \frac{1}{a^2}} & -\frac{1}{a} & 0 & 0 & 0 & 0 \\
          0 & 0 & \frac{1}{b} & \sqrt{1 - \frac{1}{b^2}} & 0 & 0 \\
          0 & 0 & \sqrt{1 - \frac{1}{b^2}} & -\frac{1}{b} & 0 & 0 \\
          0 & 0 & 0 & 0 & \frac{1}{c} & \sqrt{1 - \frac{1}{c^2}} \\
          0 & 0 & 0 & 0 & \sqrt{1 - \frac{1}{c^2}} & -\frac{1}{c} \\
        \end{smallmatrix}},
        \restr{\sigma_{i+1}}{V_T} \hspace{-0.2em} = \hspace{-0.2em} \of*{\begin{smallmatrix}
          \frac{1}{c} & 0 & \sqrt{1 - \frac{1}{c^2}} & 0 & 0 & 0 \\
          0 & \frac{1}{b} & 0 & 0 & \sqrt{1 - \frac{1}{b^2}} & 0 \\
          \sqrt{1 - \frac{1}{c^2}} & 0 & -\frac{1}{c} & 0 & 0 & 0 \\
          0 & 0 & 0 & \frac{1}{a} & 0 & \sqrt{1 - \frac{1}{a^2}} \\
          0 & \sqrt{1 - \frac{1}{b^2}} & 0 & 0 & -\frac{1}{b} & 0 \\
          0 & 0 & 0 & \sqrt{1 - \frac{1}{a^2}} & 0 & -\frac{1}{a} \\
        \end{smallmatrix}}
    \end{align*}
    Taking into account the fact $b = a + c$, it is easy to verify that $ (\sigma_i|_{V_T} \sigma_{i+1}|_{V_T} )^3=1$. Since this holds for any $T \in \Paths(\lambda)$, it must be $(\sigma_i \sigma_{i+1})^3=1$.

    (c) Finally, to verify the relation $ \sigma_i \sigma_j = \sigma_j \sigma_i$ for $(|i-j| > 1)$ just note, that $\sigma_i \sigma_j T = \sigma_j \sigma_i T$ and $a \defeq r_i(\sigma_j T) = r_i(T)$, $b \defeq r_j(\sigma_i T) = r_j(T)$. It means that on $W_T \defeq \spn \set{\ket{T}, \ket{\sigma_i T}, \ket{\sigma_j T}, \ket{\sigma_j \sigma_i T}}$ we have a tensor product structure:
   \begin{align*}
    \restr{\sigma_i}{W_T} &=
    \of*{\begin{smallmatrix}
      \frac{1}{a} & \sqrt{1 - \frac{1}{a^2}} & 0 & 0 \\
      \sqrt{1 - \frac{1}{a^2}} & -\frac{1}{a} & 0 & 0 \\
      0 & 0 & \frac{1}{a} & \sqrt{1 - \frac{1}{a^2}} \\
      0 & 0 & \sqrt{1 - \frac{1}{a^2}} & -\frac{1}{a} \\
    \end{smallmatrix}} = I_2 \otimes
    \of*{\begin{smallmatrix}
      \frac{1}{a} & \sqrt{1 - \frac{1}{a^2}}  \\
      \sqrt{1 - \frac{1}{a^2}} & -\frac{1}{a} \\
    \end{smallmatrix}}, \\
    \restr{\sigma_j}{W_T} &=
    \of*{\begin{smallmatrix}
      \frac{1}{b} & 0 & \sqrt{1 - \frac{1}{b^2}} & 0  \\
      0 & \frac{1}{b} & 0  & \sqrt{1 - \frac{1}{b^2}} \\
      \sqrt{1 - \frac{1}{b^2}} & 0 & -\frac{1}{b} & 0  \\
      0 & \sqrt{1 - \frac{1}{b^2}} & 0 & -\frac{1}{b} \\
    \end{smallmatrix}} =
    \of*{\begin{smallmatrix}
      \frac{1}{b} & \sqrt{1 - \frac{1}{b^2}}  \\
      \sqrt{1 - \frac{1}{b^2}} & -\frac{1}{b} \\
    \end{smallmatrix}} \otimes I_2,
    \end{align*}
    and consecutively $ \sigma_i|_{V_T} \sigma_j|_{V_T} = \sigma_j|_{V_T} \sigma_i|_{V_T}$. Therefore, $\sigma_i \sigma_j = \sigma_j \sigma_i$ when $|i-j| > 1$.

    (d) For each $T \in \Paths(\lambda)$ there is an invariant subspace $V_T \defeq \spn \set{\ket{T'} \, | \, T' \in \M{T}}$. If $\M{T} = \0$, then we assume $V_T \defeq \spn \set{\ket{T}}$.
    Note that $\norm{\ket{v_T}}^2_2 = d$, according to \cref{lem:v_Tnorm}. Moreover, it is easy to see from the definition that $\sigma_p|_{V_T} = \ketbra{v_T}{v_T}$. From this it is obvious $\of*{\sigma_p|_{V_T}} ^ 2 = d \cdot \sigma_p|_{V_T}$, and that implies $\sigma_p ^ 2 = d \cdot \sigma_p$.

    (f) To check $\sigma_p \sigma_i = \sigma_i \sigma_p \, (i \neq p \pm 1)$, we define $W_T \defeq \spn \set{\ket{T'} \, | \, T' \in \M{T} \cup \M{\sigma_i T}} \simeq V_T \otimes \spn \set{ \ket{1}, \ket{\sigma_i}}$, where $V_T \defeq \spn \set*{ \ket{T'} \, | \, T' \in \M{T}}$. Here we have a similar tensor product structure as in the case of transpositions making these generators commute. Namely, since $r_i(T) = r_i(T')$ for every $T' \in \M{T}$, we have
    \begin{equation*}
        \restr{\sigma_p}{W_T} = d \of*{ \restr{\ketbra{v_T}{v_T}}{V_T} } \otimes I_2, \,
        \restr{\sigma_i}{W_T} =  I_{\abs{\M{T}}} \otimes
        \of*{\begin{matrix}
            \frac{1}{r_i(T)} & \sqrt{1 - \frac{1}{r_i(T)^2}}  \\
            \sqrt{1 - \frac{1}{r_i(T)^2}} & -\frac{1}{r_i(T)} \\
        \end{matrix}}.
    \end{equation*}
    Therefore $\sigma_p \sigma_i = \sigma_i \sigma_p \, (i \neq p \pm 1)$ holds.

    (e) Now let's first check the relation $\sigma_p \sigma_{p-1} \sigma_p = \sigma_p$. Note that we can conveniently write generators $\sigma_p$ and $\sigma_{p-1}$ in terms of the path algebra matrix units, specifically highlighting only the relevant ones:
    \begin{align}
        \sigma_p &= \sum_{\mu \in B(\lambda)} \sum_{\nu \in C(\lambda,\mu)} \sum_{ \substack{S_1 \in \Paths(\nu) \\ S_2 \in \Paths(\mu,\lambda) \\ m, m' \in \M{\lambda,\mu,\nu}}} c(\mu,m) c(\mu,m') \ketbra{S_1 \circ(\nu,\mu,m,\mu)\circ S_2}{S_1 \circ(\nu,\mu,m',\mu)\circ S_2} \label{thm:e:sigma_p}\\
        \sigma_{p-1} &= \sum_{\mu \in B(\lambda)} \sum_{\nu \in C(\lambda,\mu)} \sum_{ \substack{S_1 \in \Paths(\nu) \\ S_2 \in \Paths(\mu,\lambda) \\ m \in \M{\lambda,\mu,\nu}}} f(\nu,\mu,m) \ketbra{S_1 \circ(\nu,\mu,m,\mu)\circ S_2}{S_1 \circ(\nu,\mu,m,\mu)\circ S_2} + \dotsc, \label{thm:e:sigma_p-1}
    \end{align}
    where we use the following notation:
    \begin{align*}
        B(\lambda) &\defeq \set{\mu \in \Irr{\A_{p-1,0}^{d}} \, | \, \exists \, T \in \Paths(\lambda) : T^{p-1} = T^{p+1} = \mu},
        \\
        C(\lambda,\mu) &\defeq \set{\nu \in \Irr{\A_{p-2,0}^{d}} \, | \, \exists \, T \in \Paths(\lambda) : T^{p-1} = T^{p+1} = \mu, T^{p-2} = \nu} \\
        \M{\lambda,\mu,\nu} &\defeq \set{m \in \Irr{\A_{p,0}^{d}} \, | \, \exists \, T \in \Paths(\lambda) : T^{p-1} = T^{p+1} = \mu, T^{p-2} = \nu, T^p = m},\\
        c(\mu,m) &\defeq c(T) \text{ for arbitrary $T \in \Paths(\lambda) : T^{p-1}=T^{p+1}=\mu, T^p=m$}, \\
        f(\nu,\mu,m) &\defeq \frac{1}{r_{p-1}(T)} \text{ for arbitrary $T \in \Paths(\lambda) : T^{p-2}=\nu, T^{p-1}=\mu, T^p=m$}.
    \end{align*}
    In \cref{thm:e:sigma_p-1} we do not write terms with the matrix units which multiply to zero with the matrix units from the sum of \cref{thm:e:sigma_p}. We also abuse the notation by forgetting that $\nu, \mu, m$ are actually pairs of Young diagrams: we only refer to the left diagrams by dropping the subscript $l$. Using \cref{thm:e:sigma_p,thm:e:sigma_p-1} we can deduce by direct multiplication, that $\sigma_p \sigma_{p-1} \sigma_{p} = \sigma_{p}$ is equivalent to
    \begin{equation}
        \sum_{m \in \M{\lambda,\mu,\nu}} f(\nu,\mu,m) \cdot c(\mu,m)^2 = 1
        \label{thm_e:main_eq}
    \end{equation}
    for every $\mu \in  B(\lambda)$, $\nu \in  C(\lambda,\mu)$, $S_1 \in \Paths(\nu) $, $S_2 \in \Paths(\mu,\lambda)$. Let $c \defeq \mu \backslash \nu$ be the cell containing $p-1$, it is a corner cell of $\mu$, i.e. $c \in RC(\mu)$. Then \cref{thm_e:main_eq} is equivalent to
    \begin{align}
        \sum_{m \in AC(\mu)} \frac{d+\cont(m)}{\cont(m)-\cont(c)} \frac{ \prod_{v \in R C(\mu)} \of{\cont(m)-\cont(v)}}{\prod_{v \in A C(\mu) \setminus m} \of{\cont(m)-\cont(v)}} = 1
    \end{align}
    for every $\mu \in  B(\lambda)$ and $c \in RC(\mu)$. By rewriting the previous formula as
    \begin{align*}
        d \cdot \sum_{m \in AC(\mu)} \frac{ \prod_{w \in R C(\mu) \setminus c} \of{\cont(m)-\cont(w)}}{\prod_{v \in A C(\mu) \setminus m} \of{\cont(m)-\cont(v)} } + \sum_{m \in AC(\mu)} \cont(m) \frac{ \prod_{w \in R C(\mu) \setminus c} \of{\cont(m)-\cont(w)}}{\prod_{v \in A C(\mu) \setminus m} \of{\cont(m)-\cont(v)}} = 1,
    \end{align*}
    and using \cref{cor:ratios} we conclude that \cref{thm_e:main_eq} holds, finishing the proof of $\sigma_p \sigma_{p-1} \sigma_p = \sigma_p$. Similar proof also works for the relation (f) $\sigma_p \sigma_{p+1} \sigma_p = \sigma_p$ which we do not repeat here.

    (g) Finally, checking the relations in \cref{rel:contraction2} is the same in spirit as for (e), but more cumbersome. Let's first write the generators in terms of matrix units, specifically highlighting only the relevant ones:
    \begin{alignat}{2}
        \sigma_p &= \sum_{\substack{\mu \in B(\lambda) \\ (\nu_1,\nu_2) \in C(\lambda,\mu)}}  \sum_{ \substack{S_1 \in \Paths_{p-2}(\nu_1) \\ S_2 \in \Paths_{p+2}(\nu_2,\lambda) \\ m, m' \in \M{\lambda,\nu_2,\mu,\nu_1}}} c(\mu,m) c(\mu,m') \ketbra{S_1 \circ(\nu_1,\mu,m,\mu,\nu_2)\circ S_2}{S_1 \circ(\nu_1,\mu,m',\mu,\nu_2)\circ S_2} \label{thm:g:sigma_p}\\
        \sigma_{p-1} &= \sum_{\substack{\mu \in B(\lambda) \\ (\nu_1,\nu_2) \in C(\lambda,\mu)}}  \sum_{ \substack{S_1 \in \Paths_{p-2}(\nu_1) \\ S_2 \in \Paths_{p+2}(\nu_2,\lambda) \\ m \in \M{\lambda,\nu_2,\mu,\nu_1}}}
        \Big( f_{p-1}(\nu_1,\mu,m) \ketbra{S_1 \circ(\nu_1,\mu,m,\mu,\nu_2)\circ S_2}{S_1 \circ(\nu_1,\mu,m,\mu,\nu_2)\circ S_2} + \nonumber\\
        &\hspace{+2cm} + \sqrt{1-f_{p-1}^2(\nu_1,\mu,m)} \ketbra{S_1 \circ(\nu_1,\sigma_{p-1} \mu,m,\mu,\nu_2)\circ S_2}{S_1 \circ(\nu_1,\mu,m,\mu,\nu_2)\circ S_2}\Big) + \dotsc \label{thm:g:sigma_p-1}
        \\
        \sigma_{p+1} &= \sum_{\substack{\mu \in B(\lambda) \\ (\nu_1,\nu_2) \in C(\lambda,\mu)}}  \sum_{ \substack{S_1 \in \Paths_{p-2}(\nu_1) \\ S_2 \in \Paths_{p+2}(\nu_2,\lambda) \\ m \in \M{\lambda,\nu_2,\mu,\nu_1}}} \Big( f_{p+1}(m,\mu,\nu_2) \ketbra{S_1 \circ(\nu_1,\mu,m,\mu,\nu_2)\circ S_2}{S_1 \circ(\nu_1,\mu,m,\mu,\nu_2)\circ S_2} + \nonumber \\
        &\hspace{+2cm} + \sqrt{1-f_{p+1}^2(m,\mu,\nu_2)} \ketbra{S_1 \circ(\nu_1,\mu,m,\sigma_{p+1}\mu,\nu_2)\circ S_2}{S_1 \circ(\nu_1,\mu,m,\mu,\nu_2)\circ S_2}\Big) + \dotsc \label{thm:g:sigma_p+1}
    \end{alignat}
    where we use the following notation, similar to (e):
    \begin{align*}
        B(\lambda) &\defeq \set{\mu \in \Irr{\A_{p-1,0}^{d}} \, | \, \exists \, T \in \Paths(\lambda) : T^{p-1} = T^{p+1} = \mu},
        \\
        C(\lambda,\mu) &\defeq \set{(\nu_1,\nu_2) \in \Irr{\A_{p-2,0}^{d}} \times \Irr{\A_{p,2}^{d}}  \, | \, \exists \, T \in \Paths(\lambda) :  T^{p-2} = \nu_1, T^{p-1} = T^{p+1} = \mu, T^{p+2} = \nu_2} \\
        \M{\lambda,\nu_2,\mu,\nu_1} &\defeq \set{m \in \Irr{\A_{p,0}^{d}} \, | \, \exists \, T \in \Paths(\lambda) : T^{p-2} = \nu_1, T^{p-1} = T^{p+1} = \mu, T^{p+2} = \nu_2, T^{p} = m},\\
        c(\mu,m) &\defeq c(T) \text{ for arbitrary $T \in \Paths(\lambda)$ s.t. $T^{p-1}=T^{p+1}=\mu$, $T^{p}=m$}, \\
        f_{p-1}(\nu_1,\mu,m) &\defeq \frac{1}{r_{p-1}(T)} \text{ for arbitrary $T \in \Paths(\lambda) : T^{p-2}=\nu_1, T^{p-1}=\mu, T^{p}=m$}, \\
        f_{p+1}(\mu,m,\nu_2) &\defeq \frac{1}{r_{p+1}(T)} \text{ for arbitrary $T \in \Paths(\lambda) :  T^{p}=m, T^{p+1}=\mu, T^{p+2}=\nu_2$}.
    \end{align*}
    With this notation we deduce by direct multiplication that the condition (g) is equivalent to the following statement. Namely, for every $\mu \in B(\lambda), (\nu_1,\nu_2) \in C(\lambda,\mu), S_1 \in \Paths(\nu_1), S_2 \in \Paths(\nu_2), m_1,m_2 \in \M{\lambda,\nu_2,\mu,\nu_1}:$
    \begin{gather}
        c(\mu,m_1) c(\mu,m_2) \of*{f_{p-1}(\nu_1,\mu,m_2)-f_{p+1}(m_2,\mu,\nu_2)}  \of[\Bigg]{\sum_{m \in \M{\lambda,\nu_2,\mu,\nu_1}} c^2(\mu,m)f_{p-1}(\nu_1,\mu,m)f_{p+1}(m,\mu,\nu_2)} = 0,
        \label{thm:g:main1}
    \end{gather}
    \begin{gather}
        \delta_{\mu',\sigma_{p-1}\mu} \delta_{\mu',\sigma_{p+1}\mu}\delta_{\nu_1,\nu_2}c(\mu,m_1) c(\mu,m_2) \of*{f_{p-1}(\nu_1,\mu',m_2)-f_{p+1}(m_2,\mu',\nu_2)} \cdot \nonumber \\ \cdot \of[\Bigg]{\sum_{m \in \M{\lambda,\nu_2,\mu,\nu_1}} c(\mu,m)c(\mu',m) \sqrt{1-f_{p-1}^2(\nu_1,\mu,m)}\sqrt{1-f_{p+1}^2(m,\mu,\nu_2)}} = 0,
        \label{thm:g:main2}
    \end{gather}
    \begin{gather}
        \delta_{\mu',\sigma_{p-1}\mu} \delta_{\mu',\sigma_{p+1}\mu}\delta_{\nu_1,\nu_2}c(\mu,m_1) c(\mu,m_2) \of*{\sqrt{1-f_{p-1}^2(\nu_1,\mu',m_2)}-\sqrt{1-f_{p+1}^2(m_2,\mu',\nu_2)}} \cdot \nonumber \\ \cdot \of[\Bigg]{\sum_{m \in \M{\lambda,\nu_2,\mu,\nu_1}} c(\mu,m)c(\mu',m) \sqrt{1-f_{p-1}^2(\nu_1,\mu,m)}\sqrt{1-f_{p+1}^2(m,\mu,\nu_2)}} = 0.
        \label{thm:g:main3}
    \end{gather}
    \cref{thm:g:main2,thm:g:main3} hold because when $\nu_1=\nu_2$ then $f_{p-1}(\nu_1,\mu',m_2) = f_{p+1}(m_2,\mu',\nu_2)$. Moreover, when $\nu_1=\nu_2$ then $f_{p-1}^2(\nu_1,\mu,m_2) = f_{p+1}^2(m_2,\mu,\nu_2)$ and \cref{thm:g:main1} trivially holds. Finally, in the case $\nu_1 \neq \nu_2$ \cref{thm:g:main1} also holds because:
    \begin{equation}
        \sum_{m \in \M{\lambda,\nu_2,\mu,\nu_1}} c^2(\mu,m)f_{p-1}(\nu_1,\mu,m)f_{p+1}(m,\mu,\nu_2) = 0,
    \end{equation}
    which follows from \cref{lem:magich} by a similar technique which was used to show \cref{thm_e:main_eq} in \cref{cor:ratios}. Similar proof also works for the relation (h) which we do not repeat here.

    Finally, we need to show the irreducibility of our representation. Due to \cref{lem:JMaction} the spectrum of Jucys--Murphy elements in our basis coincides with the canonical definition of the action of Jucys--Murphy elements in the Gelfand--Tsetlin basis \cite{doty2019canonical,grinko2022linear}. Since Jucys--Murphy elements generate a maximal commutative subalgebra of $\A_{p,q}^d$ their action uniquely determines the basis. Therefore our basis coincides with the Gelfand--Tsetlin basis from \cite{doty2019canonical,grinko2022linear}, which is originally defined for irreducible representations.\footnote{Alternatively, one can use \Cref{prop:irreducibility} together with \cref{lem:JMaction} and the results from \cite{doty2019canonical,grinko2022linear}.}
\end{proof}

\begin{lemma}\label{lem:content_into_sym_u_dims}
    For every $\mu \pt p$ and $\lambda \pt p-1$, such that $\lambda \rightarrow \mu$ there holds
    \begin{equation}
        d + \cont(\mu \backslash \lambda) = p \cdot \frac{d_\lambda}{d_\mu} \cdot \frac{m_\mu}{m_\lambda}
    \end{equation}
    where $d_\lambda$ and $m_{\lambda}$ are dimensions of the symmetric and unitary groups irreducible representation $\lambda$.
\end{lemma}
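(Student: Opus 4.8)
The plan is to deduce the identity directly from the two classical product formulas already recorded in the preliminaries: the hook length formula \eqref{eq:hook length formula} for $d_\lambda = \abs{\SYT(\lambda)}$ and the hook-content formula \eqref{eq:hook content formula} for $m_\lambda = \abs{\SSYT(\lambda,d)}$. Write $c \defeq \mu \backslash \lambda$ for the unique cell of $\mu$ not belonging to $\lambda$, so that $\cont(\mu\backslash\lambda) = \cont(c)$, and set $P_\lambda \defeq \prod_{u\in\lambda} h(u)$ and $P_\mu \defeq \prod_{u\in\mu} h(u)$ for the two hook-length products.

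First I would record how the cell-products transform under adding one box. For contents this is immediate: $\mu$ and $\lambda$ share all cells except $c$, and the content of a cell depends only on its position, hence
\[
    \prod_{u\in\mu}\of[\big]{d+\cont(u)} = \of[\big]{d+\cont(c)}\prod_{u\in\lambda}\of[\big]{d+\cont(u)}.
\]
For hook lengths the change from $\lambda$ to $\mu$ is localized to the arm and leg of $c$ (and $h_\mu(c)=1$), but I will not need the precise description of those changes — only that $P_\mu/P_\lambda$ is a single well-defined factor that will appear, reciprocally, in both ratios below.

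Next I would assemble the ratios. From \eqref{eq:hook length formula} with $\abs{\lambda} = p-1$ and $\abs{\mu} = p$,
\[
    \frac{d_\lambda}{d_\mu} = \frac{(p-1)!\,/\,P_\lambda}{p!\,/\,P_\mu} = \frac{1}{p}\cdot\frac{P_\mu}{P_\lambda},
\]
while from \eqref{eq:hook content formula} combined with the content identity above,
\[
    \frac{m_\mu}{m_\lambda} = \of[\big]{d+\cont(c)}\cdot\frac{P_\lambda}{P_\mu}.
\]
Multiplying the two, the hook-length products cancel and I get $p\cdot\dfrac{d_\lambda}{d_\mu}\cdot\dfrac{m_\mu}{m_\lambda} = d+\cont(c) = d+\cont(\mu\backslash\lambda)$, as claimed.

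There is essentially no obstacle here: the entire argument is bookkeeping with the two standard formulas. The only point requiring a moment's care is to notice that the (otherwise unpleasant) ratio $P_\mu/P_\lambda$ of hook-length products enters the expressions for $d_\lambda/d_\mu$ and $m_\mu/m_\lambda$ as reciprocals of one another, so it drops out without ever having to be evaluated explicitly.
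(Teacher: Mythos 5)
Your proof is correct and follows essentially the same route as the paper's: both apply the hook length formula for $d_\lambda$ and the hook-content formula for $m_\lambda$, after which the hook-length products and the shared content factors cancel, leaving $d+\cont(\mu\backslash\lambda)$. Your write-up merely spells out the cancellation of $P_\mu/P_\lambda$ more explicitly than the paper does.
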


\begin{proof}
    Use the hook length formula \eqref{eq:hook length formula} for $d_\lambda$ and the hook-content formula \eqref{eq:hook content formula} for $m_{\lambda}$:
    \begin{equation}
        p \cdot \frac{d_\lambda}{d_\mu} \cdot \frac{m_{\mu,d}}{m_{\lambda}} = p \cdot \frac{\tfrac{(p-1)!}{\prod_{u \in \lambda} h(u)}}{\tfrac{p!}{\prod_{u \in \mu} h(u)}} \frac{\tfrac{\prod_{u \in \mu} (d + \cont(u))}{\prod_{u \in \mu} h(u)}}{\tfrac{\prod_{u \in \lambda} (d + \cont(u))}{\prod_{u \in \lambda} h(u)}} = d + \cont(\mu \backslash \lambda).
    \end{equation}
\end{proof}

\begin{lemma}[{\cite{kosuda2003new}}]\label{lem:ration_into_u_dims}
    For every $\mu \pt p$ and $\lambda \pt p-1$, such that $\lambda \rightarrow \mu$, we denote the added cell by $x \defeq \mu \backslash \lambda$. Then there holds
    \begin{equation}
        \frac{\prod_{a \in AC(\lambda) \setminus x} \of{\cont(x)-\cont(a)}}{ \prod_{c \in RC(\lambda)} \of{\cont(x)-\cont(c)}} = p \cdot \frac{d_\lambda}{d_\mu}
    \end{equation}
    where $d_\lambda$ is the dimension of the symmetric group irrep $\lambda$.
\end{lemma}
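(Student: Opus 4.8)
The plan is to strip away the representation theory first. By the hook length formula \eqref{eq:hook length formula}, $d_\lambda=|\lambda|!\big/\prod_{u\in\lambda}h_\lambda(u)$ with $|\lambda|=p-1$ and $d_\mu=|\mu|!\big/\prod_{u\in\mu}h_\mu(u)$ with $|\mu|=p$, so $p\,d_\lambda/d_\mu=\prod_{u\in\mu}h_\mu(u)\big/\prod_{u\in\lambda}h_\lambda(u)$, and the claim becomes the purely combinatorial identity
\[
  \frac{\prod_{a\in AC(\lambda)\setminus x}\of{\cont(x)-\cont(a)}}{\prod_{c\in RC(\lambda)}\of{\cont(x)-\cont(c)}}
  =\frac{\prod_{u\in\mu}h_\mu(u)}{\prod_{u\in\lambda}h_\lambda(u)},
\]
about a single Young diagram $\lambda$ and one of its addable corners $x=\mu\backslash\lambda$.

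Write $x=(r,s)$. Passing from $\lambda$ to $\mu$ changes only the hook lengths of the cells in the arm $(r,1),\dots,(r,s-1)$ and leg $(1,s),\dots,(r-1,s)$ of $x$, each of which increases by exactly $1$, while $h_\mu(x)=1$. Hence the right-hand side telescopes to $\prod_{j=1}^{s-1}\tfrac{h_\lambda(r,j)+1}{h_\lambda(r,j)}\cdot\prod_{i=1}^{r-1}\tfrac{h_\lambda(i,s)+1}{h_\lambda(i,s)}$. Using $\lambda_r=s-1$ and $\#\{i:\lambda_i\geq s\}=r-1$ one computes $h_\lambda(i,s)=(\lambda_i-i)-\cont(x)$ for $1\leq i\leq r-1$ and $h_\lambda(r,j)=\cont(x)-\big(j-\#\{i:\lambda_i\geq j\}\big)$ for $1\leq j\leq s-1$; in particular the leg hook lengths, and likewise the arm hook lengths, form sets of distinct positive integers.

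The core of the proof is to match the maximal blocks of consecutive integers inside the leg hook-length set $\{h_\lambda(i,s):1\leq i\leq r-1\}$ with the corners of $\lambda$ of content $>\cont(x)$. Since $h_\lambda(i,s)$ is strictly decreasing in $i$, the value $h_\lambda(i,s)$ is a block-minimum iff $\lambda_i>\lambda_{i+1}$, equivalently iff the end-of-row cell $(i,\lambda_i)$ is a removable corner, and then $h_\lambda(i,s)=\cont\big((i,\lambda_i)\big)-\cont(x)$; it is a block-maximum iff $i=1$ or $\lambda_{i-1}>\lambda_i$, equivalently iff $(i,\lambda_i+1)$ is an addable corner, and then $h_\lambda(i,s)+1=\cont\big((i,\lambda_i+1)\big)-\cont(x)$. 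A short boundary-path argument shows that these removable and addable corners are exactly those of content $>\cont(x)$. Combining this with the elementary telescoping $\prod_{h=\alpha}^{\beta}\tfrac{h+1}{h}=\tfrac{\beta+1}{\alpha}$ over each block $[\alpha,\beta]$ gives
\[
  \prod_{i=1}^{r-1}\frac{h_\lambda(i,s)+1}{h_\lambda(i,s)}
  =\frac{\prod_{a\in AC(\lambda),\,\cont(a)>\cont(x)}\of{\cont(a)-\cont(x)}}{\prod_{c\in RC(\lambda),\,\cont(c)>\cont(x)}\of{\cont(c)-\cont(x)}},
\]
and transposing $\lambda$ (which swaps arms with legs, negates contents, and preserves hook lengths) yields the same formula for the arm product with $>$ replaced by $<$. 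Multiplying the two: addable and removable corners alternate along the boundary of $\lambda$, beginning and ending with addable, so exactly as many addable as removable corners have content $>\cont(x)$; hence the signs incurred by rewriting $\cont(a)-\cont(x)=-\of{\cont(x)-\cont(a)}$ in numerator and denominator cancel, and since $x$ is the unique addable cell of content $\cont(x)$ while no removable cell has that content, the product collapses to the left-hand side of the displayed combinatorial identity, proving the lemma.

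The step I expect to be the real work is the boundary bookkeeping in the third paragraph: verifying that the block-endpoints of the arm and leg hook-length sets are in bijection with the addable/removable corners of $\lambda$ on the appropriate side of $x$, with none omitted or double-counted. This is elementary but delicate, and the various degenerate cases ($s=1$, $r=1$, $\lambda=\varnothing$) need a quick check. An alternative that replaces it by a routine rational-function computation is to encode $\lambda$ by its beta-set $\{\lambda_i-i+N\}$ for large $N$, use the determinantal dimension formula $d_\lambda=|\lambda|!\,\Delta(\beta)\big/\prod_i\beta_i!$, translate the contents of addable and removable corners into beta-set language (addable rows: $\beta_i+1\notin\{\beta_j\}$; removable rows: $\beta_i-1\notin\{\beta_j\}$), and check the resulting identity directly.
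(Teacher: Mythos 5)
Your argument is correct, but it is worth noting that the paper does not prove this lemma at all: it is imported as a known result with a citation to \cite{kosuda2003new}, so your proposal supplies a self-contained elementary proof where the paper relies on the literature. Your route checks out in detail. The hook-length reduction $p\,d_\lambda/d_\mu=\prod_{u\in\mu}h_\mu(u)/\prod_{u\in\lambda}h_\lambda(u)$ is right; adding $x=(r,s)$ increases by one exactly the hooks in the arm and leg of $x$ and contributes $h_\mu(x)=1$; the leg hooks are $h_\lambda(i,s)=(\lambda_i-i)-\cont(x)$, a strictly decreasing positive sequence whose block minima/maxima correspond precisely to the rows $i\leq r-1$ where $(i,\lambda_i)\in\RC(\lambda)$ resp.\ $(i,\lambda_i+1)\in\AC(\lambda)$, and these are exactly the corners of content $>\cont(x)$ (rows $i\leq r-1$ have $\lambda_i\geq s$, rows $i\geq r$ have $\lambda_i\leq s-1$, which settles the deferred boundary bookkeeping); block telescoping then gives the ratio of corner-content products on the $>\cont(x)$ side, transposition gives the $<\cont(x)$ side, and since the contents of addable and removable cells strictly interlace with addable cells at both extremes, the counts of addable and removable corners with content above $\cont(x)$ agree (both equal the position of $x$ in the interlacing minus one), so the sign flips cancel and no corner of content exactly $\cont(x)$ other than $x$ itself exists. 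The degenerate cases $r=1$, $s=1$, $\lambda=\0$ reduce to empty products on both sides. Your suggested beta-set alternative would also work, but the hook-increment argument as written is already complete once the corner matching is spelled out, and it is essentially the standard proof of this interpolation identity for Jucys--Murphy-type arguments.
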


\begin{corollary}
    For every $T$ such that $\M{T} \neq \0$ there holds $\norm{\ket{v_T}}^2_2 = d$.
    \label{lem:v_Tnorm}
\end{corollary}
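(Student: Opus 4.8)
The plan is to reduce the statement to the one-box Pieri (branching) rule for $\U{d}$. First I would expand the squared norm directly from the definition $\ket{v_T} = \sum_{T' \in \M{T}} c(T') \ket{T'}$: since the vectors $\ket{T'}$ with $T' \in \M{T}$ are orthonormal Gelfand--Tsetlin basis vectors,
\begin{equation*}
    \norm{\ket{v_T}}^2_2 = \sum_{T' \in \M{T}} c(T')^2 .
\end{equation*}
The next step is to rewrite each $c(T')^2$ in terms of $\U{d}$ irrep dimensions. Writing $\lambda \defeq T^{p-1}$ and letting $a \defeq a_{T'}$ denote the mobile cell of $T'$ (so that $\lambda \cup a = (T')^p$), the explicit formula for $c$ in \cref{thm:main} reads
\begin{equation*}
    c(T')^2 = \of[\big]{d + \cont(a)} \frac{\prod_{c \in \RC(\lambda)} \of{\cont(a) - \cont(c)}}{\prod_{a' \in \AC(\lambda) \setminus a} \of{\cont(a) - \cont(a')}}.
\end{equation*}
Applying \cref{lem:ration_into_u_dims} to the ratio of products and then \cref{lem:content_into_sym_u_dims} to the factor $d + \cont(a) = d + \cont\of[\big]{(T')^p \backslash \lambda}$, the symmetric-group dimensions $d_\lambda$ and $d_{(T')^p}$ cancel and one is left with $c(T')^2 = m_{(T')^p} / m_{\lambda}$ — this is precisely the combinatorial identity recorded in the Remark following \cref{thm:main}.

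It then remains to observe that, as $T'$ ranges over $\M{T}$, the top vertex $(T')^p$ ranges exactly over the mixed Young diagrams obtained from $T^{p-1}$ by adding one box to the left component while keeping the total length at most $d$, i.e.\ over all $\mu$ with $\lambda \to \mu$ in the Bratteli diagram. Hence
\begin{equation*}
    \norm{\ket{v_T}}^2_2 = \frac{1}{m_\lambda} \sum_{\mu \,:\, \lambda \to \mu} m_\mu = \frac{1}{m_\lambda} \cdot d \, m_\lambda = d ,
\end{equation*}
where the middle equality is the $\U{d}$ branching rule $\C^d \otimes V_\lambda^{\U{d}} \cong \bigoplus_{\mu : \lambda \to \mu} V_\mu^{\U{d}}$ (the one-box case of Pieri's rule); note that adding a box that would create a $(d+1)$-st row produces a diagram with $m_\mu = 0$ by \cref{eq:hook content formula}, so such terms are harmless and the two ranges of summation agree.

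I expect the main obstacle to be purely bookkeeping: one must check carefully that a path $T' \in \M{T}$ is uniquely determined by its mobile cell $a_{T'}$ (so the sum is genuinely over addable boxes of $\lambda$), that the length restriction built into $\M{T}$ matches the condition $m_\mu \neq 0$, and that the two invocations of \cref{lem:content_into_sym_u_dims,lem:ration_into_u_dims} use consistent size parameters (partitions of sizes $\abs{(T')^p_l} - 1$ and $\abs{(T')^p_l}$, with $(T^{p-1})_r = \0$ since $p-1 \leq p$). Once that is settled the computation is immediate. As an alternative, fully self-contained route, one can substitute \cref{eq:hook content formula} and reduce to the polynomial identity $\sum_{a \in \AC(\lambda)} \of[\big]{d + \cont(a)} \prod_{c \in \RC(\lambda)} \of{\cont(a) - \cont(c)} \big/ \prod_{a' \neq a} \of{\cont(a) - \cont(a')} = d$, then apply Lagrange interpolation: the part carrying the factor $d$ telescopes to $1$, and the remaining $\cont(a)$-weighted sum equals $\sum_{a \in \AC(\lambda)} \cont(a) - \sum_{c \in \RC(\lambda)} \cont(c)$, which vanishes because addable and removable corner contents interlace and pair off along the rim of $\lambda$.
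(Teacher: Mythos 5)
Your proposal is correct and follows essentially the same route as the paper's proof: expand the norm as $\sum_{T'\in\M{T}} c(T')^2$, use \cref{lem:content_into_sym_u_dims,lem:ration_into_u_dims} to rewrite each term as $m_{(T')^{p}}/m_{T^{p-1}}$, and conclude with the one-box Pieri rule for $\U{d}$. Your extra bookkeeping remarks and the alternative Lagrange-interpolation route are fine but not needed; the paper's argument is exactly your main computation.
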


\begin{proof}
    Denote $\lambda \defeq T_l^{{p-1}}$ and use \cref{lem:content_into_sym_u_dims,lem:ration_into_u_dims}:
    \begin{equation}
        \norm{\ket{v_T}}^2_2 = \sum_{S \in \M{T}} c(S)^2 = \sum_{\substack{\mu \in \Irr{\A_{p,0}^{d}} \\ \mu : \lambda \rightarrow \mu}} \frac{m_\mu}{ m_\lambda} = d,
    \end{equation}
    where the last equality is due to Pieri's rule for the irreducible representations of the unitary group.
\end{proof}

\begin{lemma}[{\cite{waring1779vii,louck1970canonical,chen1996interpolation}}] \label{lem:magich}
    For each integer $k \geq 0$, we have
    \begin{equation}
        \sum_{i=1}^n \frac{x_i^k}{\prod_{i \neq j} (x_i - x_j)} = h_{k-n+1}(x_1,\dotsc,x_n),
    \end{equation}
    where $h_r$ is complete $r$-homogeneous symmetric polynomial, which is defined as $h_r = 0$ for $r<0$ and $h_0 = 1$.
\end{lemma}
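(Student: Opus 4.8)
The plan is to prove \cref{lem:magich} by a generating‑function argument based on partial fractions. Write $S_k \defeq \sum_{i=1}^n \frac{x_i^k}{\prod_{j \neq i}(x_i - x_j)}$ for the left‑hand side, where the $x_i$ are taken pairwise distinct so that all denominators are nonzero. First I would assemble the ordinary generating function $\sum_{k \geq 0} S_k\, t^k$ and interchange the two summations (valid as a formal power series identity in $t$, or on a small disc around $t = 0$), obtaining $\sum_{k \geq 0} S_k\, t^k = \sum_{i=1}^n \frac{1}{\prod_{j \neq i}(x_i - x_j)}\cdot\frac{1}{1 - x_i t}$.

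The key step is to recognize the right‑hand side as the partial fraction decomposition, in the variable $t$, of the rational function $g(t) \defeq \frac{t^{\,n-1}}{\prod_{i=1}^n (1 - x_i t)}$. Since the numerator has degree $n-1$, strictly less than the degree $n$ of the denominator, and the poles $t = 1/x_i$ are simple, we may write $g(t) = \sum_{i=1}^n \frac{A_i}{1 - x_i t}$ with $A_i = \lim_{t \to 1/x_i}(1 - x_i t)\, g(t)$. Evaluating this residue, the numerator contributes $x_i^{-(n-1)}$ while the surviving factors of the denominator contribute $\prod_{j \neq i}(1 - x_j/x_i) = x_i^{-(n-1)} \prod_{j\neq i}(x_i - x_j)$, so $A_i = 1/\prod_{j \neq i}(x_i - x_j)$, which is exactly the coefficient appearing above. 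Hence $\sum_{k \geq 0} S_k\, t^k = g(t)$.

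Finally I would expand $\frac{1}{\prod_{i=1}^n(1 - x_i t)} = \sum_{m \geq 0} h_m(x_1,\dotsc,x_n)\, t^m$, the standard generating identity for complete homogeneous symmetric polynomials, so that $g(t) = \sum_{m \geq 0} h_m(x)\, t^{\,m+n-1} = \sum_{k \geq n-1} h_{k-n+1}(x)\, t^k$. Comparing coefficients of $t^k$ gives $S_k = h_{k-n+1}(x_1,\dotsc,x_n)$ for every $k \geq 0$: for $0 \leq k \leq n-2$ both sides vanish under the convention $h_r = 0$ for $r < 0$, and $h_0 = 1$ matches $S_{n-1} = 1$.

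I do not expect a genuine obstacle here; the only points requiring care are the strict inequality of degrees, which guarantees the partial fraction expansion has no polynomial part, and the bookkeeping of the negative‑index convention for $h_r$. An equivalent route would phrase the computation via divided differences—observing that $S_k$ is the $n$‑th divided difference of the monomial $x \mapsto x^k$ and invoking the classical evaluation of divided differences of powers—but the generating‑function argument above is self‑contained, so I would prefer it.
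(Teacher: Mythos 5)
Your argument is correct. Note first that the paper itself offers no proof of \cref{lem:magich}: it is quoted as a classical identity with references, and is then only \emph{used} (in \cref{cor:ratios} and in checking relation (g)). So there is no internal proof to compare against; what you have written is a self-contained justification of the cited fact, and it is sound. The partial-fraction computation of the residues $A_i = 1/\prod_{j \neq i}(x_i - x_j)$ is right, the degree count (numerator of degree $n-1$ against denominator of degree $n$) correctly rules out a polynomial part, and matching coefficients of $t^k$ against the generating function $\prod_{i=1}^n (1-x_i t)^{-1} = \sum_{m \geq 0} h_m(x)\,t^m$ gives exactly $S_k = h_{k-n+1}$, including the vanishing for $0 \leq k \leq n-2$ under the convention $h_r = 0$ for $r < 0$. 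Two small points of hygiene, neither a genuine gap: (i) the partial-fraction step tacitly assumes every $x_i \neq 0$ (the poles are at $t = 1/x_i$, and the denominator degree drops if some $x_i = 0$); since both sides of the identity are rational functions of $(x_1,\dotsc,x_n)$ that are regular wherever the $x_i$ are pairwise distinct, the case with a vanishing variable follows by continuity (or one can avoid the issue entirely by working with formal power series in $t$ throughout, where $\sum_k x_i^k t^k = (1-x_i t)^{-1}$ needs no analytic hypothesis). (ii) The statement's product $\prod_{i \neq j}(x_i - x_j)$ should be read as $\prod_{j \neq i}(x_i - x_j)$, which is how you interpret it and how the paper uses it in \cref{cor:ratios}. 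Your alternative remark is also accurate: $S_k$ is the $n$-point divided difference of $x \mapsto x^k$, and the identity is the classical evaluation of such divided differences, which is essentially the content of the references the paper cites; the generating-function route you chose simply makes that evaluation self-contained.
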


\begin{corollary}
    For every Young diagram $\lambda$ and arbitrary $c \in RC(\lambda)$ the following holds:
    \begin{align*}
        \sum_{m \in AC(\lambda)} \frac{ \prod_{w \in RC(\lambda) \setminus c} \of{\cont(m)-\cont(w)}}{\prod_{v \in AC(\lambda) \setminus m} \of{\cont(m)-\cont(v)}} = 0, \quad
        \sum_{m \in AC(\lambda)} \cont(m) \frac{ \prod_{w \in RC(\lambda) \setminus c} \of{\cont(m)-\cont(w)}}{\prod_{v \in AC(\lambda) \setminus m} \of{\cont(m)-\cont(v)}} = 1
    \end{align*}
    \label{cor:ratios}
\end{corollary}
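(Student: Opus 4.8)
The plan is to recognize the two ratios as one-variable Lagrange-interpolation (divided-difference) sums and then invoke \cref{lem:magich}. First I would record the standard combinatorial fact that the addable and removable cells of a Young diagram interlace in content: writing $n \defeq \abs{\AC(\lambda)}$, one has $\abs{\RC(\lambda)} = n-1$, and if the addable contents are $a_0 > a_1 > \dots > a_{n-1}$ and the removable contents are $b_1 > \dots > b_{n-1}$, then $a_0 > b_1 > a_1 > \dots > b_{n-1} > a_{n-1}$. In particular the numbers $\cont(m)$, $m \in \AC(\lambda)$, are pairwise distinct, so all denominators $\prod_{v \in \AC(\lambda) \setminus m}\of*{\cont(m)-\cont(v)}$ are nonzero and the sums are well defined.

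Next I would extend \cref{lem:magich} by linearity from monomials to polynomials: for pairwise distinct $x_1, \dots, x_n$ and any polynomial $P$ of degree at most $n-1$,
\begin{equation*}
    \sum_{i=1}^{n} \frac{P(x_i)}{\prod_{j \neq i}\of*{x_i - x_j}}
    = \text{(coefficient of $x^{n-1}$ in $P$)},
\end{equation*}
since a monomial $x^k$ with $k < n-1$ contributes $h_{k-n+1} = 0$ while $x^{n-1}$ contributes $h_0 = 1$. I would then apply this with $\set{x_1, \dots, x_n}$ the set of contents of the cells in $\AC(\lambda)$.

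For the first identity, take $P(x) \defeq \prod_{w \in \RC(\lambda) \setminus c}\of*{x - \cont(w)}$, a monic polynomial of degree $\abs{\RC(\lambda)} - 1 = n - 2 < n - 1$; hence the coefficient of $x^{n-1}$ vanishes and the sum is $0$. For the second identity the summand carries an extra factor $\cont(m)$, so the relevant polynomial is $x\,P(x)$, which is monic of degree exactly $n - 1$, and the sum therefore equals $1$. The only mild subtlety is the degenerate case $\RC(\lambda) = \set{c}$, where $P$ is the empty product $1$ and $n = 2$, but this is handled uniformly by the same degree count. I do not anticipate any real obstacle: once the interlacing count $\abs{\AC(\lambda)} = \abs{\RC(\lambda)} + 1$ is noted, both identities are an immediate consequence of \cref{lem:magich}, and combining them (multiplying the first by $d$ and adding the second) is exactly what is needed to establish \cref{thm_e:main_eq} in the proof of \cref{thm:main}.
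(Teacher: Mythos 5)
Your proposal is correct and follows essentially the same route as the paper: the paper likewise notes that the numerator has degree $\abs{\AC(\lambda)}-2$ (resp.\ $\abs{\AC(\lambda)}-1$ with the extra factor $\cont(m)$) as a polynomial in $\cont(m)$ and applies \cref{lem:magich} with $n=\abs{\AC(\lambda)}$ and variables $\set{\cont(m) : m \in \AC(\lambda)}$. Your additional remarks (the interlacing/distinctness of contents and the count $\abs{\AC(\lambda)}=\abs{\RC(\lambda)}+1$) only make explicit what the paper leaves implicit.
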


\begin{proof}
    Note that the degree of the numerator as polynomial in $\cont(m)$ is $\abs{AC(\mu)}-2$ in the first case and $\abs{AC(\mu)}-1$ in the second case. Now just apply \cref{lem:magich} with $n = \abs{AC(\mu)}$ and variables being $\set{ \cont(m) \, | \, m \in AC(\mu)}$.
\end{proof}

\begin{proposition}\label{prop:irreducibility}
    If there is a basis $B$ for the representation $\rho$ of the finite-dimensional associative semisimple algebra $\A$ such that:
    \begin{enumerate}
        \item $\forall \, \ket{S},\ket{T} \in B$ $\exists \, b \in \A : \rho(b) \ket{T} = \ket{S} + \ket{v}$, where $\ket{v} \in \spn \set{B \setminus \ket{S}}$,
        \item $\forall \, \ket{S},\ket{T} \in B \,\, \exists \, a_S, a_T \in \A$ such that $ \rho(a_S) = \ketbra{S}{S},\, \rho(a_T) = \ketbra{T}{T} \in \End(V)$,
    \end{enumerate}
    then $\rho(a_S b a_T) = \ketbra{S}{T}$ and, consecutively, this representation is irreducible.
\end{proposition}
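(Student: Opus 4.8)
The plan is to prove the explicit identity $\rho(a_S b a_T) = \ketbra{S}{T}$ for the ``matrix unit'' attached to the basis $B$, and then to deduce irreducibility from the fact that these matrix units span $\End(V)$, where $V$ is the representation space. So the argument has two stages: first a one-line-per-basis-vector computation showing every $\ketbra{S}{T}$ lies in $\rho(\A)$, then a soft argument that surjectivity onto $\End(V)$ forces irreducibility.

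For the identity, I would fix $\ket{S},\ket{T}\in B$. By hypothesis~(2) there are $a_S,a_T\in\A$ with $\rho(a_S)=\ketbra{S}{S}$ and $\rho(a_T)=\ketbra{T}{T}$, where $\ketbra{S}{S}$ is read as the idempotent projecting onto $\C\ket{S}$ along $\spn\set{B\setminus\ket{S}}$ (this is the orthogonal projection when $B$ is orthonormal, as in the application to \cref{thm:main}). By hypothesis~(1) there is $b\in\A$ with $\rho(b)\ket{T}=\ket{S}+\ket{v}$ and $\ket{v}\in\spn\set{B\setminus\ket{S}}$, i.e.\ $\ket{v}\in\ker\rho(a_S)$. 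Since $\rho$ is an algebra homomorphism, $\rho(a_S b a_T)=\rho(a_S)\rho(b)\rho(a_T)$, and I would evaluate this operator on an arbitrary $\ket{T'}\in B$: if $T'\neq T$ then $\rho(a_T)\ket{T'}=0$, while $\rho(a_T)\ket{T}=\ket{T}$, $\rho(b)\ket{T}=\ket{S}+\ket{v}$, and $\rho(a_S)(\ket{S}+\ket{v})=\ket{S}$ because $\ket{v}\in\ker\rho(a_S)$. Hence $\rho(a_S b a_T)$ sends $\ket{T}\mapsto\ket{S}$ and every other basis vector to $0$, i.e.\ $\rho(a_S b a_T)=\ketbra{S}{T}$.

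Finally, as $S,T$ range over $B$ the operators $\ketbra{S}{T}$ form a basis of $\End(V)$, so $\rho(\A)=\End(V)$; irreducibility is then immediate: if $W\subseteq V$ is a nonzero $\A$-invariant subspace and $0\neq w=\sum_{S\in B}c_S\ket{S}\in W$ with $c_{S_0}\neq 0$, then $\rho(a_{S_0})w=c_{S_0}\ket{S_0}\in W$, so $\ket{S_0}\in W$, and $\ketbra{T}{S_0}\ket{S_0}=\ket{T}\in W$ for every $T\in B$, whence $W=V$. The whole computation is routine; the only point that needs a moment's care is interpreting $\ketbra{S}{S}$ as the projection with kernel $\spn\set{B\setminus\ket{S}}$, since that is exactly what makes $\rho(a_S)$ discard the error term $\ket{v}$ produced by hypothesis~(1). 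I do not expect a genuine obstacle here: this is a packaging lemma whose content is ``matrix units reachable $\Rightarrow$ representation is $\End(V)$'', and in \cref{thm:main} it is applied with $b$ assembled from the generators $\sigma_i$ and $a_S,a_T$ assembled from Jucys--Murphy elements. Note that semisimplicity of $\A$ is not actually used anywhere in this argument.
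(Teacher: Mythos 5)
Your proposal is correct and follows essentially the same route as the paper: the same evaluation of $\rho(a_S b a_T)$ on basis vectors yielding $\delta_{T,T'}\ket{S}$, followed by the observation that the matrix units span $\End(V)$ so no proper invariant subspace can exist (the paper simply cites this as Burnside's theorem, while you spell out the invariant-subspace argument). Your reading of $\ketbra{S}{S}$ as the projection annihilating $\spn\set{B\setminus\ket{S}}$, and your remark that semisimplicity is never used, are both consistent with the paper's (implicit) usage.
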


\begin{proof}
    The conclusion $\rho(a_S b a_T) = \ketbra{S}{T}$ is trivial, since for every $\ket{T'} \in B$:
    \begin{equation}
        \rho(a_S b  a_T) \ket{T'} =  \rho(a_S) \rho(b) \rho(a_T) \ket{T'} = \ketbra{S}{S} \rho(b) \ket{T} \braket{T}{T'} = \delta_{T,T'} \ketbra{S}{S} \of{\ket{S} + \ket{v}} = \delta_{T,T'} \ket{S}.
    \end{equation}
    Then the representation must be irreducible because there are no invariant subspaces. This result is also known as Burnside’s theorem on matrix algebras.
\end{proof}

\begin{lemma}\label{lem:JMaction}
    The action of Jucys--Murphy elements is diagonal in the Gelfand--Tstelin basis and the spectrum is given by the walled content $\wcont_k(T)$:
    \begin{equation}
        J_k \ket{T} = \wcont_k(T) \ket{T}
    \end{equation}
\end{lemma}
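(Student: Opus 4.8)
\textbf{Plan for proving \cref{lem:JMaction}.}
The statement is that each Jucys--Murphy element $J_k$ acts diagonally in the Gelfand--Tsetlin basis with eigenvalue $\wcont_k(T)$ on $\ket{T}$. The natural strategy is to use the explicit action of the generators from \cref{thm:main} together with the recursive (subalgebra-adapted) structure of the basis, and to induct on $k$. First I would recall from \cref{eq:JM for B} that $J_k$ is a sum of transpositions (and contractions, when $k>p$) involving only the elements $1,\dotsc,k$, hence $J_k \in \A_k^d$, and that the Gelfand--Tsetlin basis is adapted to the chain \eqref{eq:A inclusions simplified}. This means $\ket{T}$, restricted to $\A_k^d$, lives in the irrep labelled by $T^k$, and the action of $J_k$ only ``sees'' the truncated path $(T^0,\dotsc,T^k)$; so it suffices to compute $J_k\ket{T}$ assuming $k = p+q$ and $T^{p+q}=\lambda$, i.e. to show $J_{p+q}\ket{T} = \wcont_{p+q}(T)\ket{T}$ for all $T\in\Paths(\lambda)$, and then quote this for general $k$ by restriction.

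The heart of the argument splits into the two cases in \eqref{eq:JM for B}. \textbf{Case $k\le p$:} here $J_k = \sum_{i<k}(i,k)$ is an element of the symmetric group algebra $\CS_k \cong \A_{k,0}^d$, and the claim $J_k\ket{T} = \cont_k(T)\ket{T} = \wcont_k(T)\ket{T}$ is exactly the classical fact that Jucys--Murphy elements are diagonal in the Young--Yamanouchi basis with eigenvalue the content of the box containing $k$. I would prove this by the standard induction: using $(i,k) = \sigma_{k-1}\cdots\sigma_{i+1}\sigma_i\sigma_{i+1}\cdots\sigma_{k-1}$ and \cref{gtbasis:transpositions}, or more slickly via the recursion $J_k = \sigma_{k-1} J_{k-1} \sigma_{k-1} + \sigma_{k-1}$ combined with the known eigenvalue of $J_{k-1}$ and the $2\times 2$ block form of $\sigma_{k-1}$ on the span of $\ket{T}$ and $\ket{\sigma_{k-1}T}$; the off-diagonal terms cancel precisely because $r_{k-1}(T) = \cont_k(T)-\cont_{k-1}(T)$. \textbf{Case $k>p$:} now $J_k = \sum_{p<i<k}(i,k) - \sum_{i\le p}\overline{(i,k)} + d$. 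I expect the cleanest route is again a recursion reducing $J_k$ in $\A_{p,k-p}^d$ to $J_{k-1}$ in $\A_{p,k-p-1}^d$ by conjugating by $\sigma_{k-1}$ (a transposition, since $k-1>p$): one checks $J_k = \sigma_{k-1}J_{k-1}\sigma_{k-1} + \sigma_{k-1} - \text{(correction from the contraction terms)}$, or handles the base case $k=p+1$ directly, where $J_{p+1} = d - \sum_{i\le p}\overline{(i,p+1)}$ and $\sum_i \overline{(i,p+1)}$ is expressible in terms of $\sigma_p$ conjugated by permutations. In either sub-case one then reads off the eigenvalue from the three-way definition of $\wcont_k(T)$: $+d$ plus the content of the added right-box when $T^k_l = T^{k-1}_l$, and $-\cont$ of the removed left-box otherwise, with the off-diagonal contributions again cancelling by the identity $r_{k-1}(T) = \wcont_k(T) - \wcont_{k-1}(T)$ built into \cref{eq:ri} and the fact that $\sigma_p$ on the relevant invariant subspace is the rank-one $\ketbra{v_T}{v_T}$ with $\norm{\ket{v_T}}^2 = d$ (\cref{lem:v_Tnorm}).

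The main obstacle will be the contraction case $k>p$: unlike the permutation case there is no off-the-shelf ``Jucys--Murphy acts by content'' statement, and one must carefully bookkeep the barred transpositions $\overline{(i,k)}$, which are genuinely contractions and hence act by the non-invertible $\sigma_p$-type operator, not by a reflection. The coefficient $c(T)$ and the set $\M{T}$ enter here, and verifying that $\sum_{i\le p}\overline{(i,k)}\ket{T}$ contributes exactly $(d - \wcont_k(T))\ket{T}$ — in particular that all cross-terms between distinct paths in $\M{T}$ cancel — is the computation that needs genuine care; this is essentially the same algebraic identity (via \cref{cor:ratios} / \cref{lem:magich}) that powers the verification of relation (e) in the proof of \cref{thm:main}, so I would reuse that machinery. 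A cleaner alternative, which I would pursue in parallel, is to avoid direct computation entirely: the Gelfand--Tsetlin basis of \cite{doty2019canonical, grinko2022linear} is \emph{defined} by diagonalising the $J_k$, and since \cref{thm:main} already establishes (via relations and \cref{prop:irreducibility}) that our basis is that basis, the eigenvalues must be the known spectrum; the content computation in \cite{grinko2022linear} then identifies that spectrum with $\wcont_k(T)$. I would present the direct recursive proof as the main line and mention this second route as a remark.
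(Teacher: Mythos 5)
Your skeleton matches the paper's proof: for $k\le p$ the paper likewise invokes the classical symmetric-group fact and proves it by the recursion $J_{k+1}=\sigma_k J_k\sigma_k+\sigma_k$ using the $2\times 2$ block form of $\sigma_k$, and for $k\ge p+2$ it uses exactly the same recursion with no correction term (conjugating $\overline{(i,k-1)}$ by $\sigma_{k-1}$ gives $\overline{(i,k)}$, so your hedge about a ``correction from the contraction terms'' is unnecessary), reducing everything to the base case $k=p+1$.

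The divergence, and the place where your plan is still incomplete, is precisely that base case. You propose to compute $\rho\ket{T}=\sum_{i\le p}\overline{(i,p+1)}\ket{T}$ directly and to verify by hand that the off-diagonal contributions cancel and the diagonal one equals $d-\wcont_{p+1}(T)$, hoping that \cref{cor:ratios} / \cref{lem:magich} suffice. This is not obviously workable as stated: each $\overline{(i,p+1)}=(i,p)\sigma_p(i,p)$ involves conjugation by a long permutation whose Gelfand--Tsetlin matrix elements are complicated products, so a term-by-term cancellation argument is considerably harder than the relation-(e) check. The paper avoids this entirely with two ideas you do not use: (i) $J_{p+1}$ commutes with $\A^d_{p,0}$, and since the branching is multiplicity-free this already forces $J_{p+1}$ to be diagonal in the Gelfand--Tsetlin basis with eigenvalue depending only on the vertex $T^{p}=\mu$ (no cross-term analysis needed); (ii) the eigenvalue is then obtained by averaging $\bra{T}\rho\ket{T}$ over all paths through $\mu$, which, using unitarity of $(i,p)$ and resolutions of the identity, collapses to the dimension identity $p\,\tfrac{d_\lambda}{d_\mu}\tfrac{m_\mu}{m_\lambda}=d+\cont(\mu\backslash\lambda)$ of \cref{lem:content_into_sym_u_dims,lem:ration_into_u_dims}. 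You should either adopt this commutant-plus-averaging argument or actually carry out the cross-term cancellation you assert. Finally, your ``cleaner alternative'' is circular in the paper's logical order: the identification of this basis with the canonical Gelfand--Tsetlin basis of the cited works is established \emph{using} \cref{lem:JMaction} (it is how irreducibility in \cref{thm:main} is concluded), so you cannot invoke that identification to deduce the Jucys--Murphy spectrum.
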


\begin{proof}
    For $k \leq p$ this is just a statement of the same result for the symmetric group, e.g. see \cite{OV1996,VO2005}. The proof for that case can be done by induction by exploiting the relation $J_{k+1} = \sigma_k J_k \sigma_k + \sigma_k$. The base of the induction is trivial. Now using the induction step we can immediately see that
    \begin{align}
        J_{k+1}\ket{T} &= \of*{\sigma_k J_k \sigma_k + \sigma_k} \ket{T} =  \of*{\sigma_k J_k  + 1} \sigma_k \ket{T} = \of*{\sigma_k J_k  + 1} \of*{\tfrac{1}{r_k(T)}\ket{T}+\sqrt{1-\tfrac{1}{r_k(T)^2}}\ket{\sigma_k T}} \nonumber \\
        &= \tfrac{1}{r_k(T)}\ket{T}+\sqrt{1-\tfrac{1}{r_k(T)^2}}\ket{\sigma_k T} + \sigma_k \of*{ \tfrac{\wcont_k(T)}{r_k(T)} \ket{T} + \wcont_k(\sigma_k T)\sqrt{1-\tfrac{1}{r_k(T)^2}} \ket{\sigma_k T} } \nonumber \\
        &= \of*{\tfrac{\wcont_k(T)}{r_k(T)}+1} \of*{\tfrac{\ket{T}}{r_k(T)}+\sqrt{1-\tfrac{1}{r_k(T)^2}}\ket{\sigma_k T}} + \wcont_k(\sigma_k T)\sqrt{1-\tfrac{1}{r_k(T)^2}}\of*{-\tfrac{\ket{\sigma_k T}}{r_k(T)}+\sqrt{1-\tfrac{1}{r_k(T)^2}}\ket{T}} \nonumber\\
        &= \of*{\tfrac{1}{r_k(T)} + \tfrac{\wcont_k(T)-\wcont_{k+1}(T)}{r_k(T)^2} + \wcont_{k+1}} \ket{T} +\sqrt{1-\tfrac{1}{r_k(T)^2}} \of*{1+\tfrac{\wcont_k(T)-\wcont_{k+1}(T)}{r_k(T)}} \ket{\sigma_k T} \nonumber\\
        &= \wcont_k(T) \ket{T},
    \end{align}
    where we used that $\wcont_k(\sigma_k T) = \wcont_{k+1}(T), r_k(\sigma_k T)=-r_k(T)$ and $r_k(T) = \wcont_{k+1}(T)-\wcont_{k}(T)$.

    Similarly, for $k \geq p+1$ we have a similar relation $J_{k+1} = \sigma_k J_k \sigma_k + \sigma_k$ and the same argument holds, assuming that $J_{p+1}\ket{T} = \wcont_{p+1}(T) \ket{T}$. However, for $k = p+1$ we need to prove the claim separately.

    To show the claim for $k = p+1$, recall that $J_{p+1} = d - \rho$, where $\rho \defeq \sum_{i=1}^p (i,p)\sigma_p (i,p)$ and $(i,p)$ is a transposition between sites $i$ and $p$ with the convention $(p,p) \defeq 1$. Without loss of generality assume that $q=1$. Note that since $J_{p+1}$ commutes with $\A_{p,0}^d$, $J_{p+1}$ must be diagonal in the Gelfand--Tsetlin basis and it has the same eigenvalues for every path $T$ which goes through a given vertex $\mu$ at the level $p$ in the Bratteli diagram. Therefore, we now assume $T \in \Paths(\lambda)$ has the property $T^{p} = \mu$. There are two cases:
        \begin{enumerate}
            \item If $\lambda_r = (1)$, then there is no mobile cell in $T$ and the action of $\sigma_p$ is zero, meaning that $J_{p+1} \ket{T} = d \ket{T}$ which is consistent with $J_{p+1} \ket{T} = \wcont_{p+1}(T) \ket{T}$ for that case.
            \item If $\lambda_r = \0$, then for every $T \in \Paths(\lambda)$ with $T^{p} = \mu$ we can write:
            \begin{align}
                \bra{T} \rho \ket{T} &= \frac{1}{d_\mu} \sum_{\substack{T \in \Paths(\lambda) \\ T^{p} = \mu}} \bra{T} \rho \ket{T} = \frac{1}{d_\mu} \sum_{\substack{T \in \Paths(\lambda) \\ T^{p} = \mu}} \sum_{i=1}^p \bra{T} (i,p) \sigma_p (i,p) \ket{T} \nonumber \\
                &= \frac{1}{d_\mu} \sum_{i=1}^p \sum_{\substack{T \in \Paths(\lambda) \\ T^{p} = \mu}} \bra{T} (i,p) \of[\bigg]{\sum_{\substack{S \in \Paths(\lambda) \\ S^{p} = \mu \\ S^{p-1} = \lambda}} \ketbra{v_S}{v_S}} (i,p) \ket{T}
                = \frac{1}{d_\mu} \sum_{i=1}^p \sum_{\substack{T,S \in \Paths(\lambda) \\ T^{p} = S^{p} = \mu \\ S^{p-1} = \lambda}} \abs{\bra{v_S} (i,p) \ket{T}}^2 \nonumber \\
                &= \frac{1}{d_\mu} \sum_{i=1}^p \sum_{\substack{T,S \in \Paths(\lambda) \\ T^{p} = S^{p} = \mu \\ S^{p-1} = \lambda}} c(S)^2 \cdot \abs{\bra{S} (i,p) \ket{T}}^2
                = \frac{c(\lambda, \mu)^2}{d_\mu} \sum_{i=1}^p \sum_{\substack{T,S \in \Paths(\lambda) \\ T^{p} = S^{p} = \mu \\ S^{p-1} = \lambda}} \abs{\bra{S} (i,p) \ket{T}}^2 \nonumber \\
                &= \frac{c(\lambda, \mu)^2}{d_\mu} \sum_{i=1}^p \sum_{\substack{S \in \Paths(\lambda) \\ S^{p} = \mu \\ S^{p-1} = \lambda}} \bra{S} (i,p) \of[\bigg]{\sum_{\substack{T \in \Paths(\lambda) \\ T^{p} = \mu}} \ketbra{T}{T}} (i,p)\ket{S} \nonumber \\
                &= \frac{c(\lambda, \mu)^2}{d_\mu} \sum_{i=1}^p \sum_{\substack{S \in \Paths(\lambda) \\ S^{p} = \mu \\ S^{p-1} = \lambda}} \bra{S} \of[\bigg]{\sum_{\substack{T \in \Paths(\lambda) \\ T^{p} = \mu}} \ketbra{T}{T}} (i,p)^2 \ket{S} = \frac{c(\lambda, \mu)^2}{d_\mu} \sum_{i=1}^p \sum_{\substack{T,S \in \Paths(\lambda) \\ T^{p} = S^{p} = \mu \\ S^{p-1} = \lambda}} \abs{\braket{S}{T}}^2 \nonumber \\
                &= \frac{c(\lambda, \mu)^2}{d_\mu} \cdot p \cdot d_\lambda =  p \cdot \frac{d_\lambda}{d_\mu} \cdot \frac{m_\mu}{m_\lambda} = d + \cont(\mu \backslash \lambda),
            \end{align}
            where we used \cref{lem:content_into_sym_u_dims,lem:ration_into_u_dims}. Therefore
            \begin{equation}
                \bra{T}J_{p+1}\ket{T} = d - \bra{T} \rho \ket{T} = - \cont(\mu \backslash \lambda) = - \cont(T^{p} \backslash T^{p+1}) = \wcont_{p+1}(T).
            \end{equation}
        \end{enumerate}
\end{proof}

\end{appendix}
\end{document}